\newtheorem{theorem}{Theorem}
\newtheorem{lemma}{Lemma}
\newtheorem{assumpB}{Assumption}
\newtheorem{assumpC}{Assumption}
\renewcommand{\theequation} {\arabic{section}.\arabic{equation}}
\newcommand{\ignore}[1]{}
\newcommand{\beq}{\begin{equation}}
\newcommand{\eeq}{\end{equation}}
\newcommand{\bea}{\begin{eqnarray}}
\newcommand{\eea}{\end{eqnarray}}
\newcommand{\bit}{\begin{itemize}}
\newcommand{\eit}{\end{itemize}}
\newcommand{\ben}{\begin{enumerate}}
\newcommand{\een}{\end{enumerate}}
\newcommand{\bpm}{\begin{pmatrix}}
\newcommand{\epm}{\end{pmatrix}}
\newcommand{\bbm}{\begin{bmatrix}}
\newcommand{\ebm}{\end{bmatrix}}
\titleformat{\section}{\Large\bfseries}{\thesection}{1em}{}
\font\myfont=cmr12 at 16pt
\date{}
\begin{document}

\title{\textbf{Online Change-point Detection for Matrix-valued Time Series
with Latent Two-way Factor Structure}}
\author{ \myfont Yong He\thanks{%
Shandong University, Email:heyong@sdu.edu.cn}, Xin-Bing Kong\thanks{%
Nanjing Audit University, Email:xinbingkong@126.com}, Lorenzo Trapani
\thanks{%
University of Nottingham, Email:Lorenzo.Trapani@nottingham.ac.uk}, Long Yu%
\thanks{%
National University of Singapore, Email:stayl@nus.edu.sg} }
\maketitle

\begin{center}
\begin{minipage}{125mm}
			
			
			{\it Abstract}: This paper proposes a novel methodology for the online detection of changepoints in the factor structure of large matrix time series. Our approach is based on the well-known fact that, in the presence of a changepoint, a factor model can be rewritten as a model with a larger number of common factors. In turn, this entails that, in the presence of a changepoint, the number of spiked eigenvalues in the second moment matrix of the data increases. Based on this, we propose two families of procedures - one based on the fluctuations of partial sums, and one based on extreme value theory - to monitor whether the first non-spiked eigenvalue diverges after a point in time in the monitoring horizon, thereby indicating the presence of a changepoint. Our procedure is based only on rates; at each point in time, we randomise the estimated eigenvalue, thus obtaining a normally distributed sequence which is \textit{i.i.d.} with mean zero under the null of no break, whereas it diverges to positive infinity in the presence of a changepoint. We base our monitoring procedures on such sequence. Extensive simulation studies and empirical analysis justify the theory.

			\bigskip
			
			{\it Key words and phrases}: Matrix factor model; Factor space; Online changepoint detection; Projection estimation; Randomisation.
			
			\noindent {\small{\it JEL classification}: C23; C33;  C38; C55. }			
		\end{minipage}
\end{center}



\section{Introduction}

Matrix-valued time series have been paid substantial attention in such
diverse areas as finance, biology and signal processing. In this work, we
focus on developing several on-line changepoint detection schemes for
matrix-valued time series with an (approximate) latent factor structure. To
fix ideas, let $\left\{ X_{t},t=1,2,...\right\} $ be a time series of $%
p_{1}\times p_{2}$ matrices. A matrix factor model with common factors%
\textbf{\ }can be written as%
\begin{equation}
\left( X_{t}\right) _{p_{1}\times p_{2}}=\left( R\right) _{p_{1}\times
k_{1}}\left( F_{t}\right) _{k_{1}\times k_{2}}\left( C^{\prime }\right)
_{k_{2}\times p_{2}}+\left( E_{t}\right) _{p_{1}\times p_{2}}, \ k_1, \
k_2>0,  \label{fm}
\end{equation}%
where the subscripts represent the row and column dimensions of each matrix.
In (\ref{fm}), $R$ is a $p_{1}\times k_{1}$ matrix of loadings explaining
the variations of $X_{t}$ across the rows, $C$ is a $p_{2}\times k_{2}$
matrix of loadings reflecting the differences across the columns of $X_{t}$
, $F_{t}$ is a matrix of common factors, and $E_{t}$ is the idiosyncratic
component, and we assume that factor numbers $k_1$ and $k_2$ are positive,
demonstrating a collaborative dependence between both the cross-row and the
cross-column dimensions. The matrix factor structure in (\ref{fm}) is also
known as a two-way factor structure, and we use the same convention as in
\cite{hkyt} that
\begin{equation}
\left( X_{t}\right) _{p_{1}\times p_{2}}=\left\{
\begin{array}{ll}
\left( R\right) _{p_{1}\times k_{1}}\left( F_{t}\right) _{k_{1}\times
p_{2}}+\left( E_{t}\right) _{p_{1}\times p_{2}}, & k_{1}>0,k_{2}=0, \\
\left( F_{t}\right) _{p_{1}\times k_{2}}(C^{\prime })_{k_{2}\times
p_{2}}+\left( E_{t}\right) _{p_{1}\times p_{2}}, & k_{2}>0,k_{1}=0, \\
\left( E_{t}\right) _{p_{1}\times p_{2}}, & k_{1}=k_{2}=0,%
\end{array}%
\right.  \label{fm1}
\end{equation}%
to indicate that there may be no column common factors (first case), no row
common factors (second case) or no factor structure at all (third case). The
model without either column or row common factors (corresponding to the
first and the second case) is known as one-way/vector factor structure in
the literature. A possible approach to analyse a two-way factor model like (%
\ref{fm}) is to first vectorize the data $X_{t}$, and then to employ the
techniques which have been developed for the classical vector factor models.
However, when data genuinely have a two-way factor structure as in (\ref{fm}%
), this approach is bound to lead to sub-optimal inference (see, for
example, \citealp{wang2019factor}, and \citealp{Yu2021Projected}):
\textquotedblleft \lbrack ...] analyzing large scale matrix-variate data is
still in its infancy, and as a result, scientists frequently analyse
matrix-variate observations by separately modelling each dimension or
`flattening' them into vectors. This destroys the intrinsic
multi-dimensional structure and misses important patterns in such large
scale data with complex structures, and thus leads to sub-optimal
results\textquotedblright , as quoted from \citet{fan2021}.

\subsection{Literature review}

Given the importance of developing an inferential theory specifically for a
model like (\ref{fm}), in recent years, factor models for matrix-valued time
series have been paid significant attention as an alternative to vectorising
the data $X_{t}$ and applying a standard factor model. Indeed, many
contributions study several aspects of (\ref{fm}). \citet{wang2019factor} -
in a similar spirit to \citet{LY12} - propose an estimator of the factor
loading matrices (and of numbers of the row and column factors) based on an
eigen-analysis of the auto-cross-covariance matrix, under the assumption
that the idiosyncratic term $E_{t}$ is white noise. Assuming cross-sectional
pervasiveness along the row and column dimensions, \citet{fan2021} propose
an estimation technique for (\ref{fm}) based on an eigen-analysis of a
weighted average of the mean and the column (row) covariance matrix of the
data; \citet{Yu2021Projected} improve the estimation efficiency of the
factor loading matrices with iterative projection algorithms (see also %
\citealp{hkyt}). Extensions of the basic set-up in (\ref{fm}) include the
constrained version by \citet{chen2019constrained}, the semiparametric
estimators by \citet{chen2020semiparametric}, and the estimators developed
in \citet{chen2021factor}; see also \citet{han2020rank}. Applications of (%
\ref{fm}) include the dynamic transport network in the context of
international trade flows by \citet{Chen2020Modeling} and the financial data
analysis by \citet{chen2020testing}.

Conversely, other aspects in the inference on (\ref{fm}) are yet to be
explored. To the best of our knowledge, there is no contribution on
determining whether the factor structure in (\ref{fm}) is constant over time
or not (one exception is the paper by \citealp{chenthreshold}, who study the
in-sample inference in the presence of a threshold structure). This is a
crucial question in many applications: among the many possible examples, in
marketing science (where matrix factor models arise e.g. in the analysis of
movie recommendation rating matrix, see the seminal paper by \citealp{koren}%
), finding evidence of changepoints would help to learn updated consumer
preferences; in asset pricing, the presence of a changepoint in series of
matrices of several financials recorded for several companies could
highlight the presence of e.g. a downside state of the market; and in
climate studies, detecting changes in an array of environmental variables
observed at a group of stations could prove very useful in monitoring
pollution; in the context of physiological time series, matrix factor models
can be naturally applied to the analysis of EEG data (where signals are
recorded for several patients and several electrodes), and the presence of a
changepoint is a useful marker to spot an epileptic manifestation (%
\citealp{lavielle}). In this paper, we are particularly interested in
\textit{online detection} of a changepoint, i.e. in the timely detection of
a possible break as new data come in. Finding a changepoint could reveal the
heterogeneity of the matrix generation mechanism in machine learning, help
to explain economic events in economics, learning updated consumer
preferences in marketing science, and so on. Further, the timely detection
of a changepoint has also profound implications on model selection, helping
the applied user to amend her/his forecasts as soon as the change is
detected.

Changepoint detection in a high-dimensional context has received lots of
attention recently, both in terms of offline, in-sample and online,
out-of-sample, detection - see e.g. the recent contributions by %
\citet{samworth1} and \citet{samworth3} (see also \citealp{samworth2}, for
offline detection), where a change in the mean in a high-dimensional vector
is tested for. As far as detection of changes in the second-order structure
of high dimensional vectors is concerned, in-sample changepoint detection is
well-developed in the context of vector factor models. Recently, the
literature has proposed a series of tests for the in-sample detection of
breaks in factor structures: examples include the works by %
\citet{breitung2011}, \citet{chen2014}, \citet{han2014}, %
\citet{corradi2014testing}, \citet{yamamoto2015}, \citet{cheng2016}, %
\citet{BCF16}, \citet{baltagi2015}, and \citet{baltagi2021} (see also %
\citealp{bkf}). Although, as mentioned above, all these contributions
consider in-sample detection of changepoints, the techniques developed
therein could in principle be adapted/extended to the on-line detection
problem. Furthermore, \citet{BT1} develop a procedure aimed at sequential
changepoint detection.

Hence, a possible approach to detect changes in a matrix factor model could
be based on vectorising the data $X_{t}$, and then applying the techniques
which have been developed for the changepoint analysis of vector factor
models. However, such an approach would be subject to the criticisms
mentioned above, since it ignores the spatial relationship of an array - for
example, in the above example of movie recommendation rating matrices,
customers with certain common characteristics tend to favour some featured
movies, and this would be missed by a vector factor model.

\subsection{ Contributions and structure of the paper}

In this paper, we develop a procedure for the online detection of
changepoints in a matrix factor model; in particular, we study the detection
of changes in the row and column factor spaces, spanned by the columns of $R$
and $C$, due e.g. to the switching, enlargement or contraction of the
spaces. As is typical in the literature, we use two well-known facts. First,
in a model with a common factor structure, the second moment matrix of the
data has a spiked spectrum: in the presence of, say, $k$ common factors, the
first $k$ eigenvalues diverge with the sample size, whereas the others are
bounded. Second, as noted in \citet{corradi2014testing}, in the presence of
a break, the number of common factors increases across the changepoint: a
factor model with a changepoint can be written equivalently as a factor
model with no changepoint and an enlarged factor space (the second moment
matrix of the data, consequently, having more spiked eigenvalues).

On account of the considerations above, similarly to \citet{BT1} we propose
an on-line monitoring scheme based on sequentially monitoring the
eigenvalues of the second moment matrix of the data after a training period
in which no break occurs. In particular if, during the training period, $k$
common factors have been found, we monitor the $\left( k+1\right) $-th
eigenvalue: if no break occurs, this will be bounded over the monitoring
horizon, otherwise it will have a spike at some point in time. We are not
aware of any results on the second-order asymptotics of the estimated
eigenvalues, especially when these are non-spiked: indeed, in the context of
a factor model, \citet{trapani2018randomized} shows that the estimates of
the non-spiked eigenvalues may not even be consistent (see also %
\citealp{wang16}). Hence, we do not attempt to derive the asymptotic
distribution of the estimated eigenvalue used for the monitoring scheme.
Instead, we rely only on rates, and, at each point in time in the monitoring
horizon, we construct a transformation of the estimated $\left( k+1\right) $%
-th eigenvalue which - like the relevant eigenvalue - drifts to zero under
no changepoint, whereas it diverges to positive infinity under the
alternative of a changepoint. We then perturb these transformations by
adding a randomly generated \textit{i.i.d.} sequence with a standard normal
distribution. Under the null, such sequence remains an \textit{i.i.d.}
sequence with a standard normal distribution, whereas, under the alternative
of a changepoint, it has a drifting mean. Based on this, it is possible to
construct several monitoring schemes, based on the fluctuations of (weighted
or unweighted) partial sums (as in \citealp{Chu1996Monitoring}; %
\citealp{lajos04}; and \citealp{lajos07}), or on using the extreme value
distribution. In our analysis, it is important to have an estimate of the $%
\left( k+1\right) $-th eigenvalue whose rate is as small as possible under
no changepoint. Thus, when constructing the second moment matrix of the
data, we consider the use of \textit{projected} matrices %
\citep{Yu2021Projected}; we rely on a novel finding by \citet{hkyt}, where
almost sure rates of the estimated eigenvalues are derived and shown to be
wider than when using other approaches.

In our paper, we make at least three contributions. First, to the best of
our knowledge, this is the first contribution which studies online
changepoint detection in the context of matrix factor models, using
techniques specifically designed for this case. Second, we develop a
completely new randomisation scheme which is more natural than the one
developed in \citet{BT1}, where two randomisations are required involving
the choice of numerous tuning parameters. Third, we develop a substantial
methodological innovation to the online changepoint detection problem in
general. Within this context, existing procedures often rely on the CUSUM
process (or variations thereof, as e.g. the ones proposed by %
\citealp{kirch2017}; or \citealp{dettegosmann}), and detect a changepoint
when the fluctuations of the CUSUM exceed a boundary function. The choice of
such boundary function is however difficult and not unique, being, as %
\citet{Chu1996Monitoring} put it, \textquotedblleft often dictated by
mathematical convenience rather than optimality\textquotedblright\ (p.
1052). Conversely, in our case we base our boundary functions on the
\textquotedblleft natural\textquotedblright\ fluctuations of partial sums.
Thus, the theory developed in this paper can be extended to wide (and
diverse) variety of problems, well beyond the context of matrix factor
models.

\bigskip

The remainder of the paper is organized as follows. Section \ref{ap}
presents the setup assumptions and preliminaries. Section \ref{mt} gives the
methodologies and theories. Section \ref{fa} considered further
alternatives. Simulations are conducted in Section \ref{simulation} and
empirical studies are carried in Section \ref{empirical}. Section \ref%
{conclusion} concludes the paper. Assumptions and derivations, as well as further simulations, are relegated to the Supplement.

To end this section, we introduce some notations adopted in the paper. Given
an $n\times m$ matrix $A$, we denote its transpose as $A^{\prime }$ and its
element in position $\left( i,j\right) $ as $\left( A\right) _{ij}$, with $%
1\leq i\leq n$ and $1\leq j\leq m$; and we denote its $j$-th column as $%
A_{\cdot j}$ and its $i$-th row as $A_{i\cdot }$. When $n=m$, we denote the
eigenvalues of $A$ as $\lambda _{i}\left( A\right) $, sorted in decreasing
order. Throughout the paper, we often use the following sequence
\begin{equation}
l_{p_{1},p_{2},m}=\left( \frac{1}{p_{2}}+\frac{1}{m}+\frac{p_{1}}{\sqrt{%
mp_{2}}}\right) \left( \ln ^{2}p_{1}\ln p_{2}\ln m\right) ^{1+\epsilon }.
\label{lsequence}
\end{equation}%
Positive finite constants are denoted as $c_{0}$, $c_{1}$, ..., and their
values may change from line to line. Throughout the paper, we use the
short-hand notation \textquotedblleft a.s.\textquotedblright\ for
\textquotedblleft almost sure(ly)\textquotedblright . Given two sequences $%
a_{p_{1},p_{2},T}$ and $b_{p_{1},p_{2},T}$, we say that $%
a_{p_{1},p_{2},T}=o_{a.s.}\left( b_{p_{1},p_{2},T}\right) $ if, as $\min
\left\{ p_{1},p_{2},T\right\} \rightarrow \infty $, it holds that $%
a_{p_{1},p_{2},T}b_{p_{1},p_{2},T}^{-1}\rightarrow 0$ a.s.; we say that $%
a_{p_{1},p_{2},T}=O_{a.s.}\left( b_{p_{1},p_{2},T}\right) $ to denote that
as $\min \left\{ p_{1},p_{2},T\right\} \rightarrow \infty $, it holds that $%
a_{p_{1},p_{2},T}b_{p_{1},p_{2},T}^{-1}\rightarrow c_{0}<\infty $ a.s.; and
we use the notation $a_{p_{1},p_{2},T}=\Omega _{a.s.}\left(
b_{p_{1},p_{2},T}\right) $ to indicate that as $\min \left\{
p_{1},p_{2},T\right\} \rightarrow \infty $, it holds that $%
a_{p_{1},p_{2},T}b_{p_{1},p_{2},T}^{-1}\rightarrow c_{0}>0$ a.s.

\section{Model setup and spectra preliminaries\label{ap}}

In this section, we collect some preliminary results on the spectra of the
row (column) space of the second moment matrix of the data. We consider the
two-way factor structure model or matrix factor model:
\[
\left( X_{t}\right) _{p_{1}\times p_{2}}=\left( R\right) _{p_{1}\times
k_{1}}\left( F_{t}\right) _{k_{1}\times k_{2}}\left( C^{\prime }\right)
_{k_{2}\times p_{2}}+\left( E_{t}\right) _{p_{1}\times p_{2}}, \ k_1, \
k_2\geq0.
\]
In the presentation, we focus on detecting a change point in the \textit{row}
factor structure of $X_{t}$ - thus, the focus is on $R$, although all the
theory developed hereafter can be readily adapted to check for changes in $%
C^{\prime }$.

In order to fully make use of the two-way interactive factor structure, we
propose studying the spectrum of a projected column (row) covariance matrix,
as suggested by \citet{Yu2021Projected}. Heuristically, if $C$ is known and
satisfies the orthogonality condition $C^{\prime }C/p_{2}=I_{k_{2}}$, the
data matrix can be projected into a lower dimensional space by setting $%
Y_{t}=X_{t}C/p_{2}$. In view of this, we define
\begin{equation}
\widehat{M}_{1}=\frac{1}{m}\sum_{t=1}^{m}\widetilde{Y}_{t}\widetilde{Y}%
_{t}^{\prime },  \label{m1-hat}
\end{equation}%
where $\widetilde{Y}_{t}={{p_{2}^{-1}}}X_{t}\widetilde{C}$ and $\widetilde{C}
$ is an initial estimator of $C$. As suggested by \citet{Yu2021Projected},
the initial estimator can be set as $\widetilde{C}=\sqrt{p_{2}}Q$, where the
columns of $Q$ are the leading $k_{2}$ eigenvectors of $M_{r}$, where $M_{r}$
is the column \textquotedblleft flattened\textquotedblright\ sample
covariance matrices, i.e.,

\begin{eqnarray*}
M_{r}:= &&\frac{1}{Tp_{1}}\sum_{t=1}^{T}X_{t}^{\prime }X_{t}=\frac{1}{Tp_{1}}%
\sum_{t=1}^{T}\sum_{j=1}^{p_{1}}X_{j\cdot ,t}X_{j\cdot ,t}^{\prime }.
\end{eqnarray*}

Furthermore, if $k_{2} $ is not known, we can select the leading $\widetilde{%
k}$ eigenvectors of $M_{r}$ with $\widetilde{k}$\ chosen such that $%
\widetilde{k}\geq k_{2}$ (we refer to Section \ref{simulation} for details
on how we choose $\widetilde{k}$).

Let $\widehat{\lambda }_{j}$ denote the $j$-th largest eigenvalue of $%
\widehat{M}_{1}$, and recall that there are $k_{1}\geq 0$ row factors. The
following result, shown in \citet{hkyt}, measures the eigen-gap of $\widehat{%
M}_{1}$ (we refer to the supplement for Assumptions \ref{factors}-\ref{depFE}%
).

\begin{lemma}
\label{theorem:tildeM1} We assume that Assumptions \ref{factors}-\ref{depFE}
are satisfied. Then it holds that
\begin{equation}
\widehat{\lambda }_{j}=\Omega _{a.s.}\left( p_{1}\right) ,
\label{tildelarge}
\end{equation}%
for all $j\leq k_{1}$. Also, there exists a finite constant $c_{1}$ such
that
\begin{equation}
\widehat{\lambda }_{j}=o_{a.s.}\left( l_{p_{1},p_{2},m}\right) ,
\label{tildesmall}
\end{equation}%
for all $j>k_{1}$ and all $\epsilon >0$.
\end{lemma}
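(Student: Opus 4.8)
The plan is to write $\widehat{M}_1$ as the sum of a low-rank ``signal'' matrix, whose nonzero eigenvalues are of exact order $p_1$, and a ``perturbation'' matrix whose spectral norm is $o_{a.s.}(l_{p_1,p_2,m})$; the two rate statements \eqref{tildelarge}--\eqref{tildesmall} then follow immediately from Weyl's inequality, since $l_{p_1,p_2,m}=o(p_1)$ as $\min\{p_1,p_2,m\}\to\infty$. Concretely, I would substitute $X_t=RF_tC'+E_t$ into $\widetilde{Y}_t=p_2^{-1}X_t\widetilde{C}$ to obtain $\widetilde{Y}_t=RF_tH+p_2^{-1}E_t\widetilde{C}$, where $H=p_2^{-1}C'\widetilde{C}$, and expand
\[
\widehat{M}_1=\underbrace{R\Big(\tfrac{1}{m}\sum_{t=1}^{m}F_tHH'F_t'\Big)R'}_{=:S}+\underbrace{\Delta_{\mathrm{cr}}+\Delta_{\mathrm{cr}}'+\Delta_{\mathrm{id}}}_{=:\Delta},
\]
with $\Delta_{\mathrm{cr}}=p_2^{-1}R(\tfrac{1}{m}\sum_tF_tH\widetilde{C}'E_t')$ and $\Delta_{\mathrm{id}}=p_2^{-2}\tfrac{1}{m}\sum_tE_t\widetilde{C}\widetilde{C}'E_t'$. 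Since $\mathrm{rank}(S)\le k_1$, Weyl's inequality gives $|\widehat{\lambda}_j-\lambda_j(S)|\le\|\Delta\|$ for every $j$, so it suffices to bound $\lambda_{k_1}(S)$ from below and $\|\Delta\|$ from above.

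For the lower bound on the signal, I would invoke the pervasiveness of $R$, i.e. $R'R/p_1\to\Sigma_R\succ0$, together with a non-degeneracy argument showing that $B:=\tfrac{1}{m}\sum_tF_tHH'F_t'$ converges a.s. to a positive definite $k_1\times k_1$ matrix. The latter requires $H=p_2^{-1}C'\widetilde{C}$ to have full row rank $k_2$ asymptotically — which holds once $\widetilde{C}$ is shown to span (a rotation of) the column factor space — combined with a law of large numbers for $\tfrac{1}{m}\sum_tF_tF_t'$ from the factor-moment conditions (Assumptions \ref{factors}--\ref{depFE}). Since the nonzero eigenvalues of $RBR'$ coincide with those of $B^{1/2}(R'R)B^{1/2}\asymp p_1B$, they are of order $p_1$, giving $\lambda_j(S)=\Omega_{a.s.}(p_1)$ for $j\le k_1$ and hence \eqref{tildelarge}.

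The core of the argument is the bound $\|\Delta\|=o_{a.s.}(l_{p_1,p_2,m})$. Writing $\widetilde{C}=\sqrt{p_2}Q$ with $Q'Q=I_{\widetilde{k}}$, the idiosyncratic term becomes $\Delta_{\mathrm{id}}=p_2^{-1}\tfrac{1}{m}\sum_tE_tQQ'E_t'$, a sum of $\widetilde{k}$ projected sample second-moment matrices, for which a quadratic-form/random-matrix bound yields $\|\Delta_{\mathrm{id}}\|=O_{a.s.}(p_2^{-1}(1+p_1/m))$, which is itself $o_{a.s.}(1/p_2+p_1/\sqrt{mp_2})$. For the cross term, $\Delta_{\mathrm{cr}}=p_2^{-1/2}R(\tfrac{1}{m}\sum_tF_tHQ'E_t')$; using $\|R\|=O(\sqrt{p_1})$ and the fact that the entries of $\tfrac{1}{m}\sum_tF_tHQ'E_t'$ are centered bilinear sums fluctuating at rate $m^{-1/2}$, one obtains $\|\Delta_{\mathrm{cr}}\|=O_{a.s.}(p_1/\sqrt{mp_2})$. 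Collecting the contributions reproduces the three leading rates $1/p_2+1/m+p_1/\sqrt{mp_2}$ (the $1/m$ piece arising from the time-averaging fluctuations of the idiosyncratic and cross terms), which gives \eqref{tildesmall}.

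I expect the main obstacle to be two-fold. First, the almost-sure (rather than in-probability) nature of the rates: all stochastic terms are quadratic or bilinear forms in the weakly dependent idiosyncratic array $E_t$, so sharp control needs exponential concentration inequalities (Bernstein/Hanson--Wright type, adapted to the dependence structure of Assumption \ref{depFE}) followed by a Borel--Cantelli argument, with the union bounds over the $p_1\times p_1$, $p_2$ and $m$ indices producing the factor $(\ln^2p_1\ln p_2\ln m)^{1+\epsilon}$ and the exponent $\epsilon>0$ ensuring summability of the tail-probability series. Second, and more delicate, is the estimation error in $\widetilde{C}$: because $Q$ is extracted from $M_r$ it is \emph{not} independent of $E_t$, so the ``fixed-$Q$'' heuristics used above must be justified by first establishing an a.s. convergence rate of $\widetilde{C}$ to a rotation of $C$ (through an eigen-analysis of the column-flattened matrix $M_r$) and then verifying that the induced approximation error does not inflate any of the three rate components. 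Disentangling this data-dependence of $\widetilde{C}$ from the quadratic forms in $E_t$ while preserving almost-sure rates is, I anticipate, the hardest part of the proof.
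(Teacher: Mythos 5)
You should first note that the paper does not actually prove this lemma: it is imported verbatim from \citet{hkyt} (``The following result, shown in \citet{hkyt}, measures the eigen-gap of $\widehat{M}_{1}$''), and neither the main text nor the appendix contains an argument for it, so there is no in-paper proof to compare against. That said, your outline follows exactly the architecture used in the projected-estimation literature on which the paper relies (\citealp{Yu2021Projected}; \citealp{hkyt}): write $\widehat{M}_{1}$ as the rank-$k_{1}$ signal $R\left(\frac{1}{m}\sum_{t}F_{t}HH^{\prime}F_{t}^{\prime}\right)R^{\prime}$ plus a perturbation $\Delta$, apply Weyl's inequality, bound the $k_{1}$ signal eigenvalues below at order $p_{1}$ via pervasiveness of $R$, and bound $\Vert\Delta\Vert$ by the three components of $l_{p_{1},p_{2},m}$. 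Your identification of the two genuinely hard points --- obtaining almost-sure rather than in-probability rates, with the $(\ln^{2}p_{1}\ln p_{2}\ln m)^{1+\epsilon}$ factor arising from exponential concentration plus Borel--Cantelli, and disentangling the data-dependence of $\widetilde{C}$ from the quadratic forms in $E_{t}$ --- is accurate; these are precisely the content of the cited result.

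Two caveats. First, what you have written is a proof plan rather than a proof: the concentration inequalities for the bilinear and quadratic forms under the weak-dependence conditions of Assumptions \ref{idiosyncratic}--\ref{depFE}, and the almost-sure rate for $\widetilde{C}$, are asserted rather than derived, and they carry essentially all of the difficulty. Second, for the lower bound \eqref{tildelarge} you need $\frac{1}{m}\sum_{t}F_{t}HH^{\prime}F_{t}^{\prime}$ to converge a.s.\ to a positive definite limit; Assumption \ref{factors}\textit{(ii)} as literally stated only controls $\frac{1}{m}\sum_{t}F_{t}F_{t}^{\prime}$ and $\frac{1}{m}\sum_{t}F_{t}^{\prime}F_{t}$, not $\frac{1}{m}\sum_{t}F_{t}AF_{t}^{\prime}$ for a general positive semi-definite $A$ --- the paper itself implicitly concedes this by strengthening the assumption in Section \ref{further} when it needs exactly this property. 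So you must either first establish that $HH^{\prime}$ converges to (a rotation of) the identity, which requires the a.s.\ consistency of $\widetilde{C}$ that you have deferred, or invoke the strengthened moment condition. Neither point invalidates your route; both mean the proposal is an accurate map of the proof in \citet{hkyt} rather than a self-contained substitute for it.
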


The eigen-gap in the spectrum of $\widehat M_1$ is the building block to
construct a testing procedure to decide whether a changpoint exist in the
monitoring horizon.

\section{Online changepoint detection\label{mt}}

We assume we have observed data over a training sample $1\leq t\leq m$, and
that no breaks were found.

\begin{assumpC}
\label{monitoring}\textbf{\ }The column space of $R$ does not change during $%
1\leq t\leq m$.
\end{assumpC}

Assumption \ref{monitoring} can be verified, and $k_{1}$ can be computed,
with the methodologies described in \citet{Yu2021Projected} and \citet{hkyt}%
. Note that we entertain the possibility that there is no factor structure
at all in the data during the training sample, i.e., $k_{1}=0$.

\bigskip

We monitor for changepoints in the row\ factor structure of $X_{t}$ across
an interval $m+1\leq t\leq m+T_{m}$, with - tidying up the notation - $%
m+T_{m}=T$. The monitoring schemes are based on the eigenvalue%
\[
\widehat{\lambda }_{k_{1}+1,\tau }=\lambda _{k_{1}+1}\left( \frac{1}{m}%
\sum_{t=\tau +1}^{m+\tau }\widetilde{Y}_{t}\widetilde{Y}_{t}^{\top }\right) ,
\]%
where $\widetilde{Y}_{t}=p_{2}^{-1}X_{t}\widetilde{C}${.}

In this section, we propose a method for the sequential detection of two
types of breaks; we extend our set-up, discussing further alternatives, in
Section \ref{fa}. A first source of change could be a scenario in which the
number of common factors is constant across regimes, but the row factor
space spanned by the columns of $R$ switches from one to another after a
point in time $t^{\ast }$, i.e.,
\begin{equation}
X_{t}=\left\{
\begin{array}{ll}
RF_{1,t}C^{\prime }+E_{t} & \text{for }1\leq t\leq m+t^{\ast } \\
\widetilde{R}F_{2,t}C^{\prime }+E_{t} & \text{for }t>m+t^{\ast }%
\end{array}%
\right. ,  \label{b1}
\end{equation}%
where $R=\left[ R_{0}|R_{1}\right] $, $\widetilde{R}=\left[ R_{0}|R_{2}%
\right] $, $R_{0}$ is a $p_{1}\times \left( k_{1}-c_{1}\right) $ matrix of
loadings which do not undergo a change, and $R_{1}$ and $R_{2}$ are $%
p_{1}\times c_{1}$ matrices of loadings which differ before and after the
changepoint $t^{\ast }$. We would like to point out that, in (\ref{b1}) and
henceforth, we assume that $C$ does not change merely for simplicity. Our
arguments, however, keep holding even in the presence of time-varying $C$
(see the discussion in Section \ref{fa}). In such a case, by Lemma \ref%
{theorem:tildeM1 copy(1)}\textit{(i)} in Appendix, it holds that
\begin{equation}
\widehat{\lambda }_{k_{1}+1,\tau }\left\{
\begin{array}{ll}
\leq c_{0}l_{p_{1},p_{2},m} & \text{for }\tau \leq t^{\ast } \\
\geq c_{1}\min \left( \frac{\tau -t^{\ast }}{m},\frac{m+t^{\ast }-\tau }{m}%
\right) p_{1} & \text{for }t^{\ast }<\tau <m+t^{\ast } \\
\leq c_{0}l_{p_{1},p_{2},m} & \text{for }\tau \geq m+t^{\ast }%
\end{array}%
\right. .  \label{b11}
\end{equation}%
As a second possible alternative, we consider the scenario whereby a set of
common factors appear after the point in time $t^{\ast }$, i.e., the column
space of $R$ enlarges:
\begin{equation}
X_{t}=\left\{
\begin{array}{ll}
RF_{a,t}C^{\prime }+E_{t} & \text{for }1\leq t\leq m+t^{\ast } \\
\widetilde{R}F_{t}C^{\prime }+E_{t} & \text{for }t>m+t^{\ast }%
\end{array}%
\right. ,  \label{b2}
\end{equation}%
where $\widetilde{R}=\left[ R|R_{3}\right] $, $R_{3}$ is a $p_{1}\times
c_{3} $ matrix, $F_{t}^{\prime }=\left[ F_{a,t}^{\prime }|F_{b,t}^{\prime }%
\right] $, and $F_{b,t}$ is a $c_{3}\times k_{2}$ matrix of new common
factors. Lemma \ref{theorem:tildeM1 copy(1)}\textit{(ii)} stipulates that,
in such a case%
\begin{equation}
\widehat{\lambda }_{k_{1}+1,\tau }\left\{
\begin{array}{ll}
\leq c_{0}l_{p_{1},p_{2},m} & \text{for }\tau \leq t^{\ast } \\
\geq c_{1}\frac{\tau -t^{\ast }}{m}p_{1} & \text{for }t^{\ast }<\tau
<m+t^{\ast } \\
\geq c_{0}p_{1} & \text{for }\tau \geq m+t^{\ast }%
\end{array}%
\right. .  \label{b21}
\end{equation}%
Based on (\ref{b11}) and (\ref{b21}), we propose a procedure for the null
hypothesis of no changepoint over the monitoring horizon in (\ref{fm1}).
Under the null, Lemma \ref{theorem:tildeM1} implies that
\begin{equation}
\widehat{\lambda }_{k_{1}+1,\tau }\leq c_{0}l_{p_{1},p_{2},m},
\label{mon-eig-null}
\end{equation}%
for all $1\leq \tau \leq T_{m}$. Under the alternative, at some point in
time $\tau >t^{\ast }$, it will hold that, for some positive $c_{0}$
\begin{equation}
\widehat{\lambda }_{k_{1}+1,\tau }\geq c_{0}p_{1}.  \label{mon-eig-alt}
\end{equation}%
Define now $0\leq \delta <1$, such that%
\begin{equation}
p_{1}^{-\delta }\widehat{\lambda }_{k_{1}+1,\tau }\left\{
\begin{array}{l}
o_{a.s.}\left( 1\right) \\
\geq c_{0}p_{1}^{1-\delta }\rightarrow \infty%
\end{array}%
\right.
\begin{array}{l}
\text{under no break} \\
\text{if there is a break}%
\end{array}%
.  \label{dichot}
\end{equation}%
The effect of this transformation is that $p_{1}^{-\delta }\widehat{\lambda }%
_{k_{1}+1,\tau }$ drifts to zero or infinity according as the null or the
alternative is true: any $\delta $ which satisfies this is, in principle,
admissible, and in Section \ref{fa} we discuss the selection of $\delta $ in
greater detail. According to (\ref{mon-eig-null}) and (\ref{mon-eig-alt}), $%
p_{1}^{-\delta }\widehat{\lambda }_{k_{1}+1,\tau }=o_{a.s.}\left( 1\right) $
as long as
\begin{equation}
\left\{
\begin{array}{ll}
\delta =\varepsilon & \ \text{if}\ \beta \leq 1/2 \\
\delta =1-1/(2\beta )+\varepsilon & \ \text{if}\ \beta >1/2%
\end{array}%
\right. ,  \label{equ:deltabeta}
\end{equation}%
where $\beta ={\ln p_{1}}/{\ln }\left( {p_{2}m}\right) $, and $\varepsilon
>0 $ is an arbitrarily small, user-defined number.

The dichotomous behaviour in (\ref{dichot}) is the building block of our
monitoring schemes, and it can be further enhanced by considering a
continuous transformation $g\left( \cdot \right) $ such that $%
\lim_{x\rightarrow 0}g\left( x\right) =0$ and $\lim_{x\rightarrow \infty
}g\left( x\right) =\infty $. Defining%
\begin{equation}
\psi _{\tau }=g\left( \frac{p_{1}^{-\delta }\widehat{\lambda }_{k_{1}+1,\tau
}}{p_{1}^{-1}\sum_{j=1}^{p_{1}}\widehat{\lambda }_{j,\tau }}\right) ,
\label{psi}
\end{equation}%
where $\widehat{\lambda }_{k_{1}+1,\tau }$ is normalised by the trace of $%
\left( \frac{1}{m}\sum_{t=\tau +1}^{m+\tau }\widetilde{Y}_{t}\widetilde{Y}%
_{t}^{\top }\right)$; again, other rescaling schemes are possible. By (\ref%
{dichot}), it follows that $\psi _{\tau }=o_{a.s.}\left( 1\right) $ for all $%
1\leq \tau \leq T_{m}$ under the null of no changepoint; conversely, in the
presence of a changepoint in $t^{\ast }$ it holds that $\psi _{\tau }\overset%
{a.s.}{\rightarrow }\infty $ for some $\tau >t^{\ast }$. Finally, after
computing $\psi _{\tau }$ at each $1\leq \tau \leq T_{m}$, we define the new
sequence $\left\{ y_{\tau },1\leq \tau \leq T_{m}\right\} $ as%
\begin{equation}
y_{\tau }=z_{\tau }+\psi _{\tau },  \label{yt}
\end{equation}%
where $z_{\tau }\overset{\mathit{i.i.d.}}{\sim} N\left( 0,1\right) $ for $%
1\leq \tau \leq T_{m}$. We allow the monitoring $T_{m}$ to go on for a long
time.

\begin{assumpC}
\label{horizon}It holds that $T_{m}=T_{m}\left( m\right) $ with $%
\lim_{m\rightarrow \infty }T_{m}=\infty $, and $T_{m}=\Omega \left(
m^{\varsigma }\right) $ with $\varsigma \geq 1$.
\end{assumpC}


\subsection{Monitoring schemes\label{online}}

We propose two (families of) monitoring schemes: one, more traditional,
based on the fluctuations of partial sums (Section \ref{ms1}), and one based
on the worst case scenario across the monitoring horizon (Section \ref{ms2}%
). In essence, there are two ways we could do monitoring: either considering
the partial sum process of $y_{\tau }$, or directly using the sequence $%
\left\{ y_{\tau },1\leq \tau \leq T_{m}\right\} $, noting that, under the
alternative, this will have a drift due to the presence of $\psi _{\tau }$.

In both cases, we will need the following restriction on the functional form
of $g\left( \cdot \right) $:

\begin{equation}
\lim_{\min \left( p_{1},T_{m}\right) \rightarrow \infty }T_{m}g\left(
p_{1}^{-\delta }l_{p_{1},p_{2},m}\right) =0.  \label{restriction}
\end{equation}%
This restriction, in essence, controls for a non-centrality parameter which
arises in the asymptotic of the test statistic under $H_{0}$. We note that (%
\ref{restriction}) does not contain any nuisance parameters: the sample
sizes $m$ and $p_{1}$, the monitoring horizon $T_{m}$, and $\delta $ and $%
g\left( \cdot \right) $ are all either given, or can be chosen so that (\ref%
{restriction}) holds.

Henceforth, we let $P^{\ast }$ denote the probability conditional on $%
\left\{ X_{t},1\leq t\leq T\right\} $; we use \textquotedblleft $\overset{%
P^{\ast }}{\rightarrow }$\textquotedblright , and \textquotedblleft $\overset%
{D^{\ast }}{\rightarrow }$\textquotedblright\ to denote convergence in
probability and in distribution according to $P^{\ast }$, respectively.

\subsubsection{Monitoring schemes based on partial sum processes\label{ms1}}

We begin with a family of \textquotedblleft traditional\textquotedblright\
monitoring schemes, based on partial sum processes. In particular, recall
the definition of $y_{\tau }$ in (\ref{yt}), and consider the partial sums
process%
\begin{equation}
S_{\tau }=\sum_{j=1}^{\tau }y_{j}.  \label{partial-sums}
\end{equation}%
Using the fluctuations of partial sums to detect changepoint is a standard
approach in this literature (see e.g. \citealp{Chu1996Monitoring}; and %
\citealp{lajos04}; \citealp{lajos07}). However, in our case we do not need
to derive the second order asymptotics of the eigenvalues of $\widehat{M}%
_{1} $, and we can rely on the \textquotedblleft natural\textquotedblright\
growth rate of the partial sum process defined in (\ref{partial-sums}).

We will use:

\begin{description}
\item[\textit{(i)}] the weighted functionals%
\begin{equation}
T_{m}^{\eta -1/2}\max_{1\leq \tau \leq T_{m}}\frac{\left\vert S_{\tau
}\right\vert }{\tau ^{\eta }},  \label{functionals}
\end{equation}%
for $0\leq \eta <1/2$;

\item[\textit{(ii)}] the standardised partial sums
\begin{equation}
\max_{1\leq \tau \leq T_{m}}\frac{\left\vert S_{\tau }\right\vert }{\tau
^{1/2}};  \label{functional-de}
\end{equation}

\item[\textit{(iii)}] the R\'{e}nyi statistics (see \citealp{horvathmiller})%
\begin{equation}
r_{T_{m}}^{\eta -1/2}\max_{r_{T_{m}}\leq \tau \leq T_{m}}\frac{\left\vert
S_{\tau }\right\vert }{\tau ^{\eta }},  \label{functionals-renyi}
\end{equation}%
for $\eta >1/2$, where $r_{T_{m}}$ is a sequence such that, as $%
T_{m}\rightarrow \infty $,
\[
r_{T_{m}}\rightarrow \infty \text{ \ and \ }\frac{r_{T_{m}}}{T_{m}}%
\rightarrow 0.
\]%
Define%
\[
\alpha _{T_{m}}=\sqrt{2\ln \ln T_{m}},\ \ \beta _{T_{m}}=2\ln \ln T_{m}+%
\frac{1}{2}\ln \ln \ln T_{m}-\frac{1}{2}\ln \pi .
\]
\end{description}

\begin{theorem}
\label{lemma:function} We assume that Assumptions \ref{factors}-\ref{depFE}
and \ref{monitoring}-\ref{horizon} hold, and that $g\left( \cdot \right) $
satisfies (\ref{restriction}). Then, under the null, as $\min \left(
m,p_{1},p_{2}\right) \rightarrow \infty $, it holds that

\begin{description}
\item[(i)]
\begin{equation}
T_{m}^{\eta -1/2}\max_{1\leq \tau \leq T_{m}}\frac{\left\vert S_{\tau
}\right\vert }{\tau ^{\eta }}\overset{D^{\ast }}{\rightarrow }\sup_{0\leq
u\leq 1}\frac{\left\vert W\left( u\right) \right\vert }{u^{\eta }},
\label{weighted}
\end{equation}%
for all $0\leq \eta <1/2$;

\item[(ii)]
\begin{equation}
P^{\ast }\left( \alpha _{T_{m}}\max_{1\leq \tau \leq T_{m}}\frac{\left\vert
S_{\tau }\right\vert }{\tau ^{1/2}}\leq v+\beta _{T_{m}}\right) =\exp \left(
-\exp \left( -v\right) \right) ,  \label{darling-erdos}
\end{equation}%
for all $-\infty <v<\infty $; and

\item[(iii)]
\begin{equation}
r_{T_{m}}^{\eta -1/2}\max_{r_{T_{m}}\leq \tau \leq T_{m}}\frac{\left\vert
S_{\tau }\right\vert }{\tau ^{\eta }}\overset{D^{\ast }}{\rightarrow }%
\sup_{1\leq u<\infty }\frac{\left\vert W\left( u\right) \right\vert }{%
u^{\eta }},  \label{renyi}
\end{equation}%
for all $\eta >1/2$. All the results hold for almost all realisations of $%
\{X_{t},1\leq t\leq T\}$.
\end{description}
\end{theorem}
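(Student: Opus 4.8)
The key observation I would build on is that, once we condition on the data $\{X_t,1\le t\le T\}$ (i.e. work under $P^{\ast}$), the rescaled eigenvalues $\psi_\tau$ in (\ref{psi}) are fixed constants, so the sequence $y_\tau=z_\tau+\psi_\tau$ in (\ref{yt}) is simply a collection of independent Gaussian variates with unit variance and deterministic means $\psi_\tau$. Consequently the partial sum process (\ref{partial-sums}) splits as $S_\tau=W_\tau+D_\tau$, where $W_\tau=\sum_{j=1}^{\tau}z_j$ is a \emph{data-free} Gaussian random walk and $D_\tau=\sum_{j=1}^{\tau}\psi_j$ is a purely deterministic drift. The plan is therefore to (a) read off the three stated limits from classical results applied to the standardised walk $W_\tau$, and (b) show that the drift $D_\tau$ is asymptotically negligible in each functional, for almost all realisations of the data. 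The conditioning under $P^{\ast}$ is what makes this clean: it strips all data dependence out of the stochastic component, leaving i.i.d. standard normals as the only probabilistic input.

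For step (b), I would first invoke the uniform-in-$\tau$ eigenvalue bound (\ref{mon-eig-null}) --- the monitoring-window version of Lemma \ref{theorem:tildeM1} established in the Appendix --- to get $\widehat{\lambda}_{k_1+1,\tau}\le c_0 l_{p_1,p_2,m}$ simultaneously over $1\le\tau\le T_m$, a.s. Since the trace normalisation in the denominator of (\ref{psi}) is bounded away from zero a.s. and $g$ is monotone with $\lim_{x\to0}g(x)=0$, this gives $0\le\psi_\tau\le g\!\left(c_0 p_1^{-\delta} l_{p_1,p_2,m}\right)$ uniformly in $\tau$, so that $\max_{1\le\tau\le T_m}D_\tau\le D_{T_m}\le T_m\, g\!\left(c_0 p_1^{-\delta} l_{p_1,p_2,m}\right)\to 0$ by the restriction (\ref{restriction}). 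Using that $D_\tau$ is nondecreasing, feeding this bound into each weighted maximum shows the drift contribution vanishes; in the Darling--Erd\H{o}s case (ii) one must additionally absorb the slowly varying factor $\alpha_{T_m}=\sqrt{2\ln\ln T_m}$, and it is precisely this non-centrality that (\ref{restriction}) is engineered to annihilate.

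For step (a), parts (i) and (iii) follow from a weighted invariance principle for i.i.d. partial sums: writing $u=\tau/T_m$ (respectively $u=\tau/r_{T_m}$) and using Brownian scaling $W_{\lfloor T_m u\rfloor}\approx T_m^{1/2}W(u)$, the normalisations $T_m^{\eta-1/2}$ and $r_{T_m}^{\eta-1/2}$ convert the maxima in (\ref{functionals}) and (\ref{functionals-renyi}) into $\sup_{0\le u\le1}|W(u)|/u^{\eta}$ as in (\ref{weighted}) and $\sup_{1\le u<\infty}|W(u)|/u^{\eta}$ as in (\ref{renyi}). The constraint $\eta<1/2$ in (i) and the trimming $\tau\ge r_{T_m}$ with $\eta>1/2$ in (iii) guarantee, via the law of the iterated logarithm, that the limiting weighted Brownian supremum is a.s. finite, so the relevant sup functional is continuous on the appropriate weighted space and a continuous-mapping (strong-approximation) argument applies. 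Part (ii) is the boundary case $\eta=1/2$, where the weighted supremum is a.s. infinite; there I would invoke the Darling--Erd\H{o}s theorem for standardised Gaussian partial sums, which delivers the Gumbel limit (\ref{darling-erdos}) with the stated centring and scaling $\alpha_{T_m},\beta_{T_m}$.

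The main obstacle is the drift control of step (b), and within it the boundary statement (ii): verifying $\alpha_{T_m}\max_{1\le\tau\le T_m}D_\tau/\tau^{1/2}\to 0$ requires combining the monotonicity of $D_\tau$ with (\ref{restriction}) to beat the $\sqrt{\ln\ln T_m}$ inflation. Everything else reduces to standard invariance principles for Gaussian random walks, made transparent by the conditioning under $P^{\ast}$; the closing ``almost all realisations'' qualifier then follows because the walk limits hold for \emph{every} realisation of $\{z_\tau\}$, while the drift vanishes on the a.s. event on which (\ref{mon-eig-null}) holds.
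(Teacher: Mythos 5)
Your proposal is correct and follows essentially the same route as the paper's proof: the decomposition $S_\tau=\sum_j z_j+\sum_j\psi_j$ with the drift killed uniformly via (\ref{mon-eig-null}) and (\ref{restriction}), a KMT-type strong approximation plus Brownian scaling for parts (i) and (iii), and the Darling--Erd\H{o}s theorem for the boundary case (ii). The only cosmetic difference is that the paper verifies the drift negligibility in (ii) through Shorack's condition (2.3) rather than your direct $\alpha_{T_m}$-absorption argument, but these amount to the same bound.
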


The theorem provides the asymptotics under the null for the functionals
defined in (\ref{functionals})-(\ref{functionals-renyi}). In the case of
equation (\ref{weighted}), the null of no break is rejected whenever
\[
\left\vert S_{\tau }\right\vert >c_{\alpha ,\eta }T_{m}^{-\eta +1/2}\tau
^{\eta },
\]
where
\[
P\left( \sup_{0\leq u\leq 1}\frac{\left\vert W\left( u\right) \right\vert }{%
u^{\eta }}>c_{\alpha ,\eta }\right) =\alpha ,
\]%
thereby defining the changepoint estimate $\widehat{\tau }=\left\{ \min \tau
:\left\vert S_{\tau }\right\vert \geq c_{\alpha ,\eta }T_{m}^{-\eta
+1/2}\tau ^{\eta }\right\} $. The threshold $c_{\alpha ,\eta }T_{m}^{-\eta
+1/2}\tau ^{\eta }$ can be compared with other choices in the literature,
where such specification is \textquotedblleft often dictated by mathematical
convenience rather than optimality\textquotedblright\ (%
\citealp{Chu1996Monitoring}, p. 1052). In contrast, our methodology affords
to choose a \textquotedblleft natural\textquotedblright , non ad-hoc,
threshold. As mentioned in the introduction, \citet{BT1} study a similar
problem; however, the algorithm considered in that paper requires more
sophisticated assumptions and does not lend itself to the construction of a
natural threshold function like the ones considered here.

When using (\ref{functional-de}), equation (\ref{darling-erdos}) entails
that rejection takes place, with a changepoint being identified, at $%
\widehat{\tau }=\left\{ \min \tau :\left\vert S_{\tau }\right\vert \geq
c_{\alpha ,m}\tau ^{1/2}\right\} $, where the asymptotic critical value $%
c_{\alpha ,m}$ is defined as
\[
c_{\alpha ,m}=\frac{\beta _{Tm}-\ln \left( -\ln \left( 1-\alpha \right)
\right) }{\alpha _{Tm}}.
\]%
It is well known that convergence to the extreme value distribution is very
slow, and that the asymptotic critical value, in particular, overstates the
correct one, thus leading to lower power, even for large values of $T_{m}$
(see e.g. \citealp{csorgo1997}). \citet{gombay} provide approximate critical
values to overcome such issue.

Finally, in the case of (\ref{functionals-renyi}), rejection takes place at $%
\widehat{\tau }=\left\{ \min \tau :\left\vert S_{\tau }\right\vert \geq
c_{\alpha ,\eta }r_{T_{m}}^{-\eta +1/2}\tau ^{\eta }\right\} $, where%
\[
P\left( \sup_{1\leq u<\infty }\frac{\left\vert W\left( u\right) \right\vert
}{u^{\eta }}>c_{\alpha ,\eta }\right) =\alpha .
\]%
We point out that the scale transformation of the Wiener process yields%
\[
\sup_{1\leq u<\infty }\frac{\left\vert W\left( u\right) \right\vert }{%
u^{\eta }}\overset{D}{=}\sup_{0\leq s\leq 1}\frac{\left\vert W\left(
s\right) \right\vert }{s^{1-\eta }},
\]%
so that the critical values in Table 1 in \citet{lajos04} can be used here.

\bigskip

We now show that our procedures have power versus the alternatives
considered in (\ref{b1}) and (\ref{b2}).

\begin{assumpC}
\label{break} It holds that $t^{\ast }=O\left( m\right) $.
\end{assumpC}

In Assumption \ref{break}, we assume that, if a break occurs, this does not
happen too close to the end of the monitoring period.

\begin{theorem}
\label{function-power} We assume that Assumptions \ref{factors}-\ref{depFE}
and \ref{monitoring}-\ref{horizon} hold, and that $g\left( \cdot \right) $
satisfies (\ref{restriction}). Then, under (\ref{b1}) and (\ref{b2}), as $%
\min \left( m,p_{1},p_{2}\right) \rightarrow \infty $

\begin{description}
\item[(i)] if%
\begin{equation}
\left( \frac{t^{\ast }}{T_{m}}\right) ^{1-\eta }\frac{T_{m}^{1/2}}{\sqrt{\ln
\ln T_{m}}}g\left( p_{1}^{1-\delta }\right) \rightarrow \infty ,
\label{power-1}
\end{equation}%
it holds that%
\begin{equation}
T_{m}^{\eta -1/2}\max_{1\leq \tau \leq T_{m}}\frac{\left\vert S_{\tau
}\right\vert }{\tau ^{\eta }}\overset{P^{\ast }}{\rightarrow }\infty ,
\label{w-power}
\end{equation}%
for all $0\leq \eta <1/2$;

\item[(ii)] if
\begin{equation}
\left( \frac{t^{\ast }}{T_{m}}\right) ^{1/2}\frac{T_{m}^{1/2}}{\left( \ln
\ln T_{m}\right) }g\left( p_{1}^{1-\delta }\right) \rightarrow \infty ,
\label{power-2}
\end{equation}%
under (\ref{b1}) it holds that
\begin{equation}
P^{\ast }\left( \alpha _{T_{m}}\max_{1\leq \tau \leq T_{m}}\frac{\left\vert
S_{\tau }\right\vert }{\tau ^{1/2}}>v+\beta _{T_{m}}\right) =1,
\label{de-power}
\end{equation}%
for all $-\infty <v<\infty $;

\item[(iii)] if%
\begin{equation}
\left( \frac{r_{T_{m}}}{T_{m}}\right) ^{\eta -1/2}\left( \frac{t^{\ast }}{%
T_{m}}\right) ^{1-\eta }\frac{T_{m}^{1/2}}{\sqrt{\ln \ln T_{m}}}g\left(
p_{1}^{1-\delta }\right) \rightarrow \infty ,  \label{power-3}
\end{equation}%
under (\ref{b1}) it holds that%
\begin{equation}
\left( r_{T_{m}}\right) ^{\eta -1/2}\max_{r_{T_{m}}\leq \tau \leq T_{m}}%
\frac{\left\vert S_{\tau }\right\vert }{\tau ^{\eta }}\overset{P^{\ast }}{%
\rightarrow }\infty ,  \label{renyi-power}
\end{equation}%
for all $\eta >1/2$. All the results hold for almost all realisations of $%
\{X_{t},1\leq t\leq T\}$.
\end{description}
\end{theorem}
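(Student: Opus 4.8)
The plan is to exploit the additive structure $y_{\tau}=z_{\tau}+\psi_{\tau}$, which yields the clean decomposition $S_{\tau}=W_{\tau}+D_{\tau}$ with $W_{\tau}=\sum_{j=1}^{\tau}z_{j}$ a Gaussian random walk and $D_{\tau}=\sum_{j=1}^{\tau}\psi_{j}$ a \emph{deterministic} drift (both under $P^{\ast}$, i.e. conditionally on the data). Since the randomising sequence $\{z_{\tau}\}$ is added independently of the data and its law does not depend on the null/alternative, the functionals of $W_{\tau}$ alone obey exactly the limits established for the null in Theorem \ref{lemma:function}: by Donsker's theorem and the continuous mapping theorem for the weighted sup (parts (i),(iii)) and by the extreme-value result underlying (\ref{darling-erdos}) (part (ii)), one gets $T_{m}^{\eta-1/2}\max_{1\le\tau\le T_{m}}|W_{\tau}|/\tau^{\eta}=O_{P^{\ast}}(1)$, $r_{T_{m}}^{\eta-1/2}\max_{r_{T_{m}}\le\tau\le T_{m}}|W_{\tau}|/\tau^{\eta}=O_{P^{\ast}}(1)$, and $\max_{1\le\tau\le T_{m}}|W_{\tau}|/\tau^{1/2}=O_{P^{\ast}}(\sqrt{\ln\ln T_{m}})$. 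Thus the random-walk component is controlled, and it suffices to show that the deterministic drift, measured in the same functional, diverges and overwhelms these scales.

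The second step is a lower bound on $D_{\tau}$, split by alternative. Under the enlargement alternative (\ref{b2}), the bound (\ref{b21}) gives $\widehat{\lambda}_{k_{1}+1,\tau}\ge c_{0}p_{1}$ for all $\tau\ge m+t^{\ast}$; inserting this into (\ref{psi}) and using that the trace normalisation stays bounded (as ensured by the preliminary spectral bounds), the argument of $g$ is $\Omega_{a.s.}(p_{1}^{1-\delta})$, so $\psi_{\tau}\ge g(c\,p_{1}^{1-\delta})$ on $\asymp T_{m}$ indices and $D_{\tau}\ge c(\tau-t^{\ast})g(p_{1}^{1-\delta})$ grows linearly in $\tau$; the conclusions then follow at once, which is why (\ref{b2}) is the easy case (and why (ii),(iii) only need to be stated for (\ref{b1})). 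The binding case is the switching alternative (\ref{b1}), where (\ref{b11}) shows the spike is a transient \emph{bump}: $\widehat{\lambda}_{k_{1}+1,\tau}$ is of order $p_{1}$ only for $t^{\ast}<\tau<m+t^{\ast}$ and returns to $O(l_{p_{1},p_{2},m})$ afterwards. Here I would isolate the sub-interval $\{\tau:\ m/4\le\tau-t^{\ast}\le 3m/4\}$, on which $\min\!\big(\tfrac{\tau-t^{\ast}}{m},\tfrac{m+t^{\ast}-\tau}{m}\big)\ge 1/4$ and hence $\psi_{\tau}\ge g(c\,p_{1}^{1-\delta})$; summing over this interval of length $\asymp m$ gives $D_{\tau}\ge c\,m\,g(p_{1}^{1-\delta})$ for every $\tau\ge t^{\ast}+3m/4$, while $D_{\tau}$ stays essentially flat thereafter because (\ref{restriction}) forces the post-bump increments to be negligible.

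The third step combines the two. Evaluating at the deterministic index $\tau^{\dagger}$ maximising $D_{\tau}/\tau^{\eta}$, which lies in or just past the bump so that $\tau^{\dagger}\asymp m$ by Assumption \ref{break}, the triangle inequality gives $|S_{\tau^{\dagger}}|/\tau^{\dagger\eta}\ge D_{\tau^{\dagger}}/\tau^{\dagger\eta}-|W_{\tau^{\dagger}}|/\tau^{\dagger\eta}$, so each statistic is bounded below by the scaled drift minus the controlled random-walk term. Inserting $D_{\tau^{\dagger}}\gtrsim m\,g(p_{1}^{1-\delta})$ and $\tau^{\dagger}\asymp m$, and comparing with the three random-walk scales from the first step, reduces (\ref{w-power}), (\ref{de-power}) and (\ref{renyi-power}) to the single requirement that the corresponding scaled drift diverge; tracking the powers of $T_{m}$, $r_{T_{m}}$ and the $\sqrt{\ln\ln T_{m}}$ factor (the latter entering only through the Darling--Erd\H{o}s normalisation in (\ref{darling-erdos})) and then using $m\ge c\,t^{\ast}$ (equivalently $t^{\ast}=O(m)$) to pass from $m$ to $t^{\ast}$ shows that (\ref{power-1}), (\ref{power-2}) and (\ref{power-3}) are sufficient for divergence.

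The main obstacle I anticipate is the drift lower bound in the switching case: one must control the argument of $g$ \emph{uniformly} over the bump, which requires a uniform-in-$\tau$ version of (\ref{b11}) together with an upper bound on the trace normalisation $p_{1}^{-1}\sum_{j}\widehat{\lambda}_{j,\tau}$ in the denominator of (\ref{psi}), so that the small non-spiked eigenvalues cannot dominate and wash out the spike in the numerator, plus a mild regularity condition on $g$ ensuring $g(c\,x)\gtrsim g(x)$ so that constants inside $g$ may be discarded. The remaining work is bookkeeping: confirming, via the $\tau^{-\eta}$ weighting (and, for the R\'{e}nyi statistic, the truncation at $r_{T_{m}}$), that $\tau^{\dagger}$ indeed sits where claimed and that no cancellation occurs in $S_{\tau^{\dagger}}$.
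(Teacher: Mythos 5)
Your proposal is correct and follows essentially the same route as the paper: the same decomposition $S_{\tau}=\sum_{j\le\tau}z_{j}+\sum_{j\le\tau}\psi_{j}$, control of the pre-break drift via (\ref{restriction}), a uniform lower bound $\psi_{\tau}\geq g(c\,p_{1}^{1-\delta})$ on a mid-bump sub-interval of length $\asymp m$ deduced from (\ref{b11}) (the paper's interval $t^{\ast}+am\leq\tau\leq t^{\ast}+m-bm$ is your $[t^{\ast}+m/4,\,t^{\ast}+3m/4]$), and a final comparison using $t^{\ast}=O(m)$. The only immaterial difference is that you control the Gaussian walk via the null limit theorems of Theorem \ref{lemma:function} while the paper uses the law of the iterated logarithm, which is why the $\sqrt{\ln\ln T_{m}}$ factor appears in all three conditions (\ref{power-1})--(\ref{power-3}), not only in the Darling--Erd\H{o}s case.
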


Theorem \ref{function-power} states that all our procedures have nontrivial
power versus changes. However, the choice of the weighing scheme affects
such power. Comparing (\ref{power-1}) with (\ref{power-2}), it is immediate
to see that standardised partial sum processes yield power versus
alternatives closer to the beginning of the monitoring period (i.e., for
smaller values of $t^{\ast }$); this is further enhanced in the case of R%
\'{e}nyi statistics, where breaks occurring at any $t^{\ast }>r_{T_{m}}$ are
detected.

\subsubsection{Monitoring schemes based on worst-case scenario\label{ms2}}

Recall that, heuristically, the sequence $\left\{ y_{\tau },1\leq \tau \leq
T_{m}\right\} $ is \textit{i.i.d.} Gaussian under the null, whereas it
diverges to positive infinity under the alternative. These features
(independence and Gaussianity) allow to propose a completely different
monitoring scheme based on%
\begin{equation}
Z_{T_{m}}=\max_{1\leq \tau \leq T_{m}}y_{\tau }.  \label{z}
\end{equation}%
In order to study the asymptotics of $Z_{T_{m}}$, we define the norming
sequences%
\[
a_{T_{m}}=\frac{b_{T_{m}}}{1+b_{T_{m}}^{2}},\ b_{T_{m}}=\sqrt{2\ln T_{m}}-%
\frac{\ln \ln T_{m}+\ln \left( 4\pi \right) }{2\sqrt{2\ln T_{m}}},
\]%
which are proposed in \citet{gasull}.

\begin{theorem}
\label{monitor-evt} We assume that Assumptions \ref{factors}-\ref{depFE} and %
\ref{horizon} hold, and that $g\left( \cdot \right) $ satisfies (\ref%
{restriction}). Then, under the null, as $\min \left( p_{1},p_{2},m\right)
\rightarrow \infty $, it holds that%
\begin{equation}
\lim_{\min \left( p_{1},p_{2},m\right) \rightarrow \infty }P^{\ast }\left(
\frac{Z_{T_{m}}-b_{T_{m}}}{a_{T_{m}}}\leq v\right) =\exp \left( -\exp \left(
-v\right) \right) ,  \label{null-dist-evt}
\end{equation}%
for almost all realisations of $\left\{ X_{t},1\leq t\leq T\right\} $ and
all $-\infty <v<\infty $. Under the alternatives (\ref{b1}) and (\ref{b2}),
if it holds that%
\begin{equation}
\frac{g\left( p_{1}^{1-\delta }\right) }{\sqrt{\ln T_{m}}}\rightarrow \infty
,  \label{alt-max}
\end{equation}%
as $\min \left( p_{1},m\right) \rightarrow \infty $, then it follows that%
\begin{equation}
\lim_{\min \left( p_{1},p_{2},m\right) \rightarrow \infty }P^{\ast }\left(
\frac{Z_{T_{m}}-b_{T_{m}}}{a_{T_{m}}}\leq v\right) =0,  \label{alt-dist-evt}
\end{equation}%
for almost all realisations of $\left\{ X_{t},1\leq t\leq T\right\} $ and
all $-\infty <v<\infty $.
\end{theorem}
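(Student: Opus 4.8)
The key structural observation I would build on is that, conditionally on the data $\{X_t,1\le t\le T\}$ (i.e. under $P^{\ast}$), the drifts $\psi_\tau$ in (\ref{psi}) are \emph{non-negative deterministic} numbers (recall $g\ge0$ is taken non-decreasing with $g(0)=0$), whereas the randomisation $z_\tau$ in (\ref{yt}) is independent of the data and hence remains an \emph{i.i.d.} $N(0,1)$ sequence under $P^{\ast}$. Writing $Z^{0}_{T_m}=\max_{1\le\tau\le T_m}z_\tau$ and $\Psi_{T_m}=\max_{1\le\tau\le T_m}\psi_\tau\ge0$, the elementary bounds $z_\tau\le y_\tau\le z_\tau+\Psi_{T_m}$ give the sandwich
\[
Z^{0}_{T_m}\le Z_{T_m}\le Z^{0}_{T_m}+\Psi_{T_m}.
\]
Since $a_{T_m}>0$, subtracting $b_{T_m}$ and dividing by $a_{T_m}$ preserves this ordering. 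The plan is therefore to (a) apply classical extreme value theory to $Z^{0}_{T_m}$, which under $P^{\ast}$ is a genuine maximum of $T_m$ i.i.d. standard normals, and (b) show that, for almost all data realisations, $\Psi_{T_m}/a_{T_m}$ is negligible under the null and dominant under the alternative. For step (a) I would simply invoke the Gumbel convergence of the normalised maximum of i.i.d. $N(0,1)$ variates with the refined norming sequences of \citet{gasull}: $(Z^{0}_{T_m}-b_{T_m})/a_{T_m}$ converges under $P^{\ast}$ to a law with c.d.f. $\exp(-\exp(-v))$, and because $z_\tau$ is independent of $\{X_t\}$ this holds for \emph{every} data realisation.

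\emph{Null.} Under $H_0$, Lemma~\ref{theorem:tildeM1} gives $\widehat\lambda_{k_1+1,\tau}=o_{a.s.}(l_{p_1,p_2,m})$ uniformly in $\tau$, while the trace normalisation $p_1^{-1}\sum_{j=1}^{p_1}\widehat\lambda_{j,\tau}$ is bounded above and below by positive constants a.s. (its order is fixed by the $k_1$ spiked and $p_1-k_1$ bounded eigenvalues) and can be absorbed into the constants inside $g$. Hence the argument of $g$ in (\ref{psi}) is at most $c_0 p_1^{-\delta}l_{p_1,p_2,m}$ uniformly in $\tau$, and by monotonicity $\Psi_{T_m}\le g(c_0 p_1^{-\delta}l_{p_1,p_2,m})$. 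Using $a_{T_m}=b_{T_m}/(1+b_{T_m}^2)$ with $b_{T_m}\sim\sqrt{2\ln T_m}$ gives $\Psi_{T_m}/a_{T_m}\asymp\Psi_{T_m}\sqrt{2\ln T_m}$; since $\sqrt{\ln T_m}=o(T_m)$, the restriction (\ref{restriction}) ($T_m g(p_1^{-\delta}l_{p_1,p_2,m})\to0$) is more than enough to force $\Psi_{T_m}/a_{T_m}\overset{a.s.}{\to}0$. Combining this with (a) through the sandwich and a Slutsky argument yields (\ref{null-dist-evt}).

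\emph{Alternative.} Under (\ref{b1}) or (\ref{b2}), the bounds (\ref{b11})--(\ref{b21}) guarantee the existence of at least one index $\tau^{\ast\ast}\in[1,T_m]$ with $\widehat\lambda_{k_1+1,\tau^{\ast\ast}}\ge c_0 p_1$; since the denominator in (\ref{psi}) is $O_{a.s.}(1)$, monotonicity of $g$ gives
\[
Z_{T_m}\ge y_{\tau^{\ast\ast}}=z_{\tau^{\ast\ast}}+\psi_{\tau^{\ast\ast}}\ge z_{\tau^{\ast\ast}}+g\!\left(c_0 p_1^{1-\delta}\right).
\]
Inserting the explicit constants and using $1/a_{T_m}\sim\sqrt{2\ln T_m}$ yields, up to lower-order terms,
\[
\frac{Z_{T_m}-b_{T_m}}{a_{T_m}}\ \gtrsim\ \sqrt{2\ln T_m}\,z_{\tau^{\ast\ast}}+\sqrt{2\ln T_m}\,g\!\left(c_0 p_1^{1-\delta}\right)-2\ln T_m .
\]
Condition (\ref{alt-max}), $g(p_1^{1-\delta})/\sqrt{\ln T_m}\to\infty$, makes the middle term dominate both the $O_{P^{\ast}}(\sqrt{\ln T_m})$ fluctuation $\sqrt{2\ln T_m}\,z_{\tau^{\ast\ast}}$ and the deterministic $2\ln T_m$, so the right-hand side diverges to $+\infty$ in $P^{\ast}$-probability, which is exactly (\ref{alt-dist-evt}).

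\emph{Main obstacle.} The delicate point is the null part: the EVT centring/scaling operates on the slowly but unboundedly growing scale $a_{T_m}^{-1}\asymp(\ln T_m)^{1/2}$, so the perturbation $\Psi_{T_m}$ is amplified by $\sqrt{\ln T_m}$ and must be shown to vanish \emph{after} this amplification, uniformly over the (possibly very long, $T_m=\Omega(m^{\varsigma})$) horizon of Assumption~\ref{horizon}. This is precisely what the a.s. eigen-gap rate of Lemma~\ref{theorem:tildeM1} together with the design restriction (\ref{restriction}) is engineered to supply. A secondary but essential care is to keep the randomisation randomness ($z_\tau$) separate from the data randomness ($\psi_\tau$), so that the Gumbel limit is established under $P^{\ast}$ for almost all realisations of $\{X_t\}$ rather than only unconditionally.
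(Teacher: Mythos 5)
Your proof is correct, and the alternative part is essentially the paper's own argument (lower-bound $Z_{T_m}$ by the single perturbed coordinate $y_{\tilde\tau}$ at which $\widehat{\lambda}_{k_1+1,\tilde\tau}\geq c_0p_1$, then note that $a_{T_m}v+b_{T_m}=O\left(\sqrt{\ln T_m}\right)$ is swamped by $\psi_{\tilde\tau}\geq g\left(c_0p_1^{1-\delta}\right)$ under (\ref{alt-max}), so $\Phi\left(a_{T_m}v+b_{T_m}-\psi_{\tilde\tau}\right)\rightarrow 0$). The null part, however, takes a genuinely different route. The paper exploits the conditional independence of the $z_\tau$ to compute the distribution function of the maximum exactly as a product, $P^{\ast}\left(Z_{T_m}\leq a_{T_m}v+b_{T_m}\right)=\left[\Phi\left(a_{T_m}v+b_{T_m}\right)\right]^{T_m}\exp\left(\sum_{\tau=1}^{T_m}\ln\left[1-c_1\psi_\tau\right]\right)$, and then kills the second factor via the elementary bound $\ln\left(1-c_1\psi_\tau\right)\geq-c_2\psi_\tau$ together with $T_m\max_\tau\psi_\tau\rightarrow 0$ from (\ref{restriction}). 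You instead sandwich $Z_{T_m}^{0}\leq Z_{T_m}\leq Z_{T_m}^{0}+\Psi_{T_m}$ and apply Slutsky, which only requires $\Psi_{T_m}/a_{T_m}\asymp\Psi_{T_m}\sqrt{\ln T_m}\rightarrow 0$ --- a strictly weaker demand than the paper's $T_m\Psi_{T_m}\rightarrow 0$, though both follow from (\ref{restriction}). Your route is thus more elementary (no Taylor expansion of $\ln\Phi$, no dominated convergence) and reveals that, for the worst-case scheme alone, (\ref{restriction}) could be relaxed to a $\sqrt{\ln T_m}$ rate; the paper's product computation, on the other hand, gives the exact finite-sample form of the conditional distribution, which could be useful for refined approximations. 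Two minor shared imprecisions, not gaps: both arguments implicitly use monotonicity of $g$ near zero (the paper only assumes continuity with $g\left(0^{+}\right)=0$), and both absorb the multiplicative constant and the trace normalisation in (\ref{psi}) into the argument of $g$ when invoking (\ref{restriction}); you at least state the boundedness of the trace term explicitly.
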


According to (\ref{null-dist-evt}), rejection takes place, with a break
being identified, at $\widehat{\tau }=\left\{ \min \tau :y_{\tau }>c_{\alpha
,2}\right\} $, where the asymptotic critical value $c_{\alpha ,2}$ is given
by%
\[
c_{\alpha ,2}=b_{Tm}-a_{Tm}\ln \left( -\ln \left( 1-\alpha \right) \right) .
\]

Although convergence to the extreme value distribution is notoriously slow
(see \citealp{hall1979rate}, who finds a log rate), we found in our
simulations that the asymptotic critical value works very well as far as our
procedure is concerned. As an alternative, \textquotedblleft $M$-out-of-$N$%
\textquotedblright bootstrap or resampling schemes applied to $\left\{
z_{\tau },1\leq \tau \leq T_{m}\right\}$ could be considered (see e.g. %
\citealp{fukuchi}; and \citealp{politis1994}).

\section{Extension and Discussion\label{fa}}

\subsection{Further changepoint alternatives\label{further}}

Most of the focus of this paper has been on alternatives where the number of
factors increases after $t^{\ast }$. One alternative hypothesis which is
left out from this framework is the case where the column space of $R$
reduces, viz.
\begin{equation}
X_{t}=\left\{
\begin{array}{ll}
\widetilde{R}F_{t}C^{\prime }+E_{t} & \text{for }1\leq t\leq m+t^{\ast } \\
RF_{a,t}C^{\prime }+E_{t} & \text{for }t>m+t^{\ast }%
\end{array}%
\right. ,  \label{b3}
\end{equation}%
where $\widetilde{R}=\left[ R|R_{3}\right] $, $R_{3}$ is a $p_{1}\times
c_{3} $ matrix, $F_{t}^{\prime }=\left[ F_{a,t}^{\prime }|F_{b,t}^{\prime }%
\right] $, and $F_{b,t}$ is a $c_{3}\times k_{2}$ matrix of new common
factors. Here, we discuss how to extend our procedures proposed above to the
case of (\ref{b3}). Many arguments are either repetitive or straightforward
alterations of the procedures discussed above, and we therefore only provide
an overview of how our methodologies can be adapted to this case. By
standard arguments, under (\ref{b3}) it holds that%
\begin{equation}
\widehat{\lambda} _{k_{1},\tau }\left\{
\begin{array}{ll}
\geq c_{0}p_{1} & \text{for }\tau \leq t^{\ast } \\
\leq c_{1}\left( \frac{\tau -t^{\ast }+m}{m}\right) p_{1} & \text{for }%
t^{\ast }<\tau <m+t^{\ast } \\
\leq c_{0}l_{p_{1},p_{2},m} & \text{for }\tau \geq m+t^{\ast }%
\end{array}%
\right. .  \label{b3111}
\end{equation}

Based on the results discussed above, it is easy to show that under the null
of no change, it holds that
\[
p_{1}^{-\delta }\widehat{\lambda }_{k_{1},\tau }=\Omega _{a.s.}\left(
p_{1}^{1-\delta }\right) ,
\]%
where $\delta $ is defined in (\ref{equ:deltabeta}). In the presence of a
break, by (\ref{b3111}) it holds that there exists a $\widetilde{\tau }%
>t^{\ast }$ such that, for $\tau \geq \widetilde{\tau }$, $p_{1}^{-\delta }%
\widehat{\lambda }_{k_{1},\tau }=o_{a.s.}\left( 1\right) $. Consider now a
continuous transformation $\widetilde{g}\left( \cdot \right) $ such that $%
\lim_{x\rightarrow 0}\widetilde{g}\left( x\right) =0$ and $%
\lim_{x\rightarrow \infty }\widetilde{g}\left( x\right) =\infty $, e.g. $%
\widetilde{g}\left( x\right) =\exp \left( 1/x\right) -1$. Define%
\begin{equation}
\widetilde{\psi }_{\tau }=\widetilde{g}\left( \frac{p_{1}^{-\delta }\widehat{%
\lambda }_{k_{1},\tau }}{p_{1}^{-1}\sum_{j=1}^{p_{1}}\widehat{\lambda }%
_{j,\tau }}\right) ,  \label{psi-tilde}
\end{equation}%
where $\widehat{\lambda }_{k_{1},\tau}$ is normalised by the trace of $%
\left( \frac{1}{m}\sum_{t=\tau +1}^{m+\tau }\widetilde{Y}_{t}\widetilde{Y}%
_{t}^{\top }\right)$, but again other rescalings are possible. By
continuity, it follows that
\[
\widetilde{\psi }_{\tau }=o_{a.s.}\left( 1\right) \text{ for all }1\leq \tau
\leq T_{m}\text{ under the null of no changepoint,}
\]%
and%
\[
\widetilde{\psi }_{\tau }\overset{a.s.}{\rightarrow }\infty \text{ for some }%
\tau \geq \widetilde{\tau }>t^{\ast }\text{, in the presence of a
changepoint in }t^{\ast }.
\]%
From hereon, the procedures described above can be applied, using $%
\widetilde{\psi }_{\tau }$ in lieu of $\psi _{\tau }$, with the same results.

\bigskip

Secondly, we note that we have discussed our procedure for the case where $C$
is constant over time. This simplifies the presentation, but it may be
considered an unrealistic set-up. We now provide some arguments which show
that our procedure is able to detect changes in $R$ also in the possible
presence of changes in $C$. Inspired by \citet{baltagi2015}, we consider the
following scenario%
\begin{equation}
X_{t}=\left\{
\begin{array}{ll}
R_{1}F_{t}C_{t}^{\prime }+E_{t} & \text{for }1\leq t\leq m+t^{\ast } \\
R_{2}F_{t}C_{t}^{\prime }+E_{t} & \text{for }t>m+t^{\ast }%
\end{array}%
\right. ,  \label{Cchanges}
\end{equation}%
where we allow for%
\[
C_{t}=\left\{
\begin{array}{ll}
C_{1} & \text{for }1\leq t\leq m+\widetilde{t} \\
C_{2} & \text{for }t>m+\widetilde{t}%
\end{array}%
\right. ,
\]%
and - only for simplicity - we assume that: $\widetilde{t}<t^{\ast }$, $%
C_{1} $ and $C_{2}$ have the same number of columns $k_{2}$ and full rank.
Define now the (full rank) $k_{3}\times k_{3}$ matrix $C_{3}$ (where $%
k_{3}\geq k_{2}$) such that%
\[
C_{1}=C_{3}\Omega _{1}\text{ \ and \ }C_{2}=C_{3}\Omega _{2},
\]%
where $\Omega _{1}$ and $\Omega _{2}$ are $k_{3}\times k_{2}$ matrices with
full rank $k_{2}$. The equation above stipulates that the column spaces of
both $C_{1}$ and $C_{2}$ are spanned by the column space of $C_{3}$. In
particular, if $C_{1}$ and $C_{2}$ are orthogonal, then $k_{3}=2k_{2}$; if
they both lie in the same colum spaces, $k_{3}=k_{2}$; and, in general, $%
k_{2}\leq k_{3}\leq 2k_{2}$. Hence we can rewrite (\ref{Cchanges}) as%
\[
X_{t}=\left\{
\begin{array}{ll}
R_{1}F_{t}\Omega _{1}^{\prime }C_{3}^{\prime }+E_{t} & \text{for }1\leq
t\leq m+\widetilde{t} \\
R_{1}F_{t}\Omega _{2}^{\prime }C_{3}^{\prime }+E_{t} & \text{for }m+%
\widetilde{t}<t\leq m+t^{\ast } \\
R_{2}F_{t}\Omega _{2}^{\prime }C_{3}^{\prime }+E_{t} & \text{for }%
t>m+t^{\ast }%
\end{array}%
\right. .
\]%
Finally, letting%
\[
\widetilde{F}_{t}=\left\{
\begin{array}{ll}
F_{t}\Omega _{1}^{\prime } & \text{for }1\leq t\leq m+\widetilde{t} \\
F_{t}\Omega _{2}^{\prime } & \text{for }m+\widetilde{t}<t\leq m+t^{\ast } \\
F_{t}\Omega _{2}^{\prime } & \text{for }t>m+t^{\ast }%
\end{array}%
\right. ,
\]%
we can write%
\begin{equation}
X_{t}=\left\{
\begin{array}{ll}
R_{1}\widetilde{F}_{t}C_{3}^{\prime }+E_{t} & \text{for }1\leq t\leq
m+t^{\ast } \\
R_{2}\widetilde{F}_{t}C_{3}^{\prime }+E_{t} & \text{for }t>m+t^{\ast }%
\end{array}%
\right. .  \label{cchanges-2}
\end{equation}%
Upon extending Assumption \ref{factors}\textit{(ii)} by requiring that, as $%
m\rightarrow \infty $%
\[
\frac{1}{m}\sum_{t=1}^{m}F_{t}AF_{t}^{\prime }\overset{P}{\rightarrow }%
\widetilde{\Sigma },
\]%
with $\widetilde{\Sigma }$ positive definite for all positive definite
matrices $A$, it is possible to use the theory developed in \cite%
{Yu2021Projected} to estimate $C_{3}$, thereby obtaining the same results as
above.

\subsection{Detection delay\label{delay}}

In this section, we report some heuristic considerations on how the
detection delay is affected by the combinations of $m$, $p_{1}$ and $p_{2}$.
In particular, we note that the size of the rolling window $m$ is typically
user-defined; hence, our analysis offers some guidelines as to the choice of
the rolling window, $m$, complementing the findings from our simulations.

Recall that, by (\ref{b11}) and (\ref{b21}), in the presence of a
changepoint at time $t^{\ast }+1$, the $\left( k+1\right) $-th eigenvalue
calculated at $\tau \geq t^{\ast }+1$ - denoted as $\widehat{\lambda }%
_{k_{1}+1,\tau }$\ - will be proportional to $\frac{\tau -t^{\ast }}{m}p_{1}$%
. In the construction of our test statistics, we need to premultiply $%
\widehat{\lambda }_{k_{1}+1,\tau }$ by $p_{1}^{-\delta }$, to ensure that,
when there is no break, $p_{1}^{-\delta }\widehat{\lambda }_{k_{1}+1,\tau }$
vanishes. This entails that, heuristically, a changepoint will be detected
as long as%
\begin{equation}
\frac{p_{1}^{1-\delta }}{m}\left( \tau -t^{\ast }\right) \rightarrow \infty .
\label{general}
\end{equation}%
As in (\ref{equ:deltabeta}), it may be convenient to consider separately the
cases where $p_{1}$ is \textquotedblleft large\textquotedblright\ compared
to $\left( mp_{2}\right) ^{1/2}$, and the case where it is \textquotedblleft
smaller\textquotedblright\ than $\left( mp_{2}\right) ^{1/2}$.

Whenever $p_{1}=o\left( \left( mp_{2}\right) ^{1/2}\right) $, we note from (%
\ref{equ:deltabeta}) that, essentially, we can use $\delta =0$. In this
case, detection will take place as long as%
\begin{equation}
\frac{p_{1}}{m}\left( \tau -t^{\ast }\right) \rightarrow \infty .
\label{small-1}
\end{equation}%
In this case, a natural choice for the size of the rolling window would be $%
m=O\left( p_{1}^{1-\varepsilon }\right) $, for any $\varepsilon >0$, or even
$m=O\left( {p_{1}}/{\ln p_{1}}\right) $. With this value of $m$, (\ref%
{small-1}) will hold as long as $\tau -t^{\ast }>0$, thus ensuring a short
delay in the detection time of a chagepoint. On the other hand, when $\left(
mp_{2}\right) ^{1/2}=o\left( p_{1}\right) $, (\ref{equ:deltabeta}) requires
a choice of $\delta >0$ to ensure that $p_{1}^{1-\delta }=o\left( \left(
mp_{2}\right) ^{1/2}\right) $. Considering the case where $p_{1}^{1-\delta }$
is exactly of order $\left( mp_{2}\right) ^{1/2}$, detection of a
changepoint will take place as long as%
\begin{equation}
\frac{p_{1}^{1-\delta }}{m}\left( \tau -t^{\ast }\right) =c_{0}\frac{\left(
mp_{2}\right) ^{1/2}}{m}\left( \tau -t^{\ast }\right) =c_{1}\left( \frac{%
p_{2}}{m}\right) ^{1/2}\left( \tau -t^{\ast }\right) \rightarrow \infty ,
\label{small-2}
\end{equation}%
which would suggest the choice $m=o\left( p_{2}\right) $ to ensure that
detection takes place after a finite number of time periods.

\section{Simulations Study\label{simulation}}


In this section, we assess - through synthetic data - the performance of the
proposed procedures on testing and locating change points. We begin by
considering the main set-up of this paper, namely studying size and power in
the presence of a changepoint which leads to an increase in the number of
common factors between the pre- and post-break regimes.

Throughout the section, the data generating process is similar to the one
used in \cite{Yu2021Projected}. Specifically, under the null hypothesis $%
H_{0}$ without change point, we set the row/column factor numbers $%
k_{1}=k_{2}=3$. The entries of $R$ and $C$ are independently sampled from
uniform distribution $\mathcal{U}(-\sqrt{3},\sqrt{3})$, while
\begin{equation}
\begin{split}
\text{Vec}(F_{t})=& \phi \times \text{Vec}(F_{t-1})+\sqrt{1-\phi ^{2}}\times
\epsilon _{t},\quad \epsilon _{t}\overset{i.i.d.}{\sim }\mathcal{N}({\ 0}%
,I_{k_{1}\times k_{2}}), \\
\text{Vec}(E_{t})=& \psi \times \text{Vec}(E_{t-1})+\sqrt{1-\psi ^{2}}\times
\text{Vec}(U_{t}),\quad U_{t}\overset{i.i.d.}{\sim }\mathcal{MN}({\ 0}%
,U_{E},V_{E}),
\end{split}
\label{equ4.1}
\end{equation}%
where $U_{t}$ is from a matrix-normal distribution, i.e., $\text{Vec}(U_{t})%
\overset{i.i.d.}{\sim }\mathcal{N}({\ 0},V_{E}\otimes U_{E})$. $U_{E}$ and $%
V_{E}$ are matrices with ones on the diagonal, and the off-diagonal entries
are $1/p_{1}$ and $1/p_{2}$, respectively. The parameters $\phi $ and $\psi $
controls both the temporal and cross-sectional correlations of $X_{t}$. In
the simulation study, we let $\psi =\phi =0.1$ and set $T=200$, $p_{1}\in
\{50,80,100\}$ and $p_{2}\in \{20,50,80\}$. The monitoring procedures are
based on the $4$-th largest eigenvalue of the rolling column-column sample
covariance matrix. When calculating the initial projection matrix $%
\widetilde{C}$, we always use $k_{2}=k_{\max }=8$. To calculate the
sequences $\psi _{\tau }$, we let $\epsilon =0.05$ in (\ref{equ:deltabeta}),
$g(x)=[\exp (x)-1]^{4}$ in (\ref{psi}) while $m\in \{50,80,100\}$. All
results have been obtained using $1,000$ replications.

\begin{table*}[hbtp]
\caption{Empirical sizes ($\%$) under $H_0$ over 1000 replications. $T=200$,
$k_1=k_2=3$, $\protect\epsilon=0.05$.}
\label{tab1}
\begin{center}
{\small \ \addtolength{\tabcolsep}{0pt} \renewcommand{\arraystretch}{1.2}
\scalebox{0.75}{ 	
			\begin{tabular*}{21cm}{ccccccccccccccc}
				\toprule[1.2pt]
				&&&\multicolumn{6}{l}{$\alpha=0.05$}	&\multicolumn{6}{l}{$\alpha=0.10$}\\\cmidrule(lr){4-9}\cmidrule(lr){10-15}
				&&&\multicolumn{5}{l}{Partial-sum}&Worst&\multicolumn{5}{l}{Partial-sum}&Worst\\\cmidrule(lr){4-8}\cmidrule(lr){10-14}
				$m$&$p_1$&$p_2$&$\eta=0$&$\eta=0.25$&$\eta=0.5$&$\eta=0.65$&$\eta=0.75$	&case&$\eta=0$&$\eta=0.25$&$\eta=0.5$&$\eta=0.65$&$\eta=0.75$	&case	\\\midrule[1.2pt]
50&50&20&5.7&5.2&2.7&3.9&3.5&3.6&9.6&9.6&7.2&7.2&6.7&9.4
\\
50&50&50&4.1&4.4&1.6&2.7&2.8&5&9.2&8.9&6.4&4.9&4.7&10.1
\\
50&50&80&5.1&4.4&2&3.9&3.3&4.1&9.5&9.8&6.1&6.4&6.5&8.6
\\
50&80&20&4.5&3.9&1.7&3.1&3.1&4.4&8.8&9.2&4.7&6.4&6.2&8.3
\\
50&80&50&5.3&4.8&1.9&3.7&3.2&5.1&9.6&9.7&6.2&7.2&7.5&9.7
\\
50&80&80&4.5&4.5&1.7&2.9&2.9&4.5&9&9.3&6.5&5.5&6.3&9.8
\\
50&100&20&4.6&4.9&1.4&2.3&2.2&4.1&10.4&10.2&4.8&5.7&5.6&8.8\\
50&100&50&4&3.9&1.3&2.2&1.8&3.3&8.3&9&4.3&4.8&4.9&8.7
\\
50&100&80&5.4&5.8&1.5&2&2&4.2&9.1&9.4&5.9&6&6.5&9.7
\\\midrule[1.2pt]
80&50&20&4.8&4.1&2.2&3.1&3.2&3.3&8.6&8&6.2&6.5&6.3&8.3\\
80&50&50&3.9&3.8&1.7&2.7&3&3.7&9.3&9.5&5.7&6.8&6&8.1
\\
80&50&80&4&4.3&1.5&3.3&3.3&3&9.3&8.9&5.8&6.8&7.2&8.4
\\
80&80&20&4.5&4.2&2.1&2.9&2.7&3.4&8.8&8.5&5.1&6.8&6.2&7.9\\
80&80&50&4.4&4.1&1.6&4&3.5&5.2&9.2&9.2&5.3&6.9&7.2&10.8
\\
80&80&80&4.6&4.4&1.6&3.5&3.2&3.3&10.4&9.4&6.7&6.7&6.8&8.4
\\
80&100&20&3.5&3.4&0.9&2.4&2.7&4.4&9.2&7.9&4.2&5.4&6.1&9
\\
80&100&50&4.1&3.7&1.8&2.9&2.9&4.5&9&8&5.3&6.4&5.6&9.2
\\
80&100&80&4.2&3.9&1.5&2.4&2.1&4&7.9&7.7&4.8&6.3&6.6&9.9
\\\midrule[1.2pt]
100&50&20&5.2&4.5&1.8&2.4&2.5&4.3&10.3&9.9&6.2&5.9&5.6&8.6\\
100&50&50&4.1&4.2&2.3&3.2&2.9&4&7.9&7.7&5.2&6.2&5.4&9.3
\\
100&50&80&4.3&3.7&1.2&3.2&3.1&4.1&8.3&7.5&4.5&5.6&5.9&8.5
\\
100&80&20&4.4&4&1.4&2.3&2.3&4.3&10.3&9.2&4.9&6.2&5.7&9.7
\\
100&80&50&4.2&4.2&1.6&2.2&2&4.2&7.8&8.1&5.5&5.3&5.6&9.1
\\
100&80&80&4.4&3.9&1&2.8&3.5&4.4&8.8&7.7&4.7&6&6.4&9.8
\\
100&100&20&4.8&3.9&1.4&2.9&2.2&3.3&7.7&8.2&5.5&6&5.8&6.8\\
100&100&50&4.8&4.7&1.5&3.1&3.7&4.7&9.1&8.4&5.7&6.7&7&9.3
\\
100&100&80&4.4&4.5&2.3&3.2&3.3&3.6&9.6&9.5&5.9&6.8&7.2&9.2\\
				\bottomrule[1.2pt]		
		\end{tabular*}} }
\end{center}
\end{table*}

Table \ref{tab1} reports the empirical rejection frequencies under the null,
for various values of $\eta $ when using procedures based on partial sums.
It is worth pointing out that, in the context of online changepoint
detection, size control is different than in the standard Neyman-Pearson
testing context; as \citet{lajos07} put it, \textquotedblleft the goal is to
keep the probability of false rejection below $\alpha $ rather than to make
it close to $\alpha $\textquotedblright . We note that, in our experiments,
all the empirical sizes are controlled at the given significance level, even
in the case of small sample (very few exceptions are encountered when $m$ is
large). Specifically, the empirical sizes are closer to their theoretical
significance levels when using the partial-sum method with $\eta <0.5$, and
when using the worst-case method. The latter finding is interesting, since
convergence to the extreme value distribution is notoriously slow (%
\citealp{hall1979rate}). When $\eta =0.5$, the empirical sizes are usually
smaller than the significance levels $\alpha $, as also noted in %
\citet{gombay}. Similarly, when $\eta >0.5$, the empirical sizes also tends
to be smaller than the theoretical level $\alpha $, mainly because in this
case (similarly to the Darling-Erd\H{o}s case where $\eta =0.5$) the
effective sample size that determines the asymptotic distribution is smaller
than $T_{m}$; similar results have also been observed in \citet{ht2021} in
the context of in-sample changepoint detection.

We now study the power of our procedure to detect changepoints. As a first
alternative, we consider (\ref{b1}) - i.e. the case where the loadings
change after a change point located at $t^{\ast }=0.5T$
\[
X_{t}=\left\{ \begin{matrix} & RF_{t}C^{\prime }+E_{t}, & 1\leq t\leq
t^{\ast }, \\ & R_{new}F_{t}C^{\prime }+E_{t}, & t^{\ast }+1\leq t\leq
T,\end{matrix}\right.
\]%
where $R_{new}$ is regenerated after time point $t^{\ast }+1$, also with
\textit{i.i.d.} entries from $\mathcal{U}(-\sqrt{3},\sqrt{3})$. That is, the
loading space changes after the change point. We would like to point out
that, in this set of experiments and in all other experiments, the empirical
rejection frequencies under alternatives are all equal to 1 - in essence,
this entails that our procedures will always, eventually, find evidence of a
changepoint if present. Median detection delays are reported in Table \ref%
{tab3}: based on those results, we conclude that our procedures tend to have
short detection delays in all scenarios considered, even when the
cross-sectional dimensions $p_{1},p_{2}$ are small, offering accurate and
early detection. Results improve as $p_{2}$ increases, and they also seem to
marginally worsen as $m$ increases. Both findings corroborate our
conclusions in Section \ref{delay}. Interestingly, detection delays are
shorter, albeit marginally, when using monitoring schemes based on the
worst-case scenario, making the case for this methodology.

\begin{table*}[tbph]
\caption{Median delays for detecting change point under \protect\ref{b1}
(loading space changes) over 1000 replications. $T=200$, $k_{1}=k_{2}=3$, $%
\protect\epsilon =0.05$.}
\label{tab3}
\begin{center}
{\small \ \addtolength{\tabcolsep}{0pt} \renewcommand{\arraystretch}{1.2}
\scalebox{0.75}{ 	
			\begin{tabular*}{21cm}{ccccccccccccccc}
				\toprule[1.2pt]
				&&&\multicolumn{6}{l}{$\alpha=0.05$}	&\multicolumn{6}{l}{$\alpha=0.10$}\\\cmidrule(lr){4-9}\cmidrule(lr){10-15}
				&&&\multicolumn{5}{l}{Partial-sum}&Worst&\multicolumn{5}{l}{Partial-sum}&Worst\\\cmidrule(lr){4-8}\cmidrule(lr){10-14}
				$m$&$p_1$&$p_2$&$\eta=0$&$\eta=0.25$&$\eta=0.5$&$\eta=0.65$&$\eta=0.75$	&case&$\eta=0$&$\eta=0.25$&$\eta=0.5$&$\eta=0.65$&$\eta=0.75$	&case	\\\midrule[1.2pt]
50&50&20&5&5&5&5&5&4&5&5&5&5&5&4
\\
50&50&50&3&3&3&3&3&2&3&3&3&3&3&2
\\
50&50&80&3&3&3&3&3&2&3&3&3&3&3&2
\\
50&80&20&5&5&5&5&5&4&5&5&5&5&5&3
\\
50&80&50&3&3&3&3&3&2&3&3&3&3&3&2
\\
50&80&80&2.5&2&2&2&3&2&2&2&2&2&2&2
\\
50&100&20&5&5&5&5&5&4&5&5&5&5&5&4
\\
50&100&50&3&3&3&3&3&2&3&3&3&3&3&2
\\
50&100&80&3&2&2&2&3&2&2&2&2&2&2&2
\\\midrule[1.2pt]
80&50&20&6&6&6&6&6&5&6&5&5&5&5&4
\\
80&50&50&5&5&4&4&5&4&5&4&4&4&4&4
\\
80&50&80&5&5&4&4&5&4&5&4&4&4&4&4
\\
80&80&20&6&6&6&6&6&5&6&6&5&5&6&4
\\
80&80&50&4&4&4&4&4&3&4&4&3&3&3&3
\\
80&80&80&3&3&3&3&3&2&3&3&3&3&3&2
\\
80&100&20&6&6&6&6&6&5&6&6&6&6&6&5
\\
80&100&50&4&4&4&4&4&3&4&4&3&3&4&3
\\
80&100&80&3&3&3&3&3&2&3&3&3&3&3&2
\\\midrule[1.2pt]
100&50&20&7&6&5&5&5&5&6&6&5&5&5&5
\\
100&50&50&6&5&5&5&5&5&6&5&5&5&4&5
\\
100&50&80&6&5&5&4&4&4&6&5&5&4&4&4
\\
100&80&20&7&6&6&5&5&5&7&6&5&5&5&5
\\
100&80&50&4&4&3&4&4&3&4&4&3&4&4&3
\\
100&80&80&4&3&3&4&4&3&4&3&3&4&4&3
\\
100&100&20&7&6&6&5&5&5&7&6&6&5&5&5
\\
100&100&50&4&4&3&4&4&3&4&4&3&4&4&3
\\
100&100&80&3&3&3&4&4&3&3&3&3&4&4&2.5\\
				\bottomrule[1.2pt]		
		\end{tabular*}} }
\end{center}
\end{table*}

We now turn to considering the alternative (\ref{b2}) - i.e. the case where
the number of common factors increases after the change point. Specifically,
we generate data according to
\[
X_{t}=\left\{ \begin{aligned} &R F_tC^\prime+E_t,& 1\le t\le \tau,\\
&RF_tC^\prime+\ell \tilde f_t C^\prime +E_t,&\tau+1\le t\le T, \end{aligned}%
\right. ,
\]%
where $\ell $ is a $p_{1}\times 1$ vector with entries from \textit{i.i.d.}
uniform distribution $\mathcal{U}(-\sqrt{3},\sqrt{3})$ and $\tilde{f}_{t}$
are the additional $1\times k_{2}$ factor scores from \textit{i.i.d.}
standard normal distributions. Therefore, after the change point, the number
of row factors grows to $4$, and we are monitoring according to the $4$-th
eigenvalue of the rolling column-column sample covariance matrix. The other
parameters are set exactly the same as those introduced above.

\begin{table*}[hbtp]
\caption{Median delays for detecting change point under \protect\ref{b2}
(number of row factors increases) over 1000 replications. $T=200$, $%
k_1=k_2=3 $, $\protect\epsilon=0.05$.}
\label{tab4}
\begin{center}
{\small \ \addtolength{\tabcolsep}{0pt} \renewcommand{\arraystretch}{1.2}
\scalebox{0.75}{ 	
			\begin{tabular*}{21cm}{ccccccccccccccc}
				\toprule[1.2pt]
				&&&\multicolumn{6}{l}{$\alpha=0.05$}	&\multicolumn{6}{l}{$\alpha=0.10$}\\\cmidrule(lr){4-9}\cmidrule(lr){10-15}
				&&&\multicolumn{5}{l}{Partial-sum}&Worst&\multicolumn{5}{l}{Partial-sum}&Worst\\\cmidrule(lr){4-8}\cmidrule(lr){10-14}
				$m$&$p_1$&$p_2$&$\eta=0$&$\eta=0.25$&$\eta=0.5$&$\eta=0.65$&$\eta=0.75$	&case&$\eta=0$&$\eta=0.25$&$\eta=0.5$&$\eta=0.65$&$\eta=0.75$	&case	\\\midrule[1.2pt]
50&50&20&8&8&8&8&8&6&8&8&8&8&8&6
\\
50&50&50&5&5&5&5&5&4&5&5&5&5&5&4
\\
50&50&80&5&5&5&5&5&4&5&5&5&5&5&4
\\
50&80&20&8&8&8&8&8&6&8&8&8&8&8&6
\\
50&80&50&5&5&5&5&6&4&5&5&5&5&5&4
\\
50&80&80&4&4&4&4&4&3&4&4&4&4&4&3
\\
50&100&20&8&8&8&8&8&6&8&8&8&8&8&6
\\
50&100&50&5&5&5&5&6&4&5&5&5&5&5&4
\\
50&100&80&4&4&4&4&4&3&4&4&4&4&4&3
\\\midrule[1.2pt]
80&50&20&10&9&9&9&9&8&9&9&9&9&9&7
\\
80&50&50&8&8&7&7&8&6&8&7&7&7&7&6
\\
80&50&80&8&7&7&7&7&6&8&7&7&7&7&6
\\
80&80&20&9.5&9&9&9&9&8&9&9&9&9&9&8
\\
80&80&50&6&6&6&6&6&5&6&6&6&6&6&5
\\
80&80&80&5&5&5&5&5&4&5&5&5&5&5&4
\\
80&100&20&10&9&9&9&9&8&9&9&9&9&9&8\\
80&100&50&6&6&6&6&6&5&6&6&6&6&6&5
\\
80&100&80&5&5&5&5&5&4&5&5&5&5&5&4
\\\midrule[1.2pt]
100&50&20&10&9&9&9&9&8&10&9&9&8&8&8\\
100&50&50&9&9&8&8&8&8&9&8&8&7&7&8
\\
100&50&80&9&8&8&8&8&7&9&8&8&7&7&7
\\
100&80&20&10&10&9&9&9&9&10&9&9&9&9&8\\
100&80&50&7&6&6&6&6&5&7&6&6&5&5&5
\\
100&80&80&6&6&5&5&5&5&6&6&5&5&5&5
\\
100&100&20&10&10&9&9&9&9&10&9&9&9&9&8\\
100&100&50&7&6&6&6&6&6&7&6&6&6&5&6
\\
100&100&80&6&5&5&4&4&4&6&5&5&4&4&4\\
				\bottomrule[1.2pt]		
		\end{tabular*}} }
\end{center}
\end{table*}

Median detection delays are reported in Table \ref{tab4}. The delays appear
to be slightly larger than those reported for alternative hypothesis (\ref%
{b1}) in Table \ref{tab3}. However, the monitoring procedures can still
provide accurate and timely detection when change occurs under (\ref{b2}),
even in small samples cases. Note that the impact of the sample sizes $m$, $%
p_{1}$ and $p_{2}$ is exactly the same as under (\ref{b1}).

\bigskip

In the Supplement, we report further simulations under different scenarios.
Specifically, in Tables \ref{tab5} and \ref{tab6}, we assess the robustness
of our methdologies in the presence of changes in $C$, noting that results
are virtually unchanged in that case. We assess the sensitivity of our
procedures to $T$\ and $k_{2}$, and to $\delta $, in Tables \ref{tab8} and %
\ref{tab9} respectively. In both cases, size control is (marginally)
affected, especially when varying $T$\ and $k_{2}$, whereas detection delays
are essentially the same.

\section{Empirical application\label{applic}}

We validate our methodology through two empirical applications: in Section %
\ref{empirical}, we consider a $10\times 10$ matrix of portfolio returns; in
Section \ref{mmi}, we consider an application to macro data.

\subsection{ Fama-French 100 portfolios\label{empirical}}

In this section, we illustrate the usefulness of our sequential monitoring
schemes through an application to financial data. We use the Fama and French
$10\times 10$ series, which has been considered in several applications in
the context of matrix factor models (see e.g. \citealp{wang2019factor}; %
\citealp{Yu2021Projected}; and \citealp{chenthreshold}). The dataset
comprises monthly market-adjusted return series, with portfolios being the
intersections of $10$ portfolios formed by size (market equity, ME) and $10$
portfolios formed by the ratio of book equity to market equity (BE/ME),
which leads to $10\times 10$ matrix-variate observations.\footnote{%
Data have been downloaded from %
\url{http://mba.tuck.dartmouth.edu/pages/faculty/ken.french/data_library.html}%
.} We collect portfolio series from January 1964 to December 2020, totalling
$684$ months. Missing values (missing rate is $0.25\%$) are inputed by
linear interpolation; subsequently, following the preprocessing procedures
in \citet{wang2019factor} and \citet{Yu2021Projected}, we subtract the
monthly market excess returns and standardise the series one by one.

\begin{figure}[!htb]
\centering
\begin{minipage}{.45\textwidth}
		\centering
		\includegraphics[width=7.5cm, height=6cm]{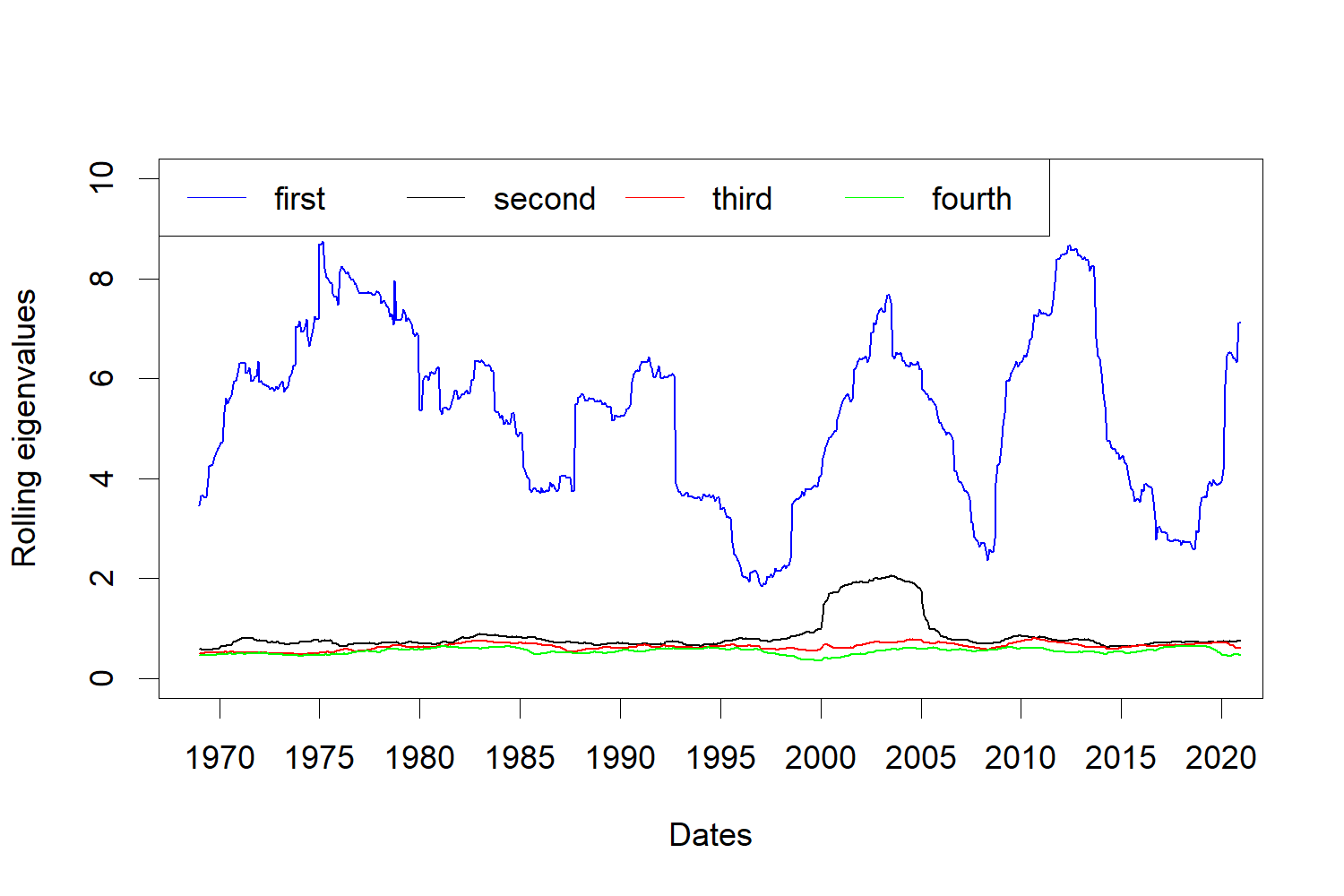}
	\end{minipage}
\begin{minipage}{0.45\textwidth}
		\centering
		\includegraphics[width=7.5cm, height=6cm]{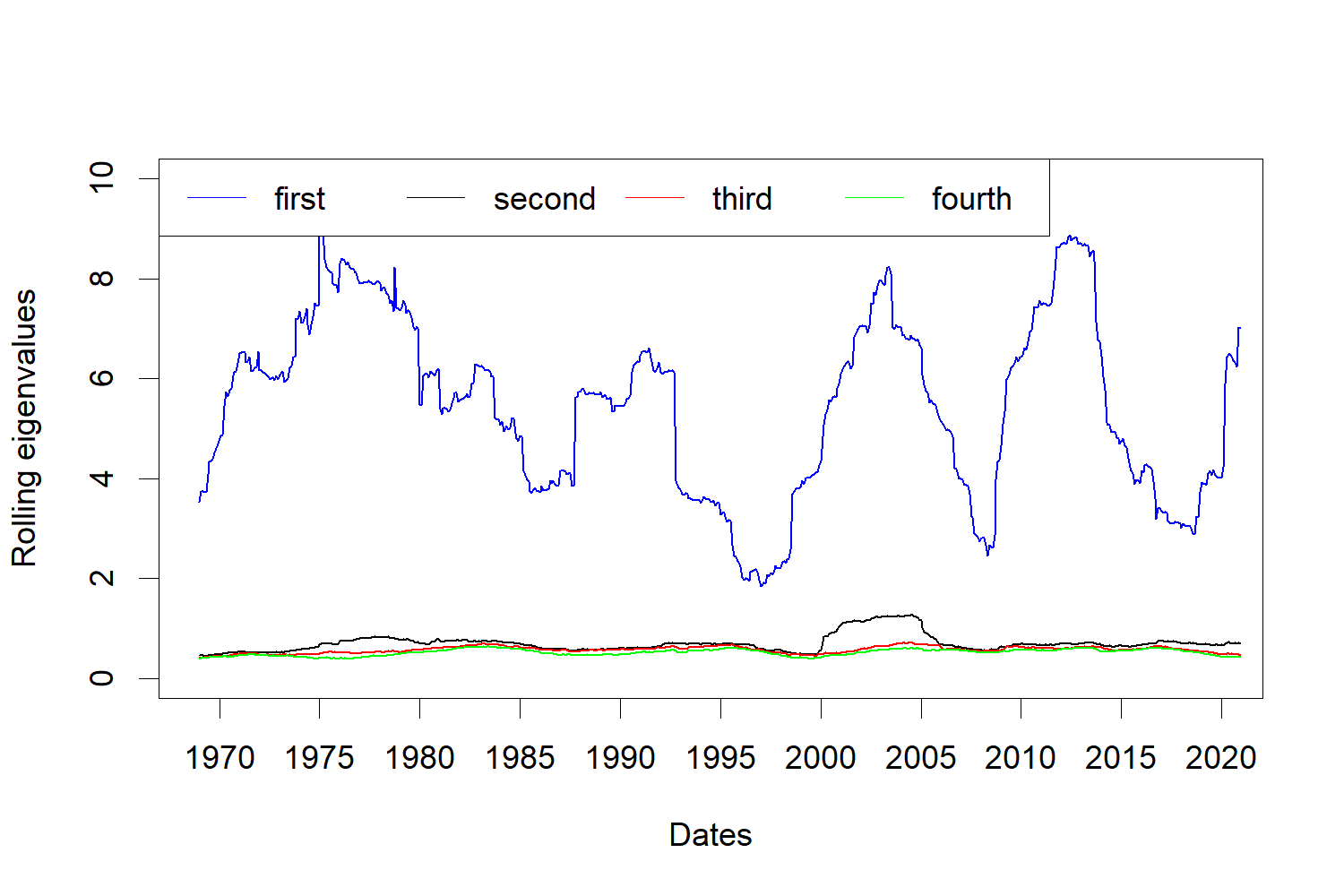}
	\end{minipage}
\caption{Eigenvalues of the rolling column-column and row-row sample
covariance matrices in the monitoring. Left: column-column (for $k_1$);
right: row-row (for $k_2$).}
\label{fig:eig}
\end{figure}

We implement our procedures at significance level $\alpha =0.05$, using, as
sampling period, $m=60$ ($5$ years) for all the monitoring schemes. The
first step of our analysis is to determine the number of row factor $k_{1}$.
It turns out the numbers of row and column factors are both found to be
equal to $1$, both when using the iterative algorithm in \cite%
{Yu2021Projected}, and the randomized testing procedure in \cite{hkyt}.
Therefore, we start with $k_{1}=k_{2}=1$ for the monitoring procedure.
Figure \ref{fig:eig} illustrates the four leading eigenvalues of the rolling
column-column and row-row projected sample covariance matrices. As shown in
Figure \ref{fig:eig}, the first largest eigenvalue is always much larger
than the remaining ones.

Our monitoring scheme detects, simultaneously, both changepoints in the
factor loading spaces, and in the number of factor. As a leading example, we
consider changes in the row factors, using the partial-sum testing
statistics with $\eta =0$. As $p_{1}$ and $p_{2}$ are much smaller than
those in our simulations, we use a larger $\delta =0.4$ when rescaling the
eigenvalues, to reduce the effects of the noises, but still use the
transformation function $g(x)=[\exp (x)-1]^{4}$. By way of robustness check,
we repeat the testing procedures $100$ times, and declare a change point
only when over $80\%$ of replications reject the corresponding null
hypothesis. The change point is viewed to be located at the median of values
from $100$ replications. We monitor until a changepoint is found, and,
whenever a change point is detected, e.g., at $\hat{\tau}_{j}$, until the
length of the remaining portion of the sample is smaller than $m$.

In the first step, we monitor the second largest eigenvalue of the rolling
column-column projected sample covariance matrix from $t=m+1$ to the end of
the sequence. The testing procedure outputs a change point in March 2003. We
also monitor the first eigenvalue, using the procedure suggested in Section %
\ref{further} to detect other types of change. In this case, we set $\tilde{g%
}(x)=1/g(x)$, finding no evidence of changepoints. Combining the results
illustrated by Figure \ref{fig:eig}, it is apparent that a new factor
emerges at $\tau _{1}=\text{\textquotedblleft March 2003\textquotedblright ,
with the second }$largest eigenvalue increasing significantly during this
period. Subsequently, we restart the monitoring process from $t=\tau _{1}+1$%
, using $k_{1}=2$, monitoring the third and second largest eigenvalues.
Interestingly, using the procedure proposed Section \ref{further}, we find
evidence that the number of common factors drops to $1$ at $\tau _{2}$=
\textquotedblleft June 2008\textquotedblright. This date is highly
suggestive, as it points towards concluding that, during the last global
financial crisis, the number of common factors reduced. This finding can be
read in conjunction with the empirical analysis in \citet{mst}, where the
number of common factors (albeit in a vector factor model with a threshold
structure) is found to decrease during downside market regimes; as the
authors put it, \textquotedblleft diversification disappears when needed
most\textquotedblright.

The change point locations for the row factors are plotted in the left panel
of Figure \ref{fig:change}, together with the maximum, median, and minimum
returns series after standardization. The red vertical line indicates the
time point when a new factor occurs, while the blue one indicates the time
point when a factor disappears. The estimated locations are very close to
the starting and ending points when the second largest eigenvalue increases
and decreases in Figure \ref{fig:eig}. Also, as expected, the return series
undergo a significant drop at the end of this period, say, the year 2008.

\begin{figure}[!htb]
\centering
\begin{minipage}{.45\textwidth}
		\centering
		\includegraphics[width=7.5cm, height=6cm]{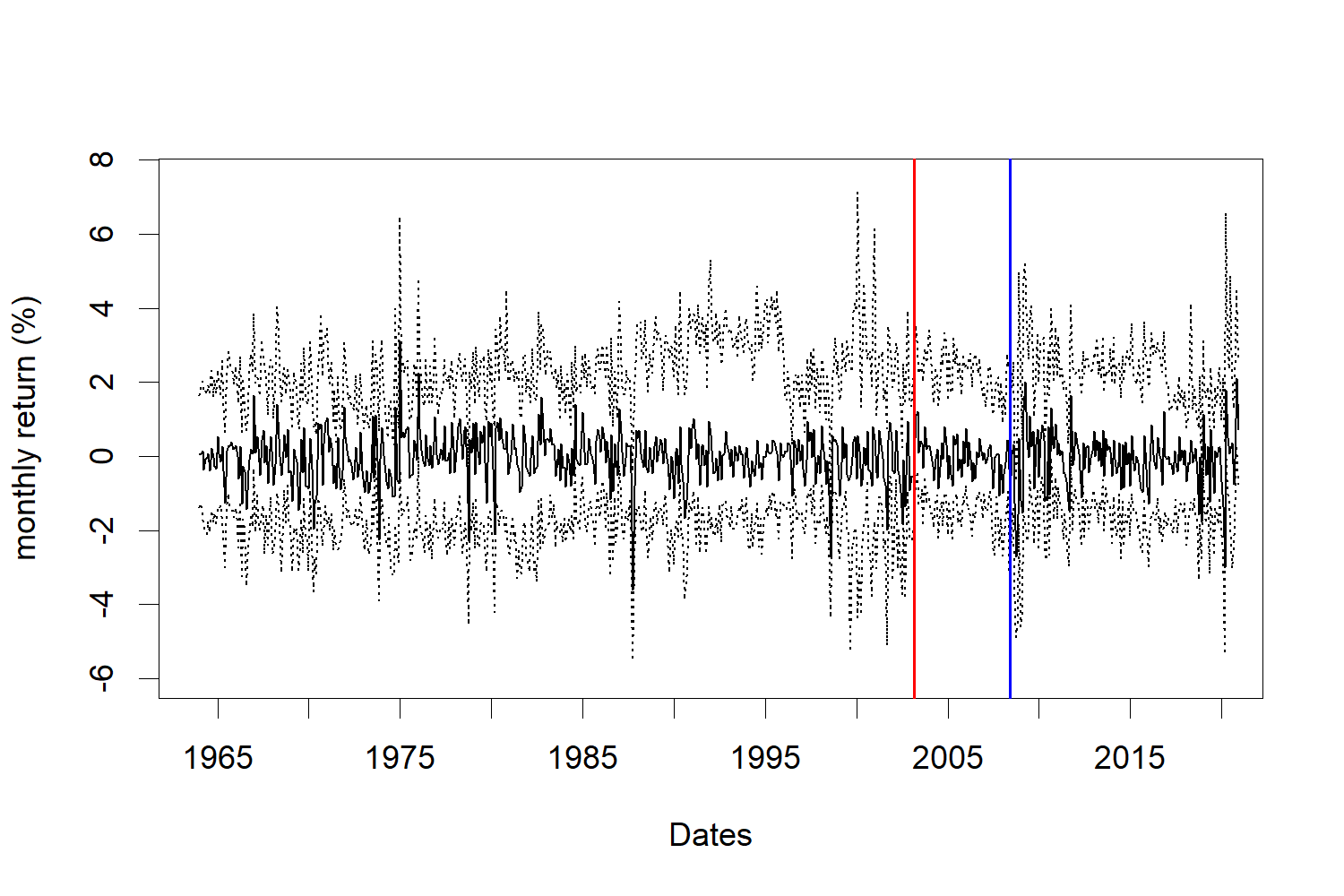}
	\end{minipage}
\begin{minipage}{0.45\textwidth}
		\centering
		\includegraphics[width=7.5cm, height=6cm]{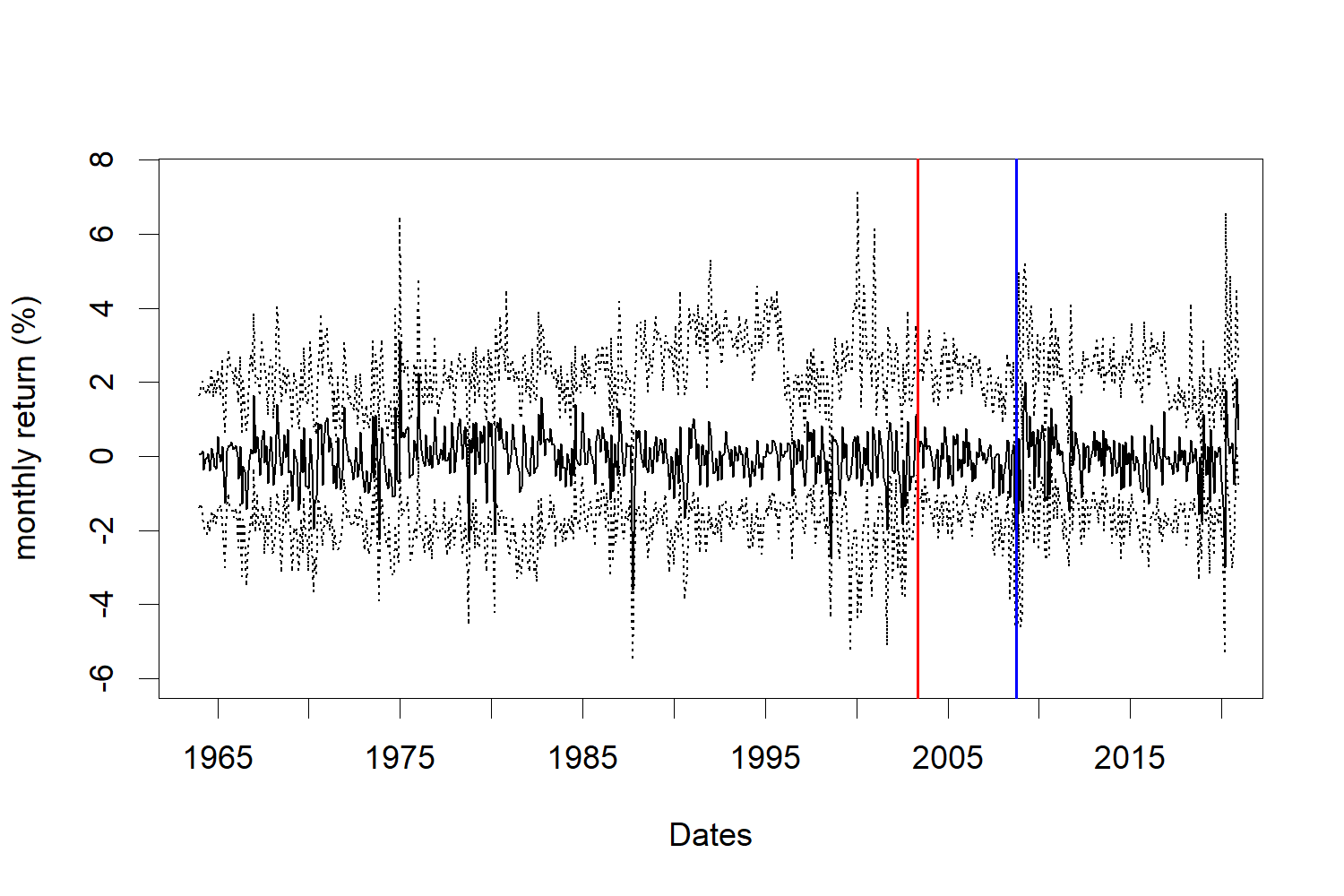}
	\end{minipage}
\caption{Illustration of the change points for the Fama-French 100 portfolio
series by partial-sum monitoring procedure with $\protect\eta=0$. Left: row
factors; right: column factors. In each panel, the dash lines are the
maximum and minimum return series, while the real line is the median return.
The red and blue vertical lines are the change point locations when the
number of factors increases and decreases, respectively. }
\label{fig:change}
\end{figure}

As far as the column factors are concerned, we can run our procedure in
parallel to detect any changes. However, by Figure \ref{fig:eig}, the second
largest eigenvalue in the right panel grows, but not as significantly as in
the left panel. Indeed, the monitoring procedure fails to detect any change
point for column factors if we still use the same $g(x)$ and $\tilde{g}(x)$
as for the row factors, and therefore there seems to be no sufficient
evidence to assert that the column loadings change during the monitoring
period. Again for robustness, we slightly modified the function $g(x)$, to $%
g(x)=[\exp (x)-1]^{2}$ and $\tilde{g}(x)=1/g(x)$; in this case, we point out
that our monitoring procedures do find evidence of two changepoints, as
plotted in the right panel of Figure \ref{fig:change}, whose locations are
very close to those found for the row factors. This suggests that, if
changepoints in the column factor spaces do exist, they are less evident and
\textquotedblleft pervasive\textquotedblright\ than in the case of the row
factor space.

\begin{table*}[hbtp]
\caption{Locations of change points detected by different monitoring
procedures. The numbers in bracket indicates the type of change for the
factor numbers, where ``1'' indicates new factor emerges; ``-1" indicates a
factor disappears while ``NA'' indicates no change.}
\label{tab:change ff}
\begin{center}
{\small \ \addtolength{\tabcolsep}{0pt} \renewcommand{\arraystretch}{1.2}
\scalebox{0.75}{ 	
				\begin{tabular*}{21cm}{ccccccccccccccc}
					\toprule[1.2pt]
					&\multirow{2}{*}{Change points}&\multicolumn{5}{c}{Partia-sum}&Worst-case\\\cmidrule(lr){3-7}
					&&$\eta=0$&$\eta=0.25$&$\eta=0.5$&$\eta=0.65$&$\eta=0.75$&&\\\midrule[1.2pt]
					\multirow{2}{*}{Row factors}&First&March 2003(1)&March 2002(1)&December 2003(1)&NA&NA&NA\\
					&Second&June 2008(-1)&April 2007(-1)& January 2009(-1) &NA&NA&NA\\
					\bottomrule[1.2pt]	
					\multirow{2}{*}{Column factors}&First&May 2003(1)&December 2002(1)&NA&NA&NA&NA\\
					&Second&October 2008(-1)&February 2008(-1)& NA &NA&NA&NA\\
					\bottomrule[1.2pt]		
		\end{tabular*}} }
\end{center}
\end{table*}

We also repeated the above procedure using partial-sum statistics with
different $\eta $, and the worst-case scenario statistic. Whilst, as can be
expected, changepoints are not always detected by all procedures, it turns
out that at most two change points are detected with all our monitoring
schemes; we summarize our findings in Table \ref{tab:change ff}, from which
we can conclude the results are, broadly, not sensitive to the monitoring
procedures.

\subsection{Multinational macroeconomic indices\label{mmi}}

In this application, we study changepoint detection with macro data. We use
the same data set as in \cite{Yu2021Projected}, which contains quarterly
observations for $10$ macroeconomic indices over $8$ countries from 1988-Q1
to 2020-Q2.\footnote{%
The countries are the United States, the United Kingdom, Canada,
France,Germany, Norway, Australia and New Zealand. The macroeconomic indices
are from 4 groups, namely Consumer Price Index (CPI), Interest Rate (IR),
Production (PRO) and International Trade (IT). The data can be freely
downloaded from OECD data library.} Similar data sets have also been studied
in \cite{liu2019helping} and \cite{wang2019factor}, which involves
macroeconomic indices from more countries. We refer to \cite{Yu2021Projected}
for further details on the dataset, and the preprocessing steps.

The first step is to determine the numbers of row and column factors. In
Figure \ref{fig:eig2}, we also plot the leading five sample eigenvalues in
the rolling monitoring process as in the first real example. Using the
randomized testing procedures in \cite{hkyt} suggests $k_{1}=1$, $k_{2}=3$
or $k_{1}=2$, $k_{2}=4$, which essentially indicates that one row and one
column factor may be weaker than the others. Combining this information with
the eigenvalue gaps shown in Figure \ref{fig:eig2}, we take $k_{1}=1$ and $%
k_{2}=3$.

\begin{figure}[!htb]
\centering
\begin{minipage}{.45\textwidth}
		\centering
		\includegraphics[width=7.5cm, height=6cm]{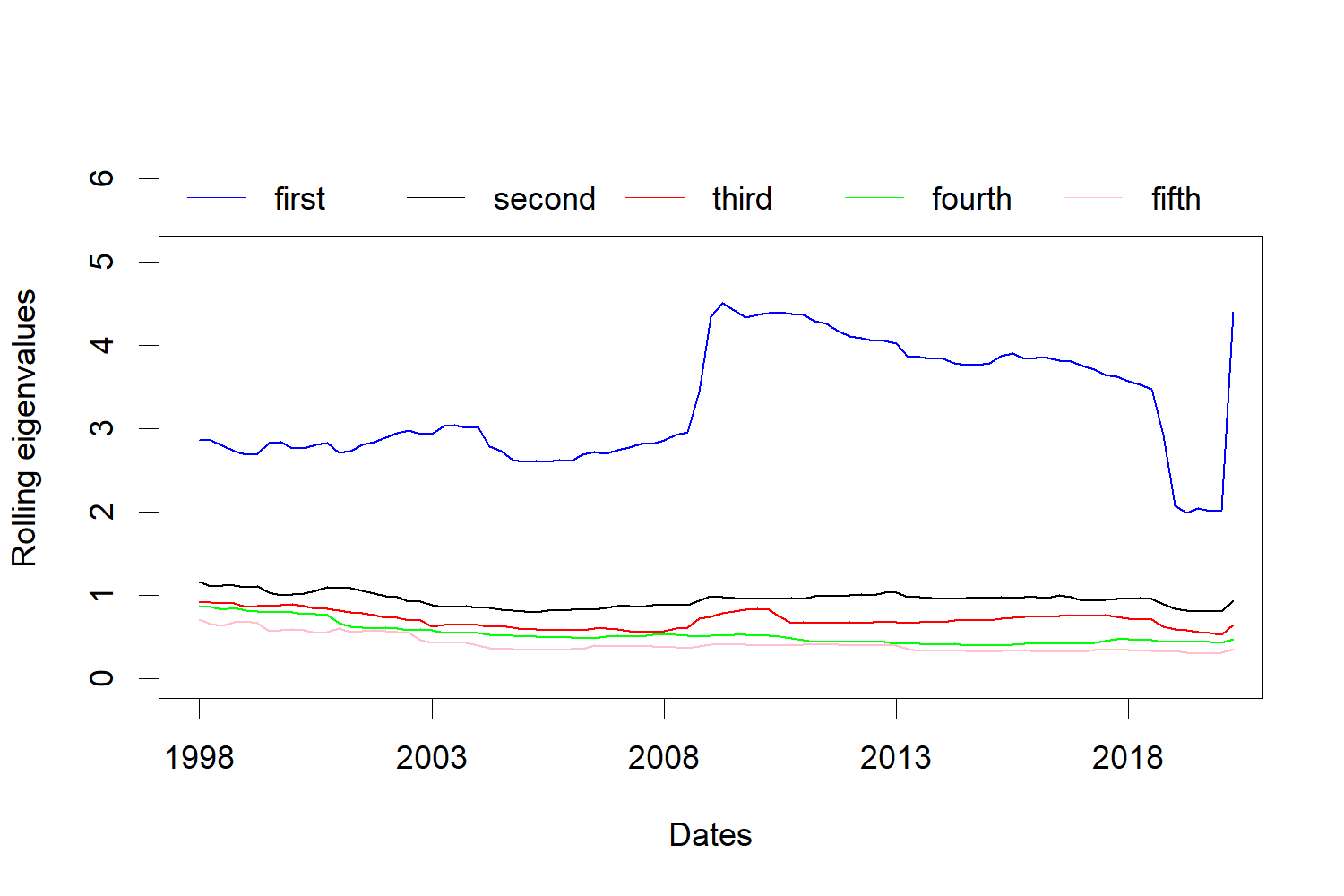}
	\end{minipage}
\begin{minipage}{0.45\textwidth}
		\centering
		\includegraphics[width=7.5cm, height=6cm]{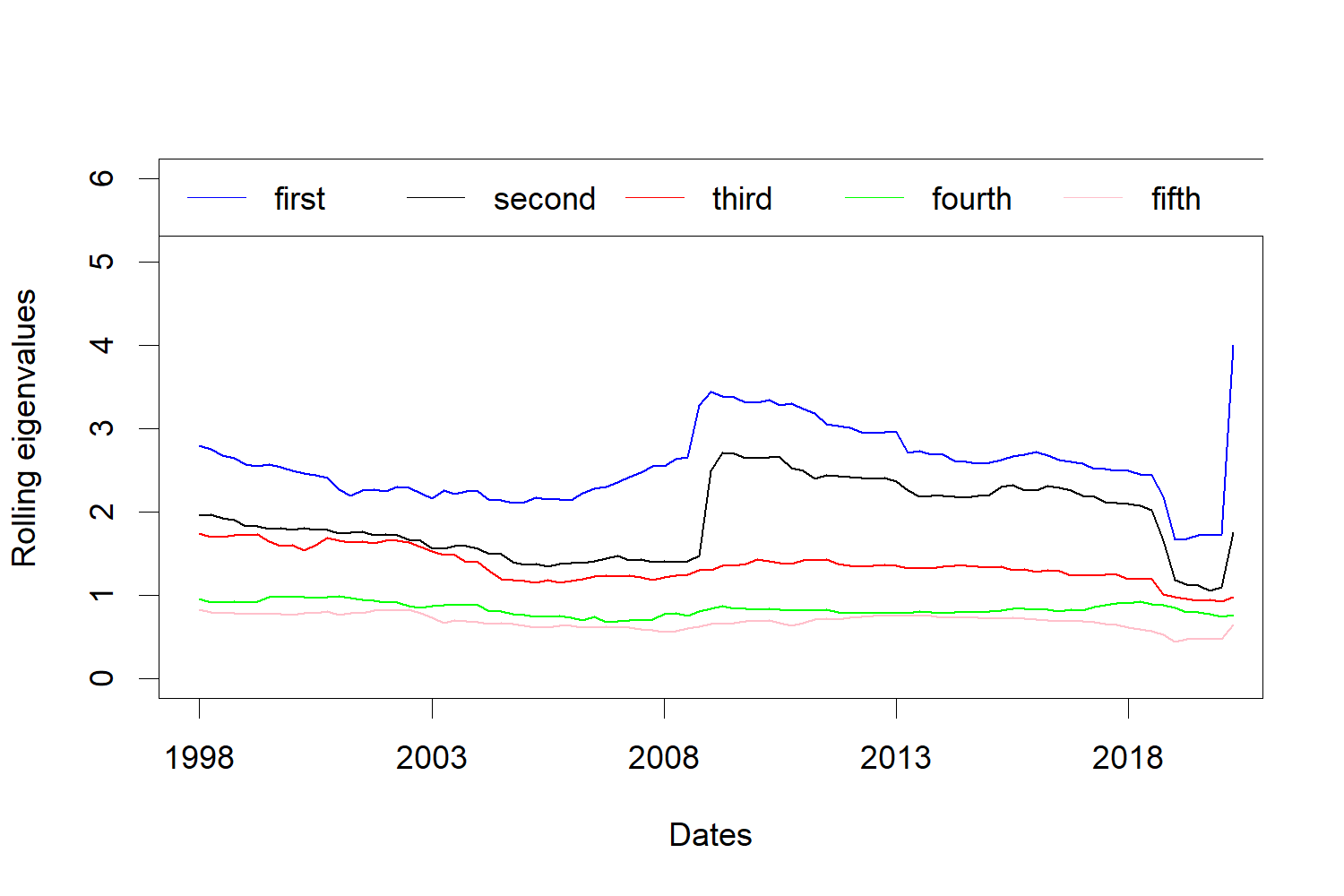}
	\end{minipage}
\caption{Eigenvalues of the rolling column-column and row-row sample
covariance matrices in the monitoring for macroeconomic indices. Left:
column-column (for $k_1$); right: row-row (for $k_2$).}
\label{fig:eig2}
\end{figure}

In the second step, we use same approach as in the previous section to
detect changepoints, i.e. we detect the change of factor loading spaces and
the change of factor numbers simultaneously. We use $m=40$ ($10$ years),
while setting the other tuning parameters the same as in the previous
section, considering that $p_{1}$ and $p_{2}$ are of comparable size. For
the partial-sum statistics with $\eta =0$, it turns out there are no change
points for the row factors, while one column factor disappears in 2004-Q3.
We plot the results in Figure \ref{fig:change macro}, together with the
median CPI series and median IR series. Interestingly, the location of the
change point closely matches the changepoint detected in the previous
example. On the other hand, by Figure \ref{fig:eig2}, the third column
eigenvalue (right panel) starts to decrease in the year 2004, and becomes
relatively smaller than the leading two eigenvalues after the year 2008.
Moreover, the CPI series becomes unstable after the year 2002, while the
variance of IR series starts to decrease at the same time. Factoring in the
possible delay in detection, this suggests a possible explanation for the
2004-Q3 break.

\begin{figure}[!htb]
\centering
\includegraphics[width=15cm, height=10cm]{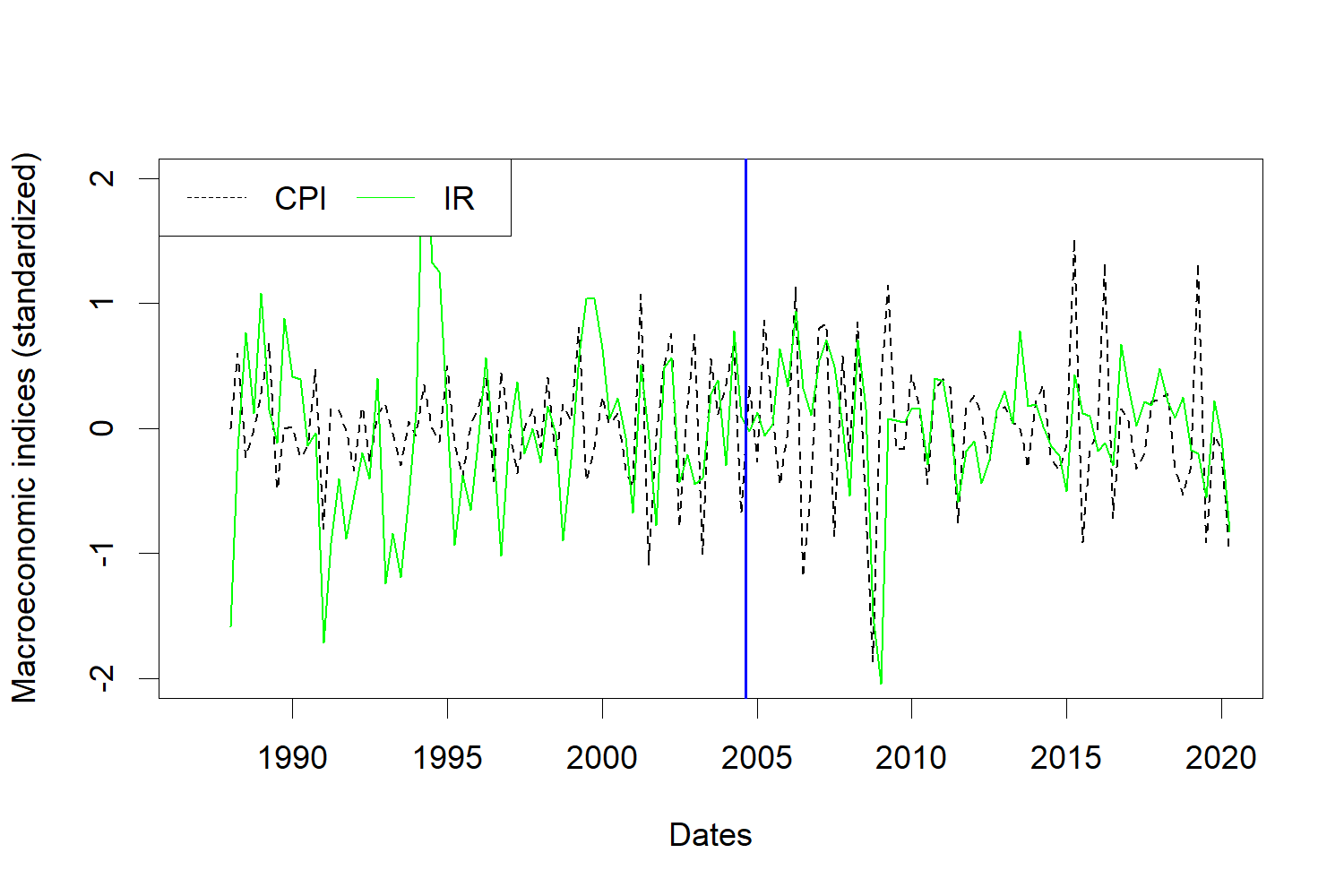}
\caption{Illustration of the change point of column factors for
macroeconomic indices. Black and green lines indicate the median CPI and
interest rate series. Blue vertical line indicates the change point when a
column factor disappears. }
\label{fig:change macro}
\end{figure}

For other choices of $\eta $ and the worst-case statistic, we summarize the
result in Table \ref{tab:change macro}. Interestingly, the changepoints we
detect in this example are all roughly close to the two change points
detected in the previous example. We also remark that in this macroeconomic
example, the length of series is much smaller than that in the last one.
Actually, for the testing statistics with $\eta >0.5$, the effective sample
size in the monitoring process is $\ln (T_{m})\leq \ln 90$, which is very
small. Hence, in this example we prefer to trust the results from testing
procedures with smaller $\eta $ or the worst-case testing procedure.

\begin{table*}[hbtp]
\caption{Estimated change points locations by different monitoring
procedures. The numbers -1 in brackets indicates a factor disappears. ``NA''
means no change point.}
\label{tab:change macro}
\begin{center}
{\small \ \addtolength{\tabcolsep}{0pt} \renewcommand{\arraystretch}{1.2}
\scalebox{0.75}{ 	
				\begin{tabular*}{21cm}{ccccccccccccccc}
					\toprule[1.2pt]
					&\multirow{2}{*}{Change points}&\multicolumn{5}{c}{Partia-sum}&Worst-case\\\cmidrule(lr){3-7}
					&&$\eta=0$&$\eta=0.25$&$\eta=0.5$&$\eta=0.65$&$\eta=0.75$&&\\\midrule[1.2pt]
					Row factor&First&NA&NA&NA&NA&NA&NA\\
					\bottomrule[1.2pt]	
					Column factor&First&2004-Q3(-1)&2003-Q1(-1)&2000-Q3(-1)&1999-Q2(-1)&1998-Q3(-1)&2008-Q3(-1)\\
					\bottomrule[1.2pt]		
		\end{tabular*}} }
\end{center}
\end{table*}

\section{Conclusions\label{conclusion}}

In this paper, we have proposed several schemes for the online detection of
changes in the latent factor structures of a two-way, matrix factor model.
Our approach is based on noting that many instances of changepoint can be
represented as a change in the dimension of the factor spaces. Hence, in
order to detect changes, we use the eigenvalues of the projected second
moment matrices, which diverge with the cross-sectional sample or not
according to the number of common factors. Having a spiked behaviour in an
eigenvalue which was bounded during a training period, or observing the
vanishing of such a spiked behaviour, point towards the presence of a break.
Since we do not know the limiting distribution of the estimated eigenvalues,
which is likely to be very challenging to derive especially in the case
where the eigenvalue is not spiked, we randomise the sequence of the
estimated eigenvalue by perturbing it with an \textit{i.i.d.}, standard
normal sequence. Thence, we are able to propose two families of sequential
procedures, one more \textquotedblleft classical\textquotedblright\ and
based on the fluctuations of partial sums, and another one, completely
novel, based on the extreme value behaviour of the perturbed sequence of
estimated eigenvalues. Our approach has several distinctive advantages: it
is easy to implement, it requires much less tuning than competing approaches
(see e.g. \citealp{BT1}), and it works very well in simulations, offering
good size control and fast changepoint detection. Indeed, whilst approaches
based on partial sums work very well, the methodology based on the extreme
value, worst-case scenario delivers an even superior performance, thus
suggesting that our methodology could be very promising even in other, very
different contexts.

{\small
\bibliographystyle{chicago}
\bibliography{LTbiblio}
}

\setcounter{equation}{0} \setcounter{lemma}{0} \renewcommand{\thelemma}{A.%
\arabic{lemma}} \renewcommand{\theequation}{A.\arabic{equation}} \appendix

\setcounter{section}{0} \setcounter{subsection}{-1}

\section{Further assumptions\label{assumptions}}

The following assumptions are borrowed from the paper by \cite%
{Yu2021Projected}, to which we refer for detailed explanations and
discussions.

\begin{assumpB}
\label{factors} (i) (a) $E(F_{t})=0$, and (b) $E\Vert F_{t}\Vert
^{4+\epsilon }\leq c_{0}$, for some $\epsilon >0$; (ii)
\begin{equation}
\frac{1}{T}\sum_{t=1}^{T}F_{t}F_{t}^{\prime }\overset{a.s.}{\rightarrow }{%
\Sigma}_{1}\text{ and }\frac{1}{T}\sum_{t=1}^{T}F_{t}^{\prime }F_{t}\overset{%
a.s.}{\rightarrow }\Sigma_{2},  \label{equ:covariance}
\end{equation}%
where $\Sigma_{i}$ is a $k_{i}\times k_{i}$ positive definite matrix with
distinct eigenvalues and spectral decomposition $\Sigma_{i}=\Gamma_{i}%
\Lambda_{i}\Gamma_{i}^{\prime}$, $i=1,2$. The factor numbers $k_{1}$ and $%
k_{2} $ are fixed as $\min \{T,p_{1},p_{2}\}\rightarrow \infty $; (iii) it
holds that, for all $1\leq h_{1},l_{1}\leq k_{1}$ and $1\leq h_{2},l_{2}\leq
k_{2}$%
\[
E\max_{1\leq \widetilde{t}\leq T}\left( \sum_{t=1}^{\widetilde{t}}\left(
F_{h_{1}h_{2},t}F_{l_{1}l_{2},t}-E\left(
F_{h_{1}h_{2},t}F_{l_{1}l_{2},t}\right) \right) \right) ^{2}\leq c_{0}T.
\]
\end{assumpB}

\begin{assumpB}
\label{loading} (i) $\Vert R\Vert_{\max}\leq c_{0}$, and $\Vert
C\Vert_{\max} \leq c_{1}$; (ii) as $\min \{p_{1},p_{2}\}\rightarrow \infty $%
, $\Vert p_{1}^{-1}R^{\top }R-I_{k_{1}}\Vert \rightarrow 0$ and $\Vert
p_{2}^{-1}C^{\top }C-I_{k_{2}}\Vert \rightarrow 0$.
\end{assumpB}

Assumptions \ref{factors} and \ref{loading} are standard in large factor
models, and we refer, for example, to \citet{fan2021}. In Assumption \ref%
{factors}\textit{(i)}(b), note the (mild) strengthening of the customarily
assumed fourth moment existence condition on $F_{t}$ - this is required in
order to prove our results, which rely on almost sure rates. Similarly, the
maximal inequality in part \textit{(iii)} of the assumption is usually not
considered in the literature, and it can be derived from more primitive
dependence assumptions: for example, it can be shown to hold under various
mixing conditions (see e.g. \citealp{rio1995maximal}; and %
\citealp{shao1995maximal}), and for the very general class of decomposable
Bernoulli shifts (see e.g. \citealp{berkeshormann}; \citealp{linliu}; and %
\citealp{BT2}). Finally, we point out that, according to Assumption \ref%
{loading}, the common factors are pervasive. Extensions to the case of
\textquotedblleft weak\textquotedblright\ factors go beyond the scope of
this paper, but are in principle possible.

\begin{assumpB}
\label{idiosyncratic} (i) (a) $E(e_{ij,t})=0$, and (b) $E(e_{ij,t}^{8})\leq
c_{0}$; (ii) for all $1\leq t\leq T$, $1\leq i\leq p_{1}$ and $1\leq j\leq
p_{2}$,
\[
(\text{a}).\sum_{s=1}^{T}\sum_{l=1}^{p_{1}}%
\sum_{h=1}^{p_{2}}|E(e_{ij,t}e_{lh,s})|\leq c_{0},\quad (\text{b}%
).\sum_{l=1}^{p_{1}}\sum_{h=1}^{p_{2}}|{E}(e_{lj,t}e_{ih,t})|\leq c_{0};
\]%
(iii) for all $1\leq t\leq T$, $1\leq i\leq p_{1}$ and $1\leq j\leq p_{2}$,%
\[
\begin{array}{cl}
\left( \text{a}\right) . & \sum_{s=1}^{T}\sum_{l_{2}=1}^{p_{1}}%
\sum_{h=1}^{p_{2}}\left\vert
Cov(e_{ij,t}e_{l_{1}j,t},e_{ih,s}e_{l_{2}h,s})\right\vert \leq c_{0}, \\
& \sum_{s=1}^{T}\sum_{l=1}^{p_{1}}\sum_{h_{2}=1}^{p_{2}}\left\vert
Cov(e_{ij,t}e_{ih_1,t},e_{lj,s}e_{lh_2,s})\right\vert \leq c_{0}, \\
& \sum_{s=1}^{T}\sum_{l=1}^{p_{1}} \sum_{h=1}^{p_{2}}\left\vert
Cov(e_{ij,t}^2,e_{lh,s}^2)\right\vert \leq c_{0}, \\
\left( \text{b}\right) . & \sum_{s=1}^{T}\sum_{l_{2}=1}^{p_{1}}%
\sum_{h_{2}=1}^{p_{2}}\left\vert
Cov(e_{ij,t}e_{l_{1}h_{1},t},e_{ij,s}e_{l_{2}h_{2},s})+Cov(e_{l_{1}j,t}e_{ih_{1},t},e_{l_{2}j,s}e_{ih_{2},s})\right\vert \leq c_{0},%
\end{array}%
\]%
{\small (iv) it holds that }$\lambda _{\min }\left[ E\left( \frac{1}{p_{2}T}%
\sum_{t=1}^{T}E_{t}E_{t}^{\prime }\right) \right] >0$ and $\lambda _{\min }%
\left[ E\left( \frac{1}{p_{1}T} \sum_{t=1}^{T}E_{t}^{\prime}E_{t}\right) %
\right] >0$.
\end{assumpB}

Assumption \ref{idiosyncratic} ensures the (cross-sectional and time series)
summability of the idiosyncratic terms $E_{t}$. The assumption requires weak
dependence in both the space and time domains; in principle, it is possible
to show that this assumption is satisfied for many dependence assumptions
and data generating processes. The assumption can be read in conjunction
with the paper by \citet{wang2019factor}, where $E_{t}$ is assumed to be
white noise, which can be viewed as overly restrictive.

\begin{assumpB}
\label{depFE} (i) For any deterministic vectors ${v}$ and ${w}$ satisfying $%
\Vert {v}\Vert =1$ and $\Vert {w}\Vert =1$ with suitable dimensions,
\[
{E}\bigg\|\frac{1}{\sqrt{T}}\sum_{t=1}^{T}F_{t}({v}^{\prime}E_{t}{w})\bigg\|%
^{2}\leq c_{0};
\]%
(ii) for all $1\leq i,l_{1}\leq p_{1}$ and $1\leq j,h_{1}\leq p_{2}$,{\small
\[
\begin{split}
& (\text{a}).\Big\|\sum_{h=1}^{p_{2}}{E}(\bar{\zeta}_{ij}\otimes \bar{\zeta}%
_{ih})\Big\|_{\max }\leq c_{0},\quad \Big\|\sum_{l=1}^{p_{1}}{E}(\bar{\zeta}%
_{ij}\otimes \bar{\zeta}_{lj})\Big\|_{\max }\leq c_{0}, \\
& (\text{b}).\Big\|\sum_{l=1}^{p_{1}}\sum_{h_{2}=1}^{p_{2}}Cov(\bar{\zeta}%
_{ij}\otimes \bar{\zeta}_{ih_{1}},\bar{\zeta}_{lj}\otimes \bar{\zeta}%
_{lh_{2}})\Big\|_{\max }\leq c_{0},\Big\|\sum_{l_{2}=1}^{p_{1}}%
\sum_{h=1}^{p_{2}}Cov(\bar{\zeta}_{ij}\otimes \bar{\zeta}_{l_{1}j},\bar{\zeta%
}_{ih}\otimes \bar{\zeta}_{l_{2}h})\Big\|_{\max }\leq c_{0},
\end{split}%
\]%
where $\bar{\zeta}_{ij}=\text{Vec}(\sum_{t=1}^{T}F_{t}e_{ij,t}/\sqrt{T})$. }
\end{assumpB}

According to Assumption \ref{depFE}, the common factors $F_{t}$ and the
errors $E_{t}$ can be weakly correlated. The assumption holds under the more
restrictive case that $\{F_{t}\}$ and $\{E_{t}\}$ are two mutually
independent groups.

\setcounter{equation}{0} \setcounter{lemma}{0} \renewcommand{\thelemma}{B.%
\arabic{lemma}} \renewcommand{\theequation}{B.\arabic{equation}}

\subsection{Technical lemmas and proofs\label{proofs}}

\begin{lemma}
\label{theorem:tildeM1 copy(1)} We assume that Assumptions \ref{factors}-\ref%
{depFE} are satisfied. Then it holds that there exist a triplet of random
variables $\left( p_{1,0},p_{2,0},m_{0}\right) $ such that

\begin{description}
\item[(i)] under (\ref{b1})
\[
\widehat{\lambda }_{k_{1}+1,\tau }\left\{
\begin{array}{ll}
\leq c_{0} & \text{for }\tau \leq t^{\ast } \\
\geq c_{1}\min \left( \frac{\tau -t^{\ast }}{m},\frac{m+t^{\ast }-\tau }{m}%
\right) p_{1} & \text{for }t^{\ast }<\tau <m+t^{\ast } \\
\leq c_{0} & \text{for }\tau \geq m+t^{\ast }%
\end{array}%
\right. ;
\]

\item[(ii)] under (\ref{b2})%
\[
\widehat{\lambda }_{k_{1}+1,\tau }\left\{
\begin{array}{ll}
\leq c_{0} & \text{for }\tau \leq t^{\ast } \\
\geq c_{1}\frac{\tau -t^{\ast }}{m}p_{1} & \text{for }t^{\ast }<\tau
<m+t^{\ast } \\
\geq c_{0}p_{1} & \text{for }\tau \geq m+t^{\ast }%
\end{array}%
\right. .
\]
\end{description}

\begin{proof}
The proof of the lemma is based on very similar passages as the proof of
Lemma 1 in \citet{BT1}, and thus we report only the main passages. We begin
with part \textit{(ii)} of the lemma, focusing on the case $t^{\ast }<\tau
<m+t^{\ast }$. Note that%
\begin{eqnarray*}
\widehat{\lambda }_{k_{1}+1,\tau } &=&\widehat{\lambda }_{k_{1}+1}\left(
\frac{1}{m}\sum_{t=\tau +1}^{m+t^{\ast }}Y_{t}Y_{t}^{\prime }+\frac{1}{m}%
\sum_{t=m+t^{\ast }}^{m+\tau }Y_{t}Y_{t}^{\prime }\right)  \\
&\geq &\widehat{\lambda }_{\min }\left( \frac{1}{m}\sum_{t=\tau
+1}^{m+t^{\ast }}Y_{t}Y_{t}^{\prime }\right) +\widehat{\lambda }%
_{k_{1}+1}\left( \frac{1}{m}\sum_{t=m+t^{\ast }}^{m+\tau }Y_{t}Y_{t}^{\prime
}\right)  \\
&\geq &\widehat{\lambda }_{k_{1}+1}\left( \frac{1}{m}\sum_{t=m+t^{\ast
}}^{m+\tau }Y_{t}Y_{t}^{\prime }\right) ,
\end{eqnarray*}%
by Weyl's inequality. Also%
\begin{equation*}
\widehat{\lambda }_{k_{1}+1}\left( \frac{1}{m}\sum_{t=m+t^{\ast }}^{m+\tau
}Y_{t}Y_{t}^{\prime }\right) =\frac{\tau -t^{\ast }}{m}\widehat{\lambda }%
_{k_{1}+1}\left( \frac{1}{m+\tau -t^{\ast }}\sum_{t=m+t^{\ast }}^{m+\tau
}Y_{t}Y_{t}^{\prime }\right) ;
\end{equation*}%
the fact that%
\begin{equation*}
\widehat{\lambda }_{k_{1}+1}\left( \frac{1}{\tau -t^{\ast }}%
\sum_{t=m+t^{\ast }}^{m+\tau }Y_{t}Y_{t}^{\prime }\right) \geq c_{0}p_{1},
\end{equation*}%
is an immediate consequence of Lemma \ref{theorem:tildeM1}. The rest of the
proof - for the cases $\tau \leq t^{\ast }$ and $\tau \geq m+t^{\ast }$\ -
follows immediately. We now turn to part (i) of the Lemma, and consider the
case where all loadings change for simplicity - i.e. $R_{0}$ is empty. In
the interval $t^{\ast }<\tau <m+t^{\ast }$ we can write%
\begin{equation*}
X_{t}=\left[
\begin{array}{cc}
R_{1} & R_{2}%
\end{array}%
\right] \left[
\begin{array}{cc}
F_{t}I\left( t<t^{\ast }\right)  & 0 \\
0 & F_{t}I\left( t\geq t^{\ast }\right)
\end{array}%
\right] \left[
\begin{array}{c}
C^{\prime } \\
C^{\prime }%
\end{array}%
\right] +E_{t}.
\end{equation*}%
The second moment matrix has a component given by%
\begin{equation*}
\left[
\begin{array}{cc}
\frac{1}{m}\sum_{t=\tau +1}^{m+t^{\ast }}R_{1}F_{t}I\left( t<t^{\ast
}\right) C^{\prime }\widehat{C}\widehat{C}^{\prime }CF_{t}^{\prime
}R_{1}^{\prime } & 0 \\
0 & \frac{1}{m}\sum_{t=m+t^{\ast }}^{m+\tau }R_{2}F_{t}I\left( t\geq t^{\ast
}\right) C^{\prime }\widehat{C}\widehat{C}^{\prime }CF_{t}^{\prime
}R_{2}^{\prime }%
\end{array}%
\right]
\end{equation*}%
The spectrum of this matrix is the union of the spectra of the two blocks,
and therefore, using Lemma \ref{theorem:tildeM1}, it follows that%
\begin{eqnarray*}
&&\widehat{\lambda }_{k_{1}+1}\left( \frac{1}{m}\sum_{t=m+t^{\ast }}^{m+\tau
}Y_{t}Y_{t}^{\prime }\right)  \\
&\geq &c_{0}\min \left\{ \widehat{\lambda }_{k_{1}}\left( \frac{1}{m}%
\sum_{t=\tau +1}^{m+t^{\ast }}R_{1}F_{t}I\left( t<t^{\ast }\right) C^{\prime
}\widehat{C}\widehat{C}^{\prime }CF_{t}^{\prime }R_{1}^{\prime }\right) ,%
\widehat{\lambda }_{k_{1}}\left( \frac{1}{m}\sum_{t=m+t^{\ast }}^{m+\tau
}R_{2}F_{t}I\left( t\geq t^{\ast }\right) C^{\prime }\widehat{C}\widehat{C}%
^{\prime }CF_{t}^{\prime }R_{2}^{\prime }\right) \right\}  \\
&\geq &c_{0}\min \left\{ \frac{m+t^{\ast }-\tau }{m},\frac{\tau -t^{\ast }}{m%
}\right\} ,
\end{eqnarray*}%
whence the desired result follows.
\end{proof}
\end{lemma}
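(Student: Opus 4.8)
The plan is to reduce everything to Lemma~\ref{theorem:tildeM1} applied to sub-windows of the rolling sample, isolating the spiked contributions by means of Weyl's inequality. The key observation is that the monitoring window $[\tau+1,m+\tau]$ can straddle the break point $m+t^\ast$: writing the rolling second moment matrix as a sum of a pre-break block (observations $t\le m+t^\ast$) and a post-break block (observations $t>m+t^\ast$), both blocks are positive semidefinite, so their eigenvalues can be recombined additively. Since Lemma~\ref{theorem:tildeM1} already encodes the spiked/non-spiked dichotomy together with the error from projecting through $\widetilde C$ and from the idiosyncratic term, the whole argument becomes a matter of bookkeeping the window lengths and invoking that lemma on each block.

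For part \textit{(ii)}, under (\ref{b2}) the post-break regime carries $k_1+c_3>k_1$ row factors. In the straddling case $t^\ast<\tau<m+t^\ast$ I would write
\[
\frac{1}{m}\sum_{t=\tau+1}^{m+\tau}Y_tY_t'=\underbrace{\frac{1}{m}\sum_{t=\tau+1}^{m+t^\ast}Y_tY_t'}_{=:A}+\underbrace{\frac{1}{m}\sum_{t=m+t^\ast+1}^{m+\tau}Y_tY_t'}_{=:B},
\]
and apply Weyl's inequality in the form $\lambda_{k_1+1}(A+B)\ge\lambda_{\min}(A)+\lambda_{k_1+1}(B)\ge\lambda_{k_1+1}(B)$, using $A\succeq0$. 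Rescaling $B=\frac{\tau-t^\ast}{m}\cdot\frac{1}{\tau-t^\ast}\sum_{t=m+t^\ast+1}^{m+\tau}Y_tY_t'$ turns the inner sum into a bona fide second moment matrix over a sub-sample lying entirely in the enlarged-factor regime; Lemma~\ref{theorem:tildeM1} then gives $\lambda_{k_1+1}$ of that inner matrix $=\Omega_{a.s.}(p_1)$, whence $\widehat\lambda_{k_1+1,\tau}\ge c_1\frac{\tau-t^\ast}{m}p_1$. The two boundary cases $\tau\le t^\ast$ and $\tau\ge m+t^\ast$ are immediate: the window sits in a single regime with $k_1$ (respectively $k_1+c_3$) factors, so Lemma~\ref{theorem:tildeM1} directly yields the non-spiked bound $\le c_0$ (respectively the spiked bound $\ge c_0p_1$).

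For part \textit{(i)} the number of factors does not change, so a single sub-window can never produce a spike in $\widehat\lambda_{k_1+1,\tau}$; the spike must come from the \emph{combination} of the two loading spaces. Reducing (as the statement allows) to the case where $R_0$ is empty, in the straddling window the signal part of $Y_t$ loads on $R_1$ before the break and on $R_2$ afterwards, so it can be written as $[\,R_1\ R_2\,]$ times a \emph{block-diagonal} inner matrix, the two diagonal blocks being the projected factor second moments over the pre- and post-break portions (the cross blocks vanish because each summand $Y_tY_t'$ involves a single time index). The spectrum of this signal matrix is therefore the union of the two blocks' spectra; by Lemma~\ref{theorem:tildeM1} each block contributes $k_1$ spiked eigenvalues of orders $\frac{m+t^\ast-\tau}{m}p_1$ and $\frac{\tau-t^\ast}{m}p_1$ respectively. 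Ranking the $2k_1$ spikes, the $(k_1+1)$-th largest is necessarily of the smaller of these two orders, i.e. $\min\!\big(\frac{m+t^\ast-\tau}{m},\frac{\tau-t^\ast}{m}\big)p_1$, which is the claimed bound; the boundary cases are handled exactly as in part \textit{(ii)}.

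The step I expect to be the main obstacle is the uniform control needed to invoke Lemma~\ref{theorem:tildeM1} on sub-windows whose length ($\tau-t^\ast$ or $m+t^\ast-\tau$) may be a vanishing fraction of $m$: one must argue that the almost-sure spiked/non-spiked rates hold simultaneously over all admissible window positions, which is precisely where the random threshold triplet $(p_{1,0},p_{2,0},m_0)$ in the statement enters. Two further points require care, both of which are, however, inherited from Lemma~\ref{theorem:tildeM1} rather than genuinely new: controlling the idiosyncratic and signal--noise cross terms so that they remain of order $o_{a.s.}(l_{p_1,p_2,m})$ relative to the spike, and, in part \textit{(i)}, checking that $R_1$ and $R_2$ span asymptotically distinct column spaces (so that the combined loadings yield $k_1+c_1$, rather than fewer, genuine spikes) --- this follows from the pervasiveness condition in Assumption~\ref{loading}.
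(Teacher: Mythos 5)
Your proposal is correct and follows essentially the same route as the paper's own proof: the identical Weyl-inequality decomposition of the straddling window into pre- and post-break positive semidefinite blocks with rescaling to invoke Lemma \ref{theorem:tildeM1} for part \textit{(ii)}, and the identical block-diagonal representation with the union-of-spectra argument over the $2k_1$ spikes for part \textit{(i)}. The additional caveats you flag (uniform control over sub-window positions, cross terms, and the distinctness of the spans of $R_1$ and $R_2$) are sensible and are implicitly absorbed by the paper into the random threshold triplet and the appeal to Lemma \ref{theorem:tildeM1}.
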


\begin{proof}[Proof of Theorem \ref{lemma:function}]
We begin by showing (\ref{weighted}), where $\eta <1/2$. Define%
\begin{equation*}
S_{\tau }^{\ast }=\sum_{j=1}^{\tau }z_{j},
\end{equation*}%
and recall that $\left\{ z_{\tau },1\leq \tau \leq T_{m}\right\} $ is
\textit{i.i.d.} $N\left( 0,1\right) $. Since $\psi _{\tau }\geq 0$ for all $%
\tau $, it holds that%
\begin{equation}
\max_{1\leq \tau \leq T_{m}}\frac{\left\vert S_{\tau }-S_{\tau }^{\ast
}\right\vert }{\tau ^{\eta }}=\max_{1\leq \tau \leq T_{m}}\frac{%
\sum_{j=1}^{\tau }\psi _{j}}{\tau ^{\eta }}=T_{m}^{1-\eta }g\left(
p_{1}^{-\delta }\frac{p_{1}}{T^{1/2}}\left( \ln p_{1}\right) ^{1+\epsilon
}\left( \ln p_{2}\right) ^{1+\epsilon }\left( \ln T\right) ^{1+\epsilon
}\right) =o_{P^{\ast }}\left( 1\right) ,  \label{max-functional}
\end{equation}%
due to (\ref{restriction}). By the KMT approximation (see \citealp{KMT1},
and \citealp{KMT2}), there exists a sequence of standard Wiener process $%
W_{T_{m}}\left( x\right) $, $0<x<\infty $, such that%
\begin{equation}
\max_{1\leq \tau \leq T_{m}}\frac{1}{\ln \tau }\left\vert S_{\tau }^{\ast
}-W_{T_{m}}\left( \tau \right) \right\vert =O_{P}\left( 1\right) .
\label{kmt}
\end{equation}%
Then, using (\ref{kmt}), it follows that%
\begin{eqnarray}
&&T_{m}^{\eta -1/2}\max_{1\leq \tau \leq Tm}\tau ^{-\eta }\left\vert S_{\tau
}^{\ast }-W_{T_{m}}\left( \tau \right) \right\vert  \label{lajos} \\
&\leq &T_{m}^{\eta -1/2}\max_{1\leq \tau \leq Tm}\frac{1}{\ln \tau }%
\left\vert S_{\tau }^{\ast }-W_{T_{m}}\left( \tau \right) \right\vert
\max_{1\leq \tau \leq Tm}\tau ^{-\eta }\ln \tau  \notag \\
&=&o_{P}\left( 1\right) .  \notag
\end{eqnarray}%
Finally%
\begin{eqnarray}
&&T_{m}^{\eta -1/2}\max_{1\leq \tau \leq Tm}\tau ^{-\eta }\left\vert
W_{T_{m}}\left( \tau \right) \right\vert  \label{max-wiener} \\
&=&\max_{1\leq \tau \leq Tm}\left( \frac{\tau }{T_{m}}\right) ^{-\eta
}\left\vert W_{T_{m}}\left( \frac{\tau }{T_{m}}\right) \right\vert  \notag \\
&&\overset{D}{=}\max_{1/T_{m}\leq u\leq 1}\left( u\right) ^{-\eta
}\left\vert W\left( u\right) \right\vert  \notag \\
&&\overset{a.s.}{\rightarrow }\max_{0\leq u\leq 1}\left( u\right) ^{-\eta
}\left\vert W\left( u\right) \right\vert ,  \notag
\end{eqnarray}%
as $T_{m}\rightarrow \infty $, having used the scale transformation of the
Wiener process. Putting (\ref{max-functional}), (\ref{kmt}), (\ref{lajos})
and (\ref{max-wiener}) together, (\ref{weighted}) follows.

We now turn to showing (\ref{darling-erdos}). Note first that%
\begin{eqnarray*}
&&\left\vert \max_{1\leq \tau \leq T_{m}}\tau ^{-1/2}\left\vert S_{\tau
}\right\vert -\max_{1\leq \tau \leq T_{m}}\tau ^{-1/2}\left\vert S_{\tau
}^{\ast }\right\vert \right\vert \\
&\leq &\max_{1\leq \tau \leq T_{m}}\tau ^{-1/2}\sum_{j=1}^{\tau }\psi _{j} \\
&\leq &T_{m}^{-1/2}\sum_{j=1}^{T_{m}}\psi _{j}=O_{a.s.}\left(
T_{m}^{1/2}g\left( p_{1}^{-\delta }\frac{p_{1}}{T^{1/2}}\left( \ln
p_{1}\right) ^{1+\epsilon }\left( \ln p_{2}\right) ^{1+\epsilon }\left( \ln
T\right) ^{1+\epsilon }\right) \right) \\
&=&o_{a.s.}\left( T_{m}^{-1/2}\right) ,
\end{eqnarray*}%
having used (\ref{mon-eig-null}) and (\ref{restriction}). Thus it holds that
\begin{equation}
\alpha _{Tm}\left\vert \max_{1\leq \tau \leq T_{m}}\left\vert S_{\tau
}\right\vert -\max_{1\leq \tau \leq T_{m}}\left\vert S_{\tau }^{\ast
}\right\vert \right\vert -\beta _{Tm}\overset{a.s.}{\rightarrow }-\infty ;
\label{shorack}
\end{equation}%
this proves equation (2.3) in \citet{Shorack1979Extension}. Hence
\begin{equation*}
P^{\ast }\left( \alpha _{Tm}\max_{1\leq \tau \leq T_{m}}\tau
^{-1/2}\left\vert S_{\tau }\right\vert \leq x+\beta _{Tm}\right) =P^{\ast
}\left( \alpha _{Tm}\max_{1\leq \tau \leq T_{m}}\tau ^{-1/2}\left\vert
S_{\tau }^{\ast }\right\vert \leq x+\beta _{Tm}\right) +o_{P^{\ast }}\left(
1\right) ;
\end{equation*}%
recalling that $S_{\tau }^{\ast }$ is the partial sums process of a sequence
of \textit{i.i.d.} standard normals, the Darling-Erd\H{o}s theorem (%
\citealp{darling1956limit}) yields the desired result.

Finally, we show (\ref{renyi}). The proof is similar to the above, so we
omit passages when this does not cause confusion. Note that%
\begin{eqnarray*}
&&r_{T_{m}}^{\eta -1/2}\left\vert \max_{r_{T_{m}}\leq \tau \leq T_{m}}\tau
^{-\eta }\left\vert S_{\tau }\right\vert -\max_{r_{T_{m}}\leq \tau \leq
T_{m}}\tau ^{-\eta }\left\vert S_{\tau }^{\ast }\right\vert \right\vert \\
&\leq &r_{T_{m}}^{\eta -1/2}\max_{r_{T_{m}}\leq \tau \leq T_{m}}\tau ^{-\eta
}\sum_{j=1}^{\tau }\psi _{j},
\end{eqnarray*}%
which is bounded by%
\begin{equation*}
O_{a.s.}\left( r_{T_{m}}^{\eta -1/2}T_{m}^{1-\eta }g\left( p_{1}^{-\delta }%
\frac{p_{1}}{T^{1/2}}\left( \ln p_{1}\right) ^{1+\epsilon }\left( \ln
p_{2}\right) ^{1+\epsilon }\left( \ln T\right) ^{1+\epsilon }\right) \right)
,
\end{equation*}%
whenever $\frac{1}{2}<\eta <1$, and by%
\begin{equation*}
O_{a.s.}\left( r_{T_{m}}^{1/2}g\left( p_{1}^{-\delta }\frac{p_{1}}{T^{1/2}}%
\left( \ln p_{1}\right) ^{1+\epsilon }\left( \ln p_{2}\right) ^{1+\epsilon
}\left( \ln T\right) ^{1+\epsilon }\right) \right) ,
\end{equation*}%
when $\eta >1$. In both cases, it is easy to see that the bounds drift to
zero as $\min \left( p_{1},p_{2},T\right) \rightarrow \infty $. Also%
\begin{eqnarray*}
&&r_{T_{m}}^{\eta -1/2}\max_{r_{T_{m}}\leq \tau \leq T_{m}}\tau ^{-\eta
}\left\vert S_{\tau }^{\ast }-W_{T_{m}}\left( \tau \right) \right\vert \\
&\leq &r_{T_{m}}^{\eta -1/2}\max_{r_{T_{m}}\leq \tau \leq T_{m}}\tau ^{-\eta
}\frac{\left\vert S_{\tau }^{\ast }-W_{T_{m}}\left( \tau \right) \right\vert
}{\ln \tau }\ln \tau \\
&\leq &O_{P}\left( 1\right) r_{T_{m}}^{\eta -1/2}\max_{r_{T_{m}}\leq \tau
\leq T_{m}}\tau ^{-\eta }\ln \tau \\
&=&O_{P}\left( r_{T_{m}}^{-1/2}\ln r_{T_{m}}\right) =o_{P}\left( 1\right) ,
\end{eqnarray*}%
where $S_{\tau }^{\ast }$ and $W_{T_{m}}\left( \tau \right) $ are defined
above. Hence we need to study%
\begin{equation*}
r_{T_{m}}^{\eta -1/2}\max_{r_{T_{m}}\leq \tau \leq T_{m}}\tau ^{-\eta
}\left\vert W_{T_{m}}\left( \tau \right) \right\vert ;
\end{equation*}%
using the scale transformation, we obtain%
\begin{equation*}
r_{T_{m}}^{\eta -1/2}\max_{r_{T_{m}}\leq \tau \leq T_{m}}\tau ^{-\eta
}\left\vert W_{T_{m}}\left( \tau \right) \right\vert \overset{D}{=}\left(
\frac{r_{T_{m}}}{T_{m}}\right) ^{\eta -1/2}\max_{\frac{r_{T_{m}}}{T_{m}}\leq
t\leq 1}t^{-\eta }\left\vert W\left( t\right) \right\vert ;
\end{equation*}%
setting $s=t\frac{T_{m}}{r_{T_{m}}}$, and using again the scale
transformation%
\begin{equation*}
\left( \frac{r_{T_{m}}}{T_{m}}\right) ^{\eta -1/2}\max_{\frac{r_{T_{m}}}{%
T_{m}}\leq t\leq 1}t^{-\eta }\left\vert W\left( t\right) \right\vert \overset%
{D}{=}\max_{1\leq s\leq \frac{T_{m}}{r_{T_{m}}}}s^{-\eta }\left\vert W\left(
s\right) \right\vert \overset{a.s.}{\rightarrow }\sup_{1\leq s<\infty }\frac{%
\left\vert W\left( s\right) \right\vert }{s^{\eta }},
\end{equation*}%
completing the proof.
\end{proof}

\begin{proof}[Proof of Theorem \ref{function-power}]

We write
\begin{equation*}
S_{\tau }=\sum_{j=1}^{\tau }z_{j}+\sum_{j=1}^{t^{\ast }}\psi
_{j}+\sum_{j=t^{\ast }+1}^{\tau }\psi _{j},
\end{equation*}%
with the convention that $\sum_{c}^{d}=0$ whenever $d<c$, and we begin by
observing that, under (\ref{b1}), (\ref{b11}) entails that there are two
numbers $0<a<b<1$ such that there exists a triplet of random variables $%
\left( p_{1,0},p_{2,0},T_{0}\right) $ and a positive constant $c_{0}$ such
that, for $p_{1}\geq p_{1,0}$, $p_{2}\geq p_{2,0}$ and $T\geq T_{0}$%
\begin{equation}
\widehat{\lambda }_{k_{1}+1,\tau }\geq c_{0}p_{1}\text{ for }t^{\ast
}+am\leq \tau \leq t^{\ast }+m-bm.  \label{lamb11}
\end{equation}

We now turn to proving the theorem. Consider first the case $0\leq \eta <%
\frac{1}{2}$, and note that, by the Law of the Iterated Logarithm%
\begin{equation*}
T_{m}^{\eta -1/2}\max_{1\leq \tau \leq T_{m}}\tau ^{-\eta }\sum_{j=1}^{\tau
}z_{j}=O_{a.s.}\left( \sqrt{\ln \ln T_{m}}\right) ;
\end{equation*}%
also%
\begin{equation*}
T_{m}^{\eta -1/2}\max_{1\leq \tau \leq T_{m}}\tau ^{-\eta
}\sum_{j=1}^{t^{\ast }}\psi _{j}=o_{a.s.}\left( T_{m}^{\eta +1/2}g\left(
\frac{p_{1}^{1-\delta }}{m^{1/2}}\left( \ln p_{1}\right) ^{1+\epsilon
}\left( \ln p_{2}\right) ^{1+\epsilon }\left( \ln m\right) ^{1+\epsilon
}\right) \right) =o_{a.s.}\left( 1\right) ,
\end{equation*}%
on account of (\ref{restriction}). Finally, by (\ref{lamb11}), there exists
a triplet of random variables $\left( p_{1,0},p_{2,0},T_{0}\right) $ and a
positive constant $c_{0}$ such that, for $p_{1}\geq p_{1,0}$, $p_{2}\geq
p_{2,0}$ and $T\geq T_{0}$%
\begin{eqnarray}
&&\max_{1\leq \tau \leq T_{m}}\tau ^{-\eta }\sum_{j=t^{\ast }+1}^{\tau }\psi
_{j}  \label{max-eta} \\
&\geq &\max_{1\leq \tau \leq T_{m}}\tau ^{-\eta }\sum_{j=am+t^{\ast
}+1}^{\tau }\psi _{j}  \notag \\
&\geq &c_{0}g\left( p_{1}^{1-\delta }\right) \max_{t^{\ast }+1\leq \tau \leq
m+t^{\ast }}\tau ^{-\eta }\left( \tau -t^{\ast }\right)  \notag \\
&=&c_{1}g\left( p_{1}^{1-\delta }\right) \left( t^{\ast }\right) ^{1-\eta },
\notag
\end{eqnarray}%
where the last result follows from standard algebra, and it is valid for all
$\eta $. Now (\ref{power-1}) follows putting everything together. When $\eta
>1/2$, essentially the same arguments yield%
\begin{eqnarray*}
r_{T_{m}}^{\eta -1/2}\max_{r_{T_{m}}\leq \tau \leq T_{m}}\tau ^{-\eta
}\sum_{j=1}^{\tau }z_{j} &=&O_{a.s.}\left( \sqrt{\ln \ln T_{m}}\right) , \\
r_{T_{m}}^{\eta -1/2}\max_{r_{T_{m}}\leq \tau \leq T_{m}}\tau ^{-\eta
}\sum_{j=1}^{t^{\ast }}\psi _{j} &=&o_{a.s.}\left( r_{T_{m}}^{-1/2}t^{\ast
}g\left( \frac{p_{1}^{1-\delta }}{m^{1/2}}\left( \ln p_{1}\right)
^{1+\epsilon }\left( \ln p_{2}\right) ^{1+\epsilon }\left( \ln m\right)
^{1+\epsilon }\right) \right) =o_{a.s.}\left( 1\right) .
\end{eqnarray*}%
Equation (\ref{power-3}) follows by putting these results and (\ref{max-eta}%
) together. Finally, when $\eta =1/2$, it is easy to see by the same logic
as above that%
\begin{eqnarray*}
\max_{1\leq \tau \leq T_{m}}\tau ^{-1/2}\sum_{j=1}^{\tau }z_{j}
&=&O_{a.s.}\left( \sqrt{\ln \ln T_{m}}\right) , \\
\max_{1\leq \tau \leq T_{m}}\tau ^{-1/2}\sum_{j=1}^{t^{\ast }}\psi _{j}
&=&o_{a.s.}\left( t^{\ast }g\left( \frac{p_{1}^{1-\delta }}{m^{1/2}}\left(
\ln p_{1}\right) ^{1+\epsilon }\left( \ln p_{2}\right) ^{1+\epsilon }\left(
\ln m\right) ^{1+\epsilon }\right) \right) =o_{a.s.}\left( 1\right) ,
\end{eqnarray*}%
so that (\ref{power-2}) follows from the same logic as above. Under (\ref{b2}%
), the proof uses essentially the same arguments, with the only difference
that, by (\ref{b21}), there is a number $0<a<1$ such that there exists a
triplet of random variables $\left( p_{1,0},p_{2,0},T_{0}\right) $ and a
positive constant $c_{0}$ such that, for $p_{1}\geq p_{1,0}$, $p_{2}\geq
p_{2,0}$ and $T\geq T_{0}$%
\begin{equation*}
\widehat{\lambda }_{k_{1}+1,\tau }\geq c_{0}p_{1}\text{ for }t^{\ast
}+am\leq \tau \leq T_{m}.
\end{equation*}

\end{proof}

\begin{proof}[Proof of Theorem \ref{monitor-evt}]

We will use the short-hand notation%
\[
Z_{Tm}=\max_{1\leq \tau \leq T_{m}}y_{\tau }.
\]%
Note that%
\[
P^{\ast }\left( \frac{Z_{Tm}-b_{Tm}}{a_{Tm}}\leq v\right) =P^{\ast }\left(
Z_{Tm}\leq a_{Tm}v+b_{Tm}\right) ,
\]%
and consider the sequence $y_{\tau }$\ defined in (\ref{yt}), viz.
\[
y_{\tau }=z_{\tau }+\psi _{\tau }.
\]%
By construction, $z_{\tau }$ is independent across $\tau $ and independent
of the sample $\left\{ X_{t},1\leq t\leq T+T_{m}\right\} $. Hence, letting $%
\Phi \left( \cdot \right) $ be the distribution of the standard normal and $%
\varphi \left( \cdot \right) $ its density, we can write%
\begin{eqnarray}
P^{\ast }\left( Z_{Tm}\leq a_{Tm}v+b_{Tm}\right)  &=&\prod\limits_{\tau
=1}^{T_{m}}P^{\ast }\left( y_{\tau }\leq a_{Tm}v+b_{Tm}\right)
\label{max-1} \\
&=&\prod\limits_{\tau =1}^{T_{m}}P^{\ast }\left( z_{\tau }\leq
a_{Tm}v+b_{Tm}-\psi _{\tau }\right)   \nonumber \\
&=&\exp \left( \sum_{\tau =1}^{T_{m}}\ln \Phi \left( a_{Tm}v+b_{Tm}-\psi
_{\tau }\right) \right) .  \nonumber
\end{eqnarray}%
Note that%
\begin{equation}
\ln \Phi \left( a_{Tm}v+b_{Tm}-\psi _{\tau }\right) =\ln \Phi \left(
a_{Tm}v+b_{Tm}\right) +\ln \frac{\Phi \left( a_{Tm}v+b_{Tm}-\psi _{\tau
}\right) }{\Phi \left( a_{Tm}v+b_{Tm}\right) }.  \label{max-2}
\end{equation}%
Using Taylor's expansion, for some $c_{0}>0$ it holds that%
\begin{eqnarray}
\ln \Phi \left( a_{Tm}v+b_{Tm}-\psi _{\tau }\right)  &=&\ln \Phi \left(
a_{Tm}v+b_{Tm}\right) +\ln \frac{\Phi \left( a_{Tm}v+b_{Tm}-\psi _{\tau
}\right) }{\Phi \left( a_{Tm}v+b_{Tm}\right) }  \label{max-3} \\
&=&\ln \Phi \left( a_{Tm}v+b_{Tm}\right) +\ln \left[ 1-c_{0}\psi _{\tau }%
\frac{\varphi \left( a_{Tm}v+b_{Tm}\right) }{\Phi \left(
a_{Tm}v+b_{Tm}\right) }\right] .  \nonumber
\end{eqnarray}%
Therefore, putting together (\ref{max-1})-(\ref{max-3})%
\begin{eqnarray}
&&P^{\ast }\left( Z_{Tm}\leq a_{Tm}v+b_{Tm}\right)   \label{max-4} \\
&=&\exp \left( \sum_{\tau =1}^{T_{m}}\ln \Phi \left( a_{Tm}v+b_{Tm}-\psi
_{\tau }\right) \right)   \nonumber \\
&=&\exp \left( \sum_{\tau =1}^{T_{m}}\ln \Phi \left( a_{Tm}v+b_{Tm}\right)
\right) \exp \left( \sum_{\tau =1}^{T_{m}}\ln \left[ 1-c_{0}\psi _{\tau }%
\frac{\varphi \left( a_{Tm}v+b_{Tm}\right) }{\Phi \left(
a_{Tm}v+b_{Tm}\right) }\right] \right)   \nonumber \\
&=&\left[ \Phi \left( a_{Tm}v+b_{Tm}\right) \right] ^{T_{m}}\exp \left(
\sum_{\tau =1}^{T_{m}}\ln \left[ 1-c_{1}\psi _{\tau }\right] \right) .
\nonumber
\end{eqnarray}%
Consider the elementary inequality $\ln \left( 1-c_{1}\psi _{\tau }\right)
\geq -c_{2}\psi _{\tau }$, for some $c_{2}>0$; thus, from (\ref{max-4})%
\[
1\geq \exp \left( \sum_{\tau =1}^{T_{m}}\ln \left[ 1-c_{1}\psi _{\tau }%
\right] \right) \geq \exp \left( \sum_{\tau =1}^{T_{m}}-c_{2}\psi _{\tau
}\right) .
\]%
By using (\ref{restriction}) it follows that $\lim_{p_{1},p_{2},m\rightarrow
\infty }T_{m}\psi _{\tau }=0$, so that the final result obtains by dominated
convergence.

Under both alternatives (\ref{b1}) and (\ref{b2}), it holds that there is a
constant $c_{0}$ and a random variable $m_{0}$ such that, for $m\geq m_{0}$,
we have $\widehat{\lambda }_{k_{1}+1,\widetilde{\tau }}\geq c_{0}p_{1}$, for
at least one $t^{\ast }+1\leq \widetilde{\tau }\leq T_{m}$. Thus we have
\begin{eqnarray*}
P^{\ast }\left( Z_{Tm}\leq a_{Tm}v+b_{Tm}\right) &=&\prod\limits_{\tau
=1}^{T_{m}}P^{\ast }\left( y_{\tau }\leq a_{Tm}v+b_{Tm}\right) \\
&=&P^{\ast }\left( y_{\widetilde{\tau }}\leq a_{Tm}v+b_{Tm}\right)
\prod\limits_{\tau \neq \widetilde{\tau }}P^{\ast }\left( y_{\tau }\leq
a_{Tm}v+b_{Tm}\right) \\
&\leq &P^{\ast }\left( y_{\widetilde{\tau }}\leq a_{Tm}v+b_{Tm}\right) .
\end{eqnarray*}%
Note that%
\[
P^{\ast }\left( y_{\widetilde{\tau }}\leq a_{Tm}v+b_{Tm}\right) =P^{\ast
}\left( z_{\widetilde{\tau }}\leq a_{Tm}v+b_{Tm}-\psi _{\widetilde{\tau }%
}\right) =\Phi \left( a_{Tm}v+b_{Tm}-\psi _{\widetilde{\tau }}\right)
\]%
and, on account of (\ref{alt-max})
\[
P^{\ast }\left( \omega :\lim_{p_{1},p_{2},m\rightarrow \infty
}a_{Tm}v+b_{Tm}-\psi _{\widetilde{\tau }}=-\infty \right) =1,
\]%
for all $-\infty <v<\infty $, because $a_{Tm}$ and $b_{Tm}$ are both $%
O\left( \sqrt{\ln T_{m}}\right) $. Using equation (5) in \citet{borjesson}%
\[
\Phi \left( a_{Tm}v+b_{Tm}-\psi _{\widetilde{\tau }}\right) \leq \frac{\exp
\left( -\frac{1}{2}\left( a_{Tm}v+b_{Tm}-\psi _{\widetilde{\tau }}\right)
^{2}\right) }{\sqrt{2\pi }\left\vert a_{Tm}v+b_{Tm}-\psi _{\widetilde{\tau }%
}\right\vert },
\]%
and therefore it follows that, as $\min \left( p_{1},p_{2},m\right)
\rightarrow \infty $, $P^{\ast }\left( z_{\tau }\leq a_{Tm}v+b_{Tm}-\psi _{%
\widetilde{\tau }}\right) =0$ a.s. conditionally on the sample. Thus, by
dominated convergence, as $\min \left( p_{1},p_{2},m\right) \rightarrow
\infty $ it holds that%
\[
\ P^{\ast }\left( Z_{\widetilde{\tau }}\leq a_{Tm}v+b_{Tm}\right) =0,\ \
a.s.
\]
conditionally on the sample. This implies the desired result.

\end{proof}

\renewcommand{\thetable}{A.\arabic{table}}

\subsection{Further simulations and sensitivity analysis\label{sensitivity}}

\subsubsection{Empirical size and power in the presence of changes in $C$}

When conducting monitoring on the loading matrix $R$, it cannot be
guaranteed that the other loading matrix $C$ is always invariant. Hence, in
this subsection, we investigate the effects on our monitoring procedure if
the factor loading $C$ also changes. Under the null of no change in $R$, we
generate data by
\[
X_{t}=\left\{ \begin{matrix} & RF_{t}C^{\prime }+E_{t}, & 1\leq t\leq \tau ,
\\ & RF_{t}C_{new}^{\prime }+E_{t}, & \tau +1\leq t\leq T,\end{matrix}\right.
\]%
where $C_{new}$ is regenerated after time point $\tau +1$ with \textit{i.i.d.%
} entries from $\mathcal{U}(-\sqrt{3},\sqrt{3})$. That is, only the loading
matrix $C$ changes. All the other parameters are set to the same values as
those introduced in Section \ref{simulation}. The empirical sizes over $%
1,000 $ replications are reported in Table \ref{tab5}. Compared with the
results reported in Table \ref{tab1}, we conclude that the monitoring
procedures have controlled empirical sizes no matter whether the loading
matrix $C$ changes or not, i.e, the change of the loading matrix $C$ have
limited impact on the empirical sizes when we detect changes for loading
matrix $R$.

\begin{table*}[tbph]
\caption{Empirical powers under \protect\ref{b1} (loading space changes)
over 1000 replications. $T=200$, $k_{1}=k_{2}=3$, $\protect\epsilon =0.05$.}
\begin{center}
{\small \ \addtolength{\tabcolsep}{0pt} \renewcommand{\arraystretch}{1.2}
\scalebox{0.75}{ 	
			\begin{tabular*}{21cm}{ccccccccccccccc}
				\toprule[1.2pt]
				&&&\multicolumn{6}{l}{$\alpha=0.05$}	&\multicolumn{6}{l}{$\alpha=0.10$}\\\cmidrule(lr){4-9}\cmidrule(lr){10-15}
				&&&\multicolumn{5}{l}{Partial-sum}&Worst&\multicolumn{5}{l}{Partial-sum}&Worst\\\cmidrule(lr){4-8}\cmidrule(lr){10-14}
				$m$&$p_1$&$p_2$&$\eta=0$&$\eta=0.25$&$\eta=0.5$&$\eta=0.65$&$\eta=0.75$	&case&$\eta=0$&$\eta=0.25$&$\eta=0.5$&$\eta=0.65$&$\eta=0.75$	&case	\\\midrule[1.2pt]
50&50&20&1&1&1&1&1&1&1&1&1&1&1&1
\\
50&50&50&1&1&1&1&1&1&1&1&1&1&1&1
\\
50&50&80&1&1&1&1&1&1&1&1&1&1&1&1
\\
50&80&20&1&1&1&1&1&1&1&1&1&1&1&1
\\
50&80&50&1&1&1&1&1&1&1&1&1&1&1&1
\\
50&80&80&1&1&1&1&1&1&1&1&1&1&1&1
\\
50&100&20&1&1&1&1&1&1&1&1&1&1&1&1
\\
50&100&50&1&1&1&1&1&1&1&1&1&1&1&1
\\
50&100&80&1&1&1&1&1&1&1&1&1&1&1&1
\\\midrule[1.2pt]
80&50&20&1&1&1&1&1&1&1&1&1&1&1&1
\\
80&50&50&1&1&1&1&1&1&1&1&1&1&1&1
\\
80&50&80&1&1&1&1&1&1&1&1&1&1&1&1
\\
80&80&20&1&1&1&1&1&1&1&1&1&1&1&1
\\
80&80&50&1&1&1&1&1&1&1&1&1&1&1&1
\\
80&80&80&1&1&1&1&1&1&1&1&1&1&1&1
\\
80&100&20&1&1&1&1&1&1&1&1&1&1&1&1
\\
80&100&50&1&1&1&1&1&1&1&1&1&1&1&1
\\
80&100&80&1&1&1&1&1&1&1&1&1&1&1&1
\\\midrule[1.2pt]
100&50&20&1&1&1&1&1&1&1&1&1&1&1&1
\\
100&50&50&1&1&1&1&1&1&1&1&1&1&1&1
\\
100&50&80&1&1&1&1&1&1&1&1&1&1&1&1
\\
100&80&20&1&1&1&1&1&1&1&1&1&1&1&1
\\
100&80&50&1&1&1&1&1&1&1&1&1&1&1&1
\\
100&80&80&1&1&1&1&1&1&1&1&1&1&1&1
\\
100&100&20&1&1&1&1&1&1&1&1&1&1&1&1
\\
100&100&50&1&1&1&1&1&1&1&1&1&1&1&1
\\
100&100&80&1&1&1&1&1&1&1&1&1&1&1&1\\
				\bottomrule[1.2pt]		
		\end{tabular*}} }
\end{center}
\end{table*}

\begin{table*}[tbph]
\caption{Empirical sizes ($\%$) under $\tilde{H}_{0}$ (only $C$ changes)
over 1000 replications. $T=200$, $k_{1}=k_{2}=3$, $\protect\epsilon =0.05$.}
\label{tab5}
\begin{center}
{\small \ \addtolength{\tabcolsep}{0pt} \renewcommand{\arraystretch}{1.2}
\scalebox{0.75}{ 	
			\begin{tabular*}{21cm}{ccccccccccccccc}
				\toprule[1.2pt]
				&&&\multicolumn{6}{l}{$\alpha=0.05$}	&\multicolumn{6}{l}{$\alpha=0.10$}\\\cmidrule(lr){4-9}\cmidrule(lr){10-15}
				&&&\multicolumn{5}{l}{Partial-sum}&Worst&\multicolumn{5}{l}{Partial-sum}&Worst\\\cmidrule(lr){4-8}\cmidrule(lr){10-14}
				$m$&$p_1$&$p_2$&$\eta=0$&$\eta=0.25$&$\eta=0.5$&$\eta=0.65$&$\eta=0.75$	&case&$\eta=0$&$\eta=0.25$&$\eta=0.5$&$\eta=0.65$&$\eta=0.75$	&case	\\\midrule[1.2pt]
50&50&20&5.5&5.6&3.1&5.2&4.2&3.8&10.7&10.5&8.1&7.9&7.7&9.4
\\
50&50&50&3.8&4&1.6&2.2&2.1&4.6&7.9&8.1&4.7&5.9&6.5&10.5
\\
50&50&80&4.6&4.6&1.9&3.7&3.4&4.2&8.4&8.8&6.5&6.9&6.3&9.8
\\
50&80&20&3.8&4&1.4&2.2&2.4&4.6&9.6&9.7&4.8&6.5&7.1&8.4
\\
50&80&50&4.5&4.7&1.9&3.3&3.1&4.8&9&9.1&5.5&7.2&6.6&8.8
\\
50&80&80&4.9&4.9&1.2&2.9&3.6&5.1&10.3&10&5.4&6&6&10.6
\\
50&100&20&5.2&5.1&1.8&3.3&3.3&4.4&10.8&10.7&7&7&6.3&9.2
\\
50&100&50&4.4&4.5&1.1&2.8&2.7&3&10.8&9.9&5.2&7.5&8.1&7.9
\\
50&100&80&3.9&4.3&1.7&3.1&3&5&8&7.2&5.6&7.5&7.5&9.2
\\\midrule[1.2pt]
80&50&20&4.5&3.7&1.7&3.7&3&3&8.9&8.8&5.7&6.8&6.7&7.6
\\
80&50&50&4.3&4.6&1.8&2.6&2.5&4.5&9.3&8.5&5.5&5.9&5.7&9.8\\
80&50&80&5&4.7&2.1&2.9&2.8&4.1&10&8.8&6.7&6.9&6&9.1
\\
80&80&20&3.8&3.4&1.3&3&3.6&3.5&8.6&8.2&6&8&7.3&7.9
\\
80&80&50&4&4.1&1.7&2.6&2.3&3.8&9.8&8.9&5.7&7.3&6.6&9.9\\
80&80&80&5.3&4.6&1.8&2.3&2.6&4.3&9.6&9.2&5.9&6.3&6.2&9.2\\
80&100&20&4.2&4.3&1.7&2.8&2.6&4.5&9&8.2&5.3&5.8&6.3&9
\\
80&100&50&3.9&4.1&1.7&3.2&3.4&4.2&8.4&7.9&5.5&6.5&6.8&7.7\\
80&100&80&3.9&3.3&1.1&2.3&2&3.7&7.9&9.3&4.1&6.4&6.8&9.5
\\\midrule[1.2pt]
100&50&20&4.1&3.9&1.5&2.4&2.3&4.7&8.5&8.3&5.3&5.9&5.8&10.2\\
100&50&50&5.1&4.4&1.4&3.1&3.1&5&11.6&10.1&5.7&6.7&6.1&9.5
\\
100&50&80&4.9&4.2&2&1.7&1.6&4.2&8.9&8.6&5&5&5.5&8.8
\\
100&80&20&5&4.5&2&4.1&3.6&4.9&10.2&9.8&7.9&7.5&6.9&10.5\\
100&80&50&5.5&4.7&1.6&3.6&3.5&3.6&10.4&10.8&6.1&8.3&7.9&9.1\\
100&80&80&4&3.7&1.7&2.3&2.2&2.9&8.4&7.6&5.2&7.2&6.8&7.5
\\
100&100&20&4.9&4.3&2&3&2.7&3.5&9.3&8.6&5.6&6.4&5.8&7.3
\\
100&100&50&5&4.7&1.8&3.3&3.4&4.7&10.4&9.3&5.9&6.7&7.2&9.3\\
100&100&80&3.4&3.2&1.5&3.5&3.3&3.2&8&7.6&5.8&6.7&6.8&7\\
				\bottomrule[1.2pt]		
		\end{tabular*}} }
\end{center}
\end{table*}

Under the alternative hypothesis (say $H_{A3}$), data are generated as
\[
X_{t}=\left\{ \begin{matrix} & RF_{t}C^{\prime }+E_{t}, & 1\leq t\leq \tau ,
\\ & R_{new}F_{t}C_{new}^{\prime }+E_{t}, & \tau +1\leq t\leq T,\end{matrix}%
\right.
\]%
where $R_{new}$ and $C_{new}$ are both separately regenerated from the time
point $\tau +1$ with i.i.d. entries from $\mathcal{U}(-\sqrt{3},\sqrt{3})$.
We let $R$ and $C$ change at the same time only for simplicity of
data-generating. Basically they can change at different time and this has
little effects on the results. All the other parameters are set the same as
in Table \ref{tab3}. Under $H_{A3}$, our simulation results show that all
the procedures have empirical power equal to 1, and the median delays of
detection are reported in Table \ref{tab6}. Unsurprisingly, compared with
Table \ref{tab3}, the change of $C$ has limited effects on the procedures.

\begin{table*}[tbph]
\caption{Median delays for detecting change point of $R$ under $H_{A3}$ ($C$
also changes) over 1000 replications. $T=200$, $k_{1}=k_{2}=3$, $\protect%
\epsilon =0.05$.}
\label{tab6}
\begin{center}
{\small \ \addtolength{\tabcolsep}{0pt} \renewcommand{\arraystretch}{1.2}
\scalebox{0.75}{ 	
			\begin{tabular*}{21cm}{ccccccccccccccc}
				\toprule[1.2pt]
				&&&\multicolumn{6}{l}{$\alpha=0.05$}	&\multicolumn{6}{l}{$\alpha=0.10$}\\\cmidrule(lr){4-9}\cmidrule(lr){10-15}
				&&&\multicolumn{5}{l}{Partial-sum}&Worst&\multicolumn{5}{l}{Partial-sum}&Worst\\\cmidrule(lr){4-8}\cmidrule(lr){10-14}
				$m$&$p_1$&$p_2$&$\eta=0$&$\eta=0.25$&$\eta=0.5$&$\eta=0.65$&$\eta=0.75$	&case&$\eta=0$&$\eta=0.25$&$\eta=0.5$&$\eta=0.65$&$\eta=0.75$	&case	\\\midrule[1.2pt]
50&50&20&5&5&5&5&5&4&5&5&5&5&5&3
\\
50&50&50&3&3&3&3&3&2&3&3&3&3&3&2
\\
50&50&80&3&3&3&3&3&2&3&3&3&3&3&2
\\
50&80&20&5&5&5&5&5&4&5&5&5&5&5&4
\\
50&80&50&3&3&3&3&3&2&3&3&3&3&3&2
\\
50&80&80&3&2&2&2&3&2&2&2&2&2&2&2
\\
50&100&20&5&5&5&5&5&4&5&5&5&5&5&3
\\
50&100&50&3&3&3&3&3&2&3&3&3&3&3&2
\\
50&100&80&3&2&2&2&3&2&2&2&2&2&2&2
\\\midrule[1.2pt]
80&50&20&6&6&5&5&6&5&6&5&5&5&5&4
\\
80&50&50&5&4&4&4&4&4&5&4&4&4&4&4
\\
80&50&80&5&4&4&4&4&4&5&4&4&4&4&3
\\
80&80&20&6&6&6&6&6&5&6&6&5&5&6&4
\\
80&80&50&4&4&3.5&3&4&3&4&3&3&3&3&3
\\
80&80&80&3&3&3&3&3&2&3&3&3&3&3&2
\\
80&100&20&6&6&6&6&6&5&6&6&5&5&5.5&5\\
80&100&50&4&4&4&4&4&3&4&4&3&3&3&3
\\
80&100&80&3&3&3&3&3&2&3&3&3&3&3&2
\\\midrule[1.2pt]
100&50&20&7&6&5&5&5&5&6&6&5&5&5&5
\\
100&50&50&6&5&5&5&5&5&6&5&5&4&4&5
\\
100&50&80&6&5&5&4&4&4&6&5&5&4&4&4
\\
100&80&20&7&6&6&5&5&5&7&6&5&5&5&5
\\
100&80&50&4&4&3&4&4&3&4&4&3&4&4&3
\\
100&80&80&4&3&3&4&4&3&4&3&3&4&4&3
\\
100&100&20&7&6&6&5&5&5&7&6&5&5&5&5
\\
100&100&50&4&4&3&4&4&3&4&4&3&4&4&3
\\
100&100&80&3&3&3&4&4&3&3&3&3&4&4&3\\
				\bottomrule[1.2pt]		
		\end{tabular*}} }
\end{center}
\end{table*}

\subsubsection{Sensitivity to $T$ and $k_{2}$}

\begin{table*}[tbph]
\caption{Sensitivity to $T$ and $k_{2}$ over 1000 replications. $m=p_{1}=100$%
, $p_{2}=80$, $k_{1}=3$, $\protect\epsilon =0.05$.}
\label{tab8}
\begin{center}
{\small \ \addtolength{\tabcolsep}{0pt} \renewcommand{\arraystretch}{1.2}
\scalebox{0.75}{ 	
			\begin{tabular*}{21cm}{ccccccccccccccc}
				\toprule[1.2pt]
				&&\multicolumn{6}{l}{$\alpha=0.05$}	&\multicolumn{6}{l}{$\alpha=0.10$}\\\cmidrule(lr){3-8}\cmidrule(lr){9-14}
				&&\multicolumn{5}{l}{Partial-sum}&Worst&\multicolumn{5}{l}{Partial-sum}&Worst\\\cmidrule(lr){3-7}\cmidrule(lr){9-13}
				$T$&$k_2$&$\eta=0$&$\eta=0.25$&$\eta=0.5$&$\eta=0.65$&$\eta=0.75$	&case&$\eta=0$&$\eta=0.25$&$\eta=0.5$&$\eta=0.65$&$\eta=0.75$	&case	\\\midrule[1.2pt]
				\multicolumn{14}{l}{Empirical sizes ($\%$)}\\\midrule[1.2pt]
200&1&4.4&4.3&2.3&4.4&3.3&4&9.2&8.6&6.2&8&8.1&9.2
\\
200&3&5.2&4.8&2.3&3.1&3.1&5.8&9.4&10.4&6.3&6.3&6.5&11.6\\
200&5&3.8&3.4&1.7&2.6&3.4&4.5&8.1&7.9&5.6&7&6.9&9.3
\\
300&1&4&4.4&1.8&3&2.9&4.3&9.4&8&5&5.7&6&9
\\
300&3&4.3&4.2&2&2.7&2.4&4.4&9.2&9.2&6.1&6.9&6&10.6\\
300&5&5.1&4.7&1.6&3.4&3.6&4&9&8.5&6.3&6.6&7&8.2
\\
400&1&5.7&5&2&2.4&2.4&4.8&9.8&9.7&6.3&7&7.1&10.3
\\
400&3&3.9&3.7&1&2.7&2.5&5.8&8.6&7.8&4.6&5.8&5.9&10.3\\
400&5&3.3&3.5&2&3.3&2.9&5.2&7.1&7.2&5.8&7.4&7.5&12.2\\\midrule[1.2pt]
				\multicolumn{14}{l}{Median delays}\\\midrule[1.2pt]
200&1&3&2&2&4&4&2&3&2&2&4&4&2
\\
200&3&3&3&3&4&4&3&3&3&3&4&4&3
\\
200&5&4&3&3&4&4&3&4&3&3&4&4&3
\\
300&1&3&3&3&3&3&2&3&3&3&3&3&2
\\
300&3&4&3&3&3&4&3&4&3&3&3&3&3
\\
300&5&4&4&4&4&4&3&4&4&4&4&4&3
\\
400&1&3&3&3&3&3&2&3&3&3&3&3&2
\\
400&3&4&4&4&4&4&3&4&3&3&3&4&3
\\
400&5&4&4&4&4&4&3&4&4&4&4&4&3\\
				\bottomrule[1.2pt]		
		\end{tabular*}} }
\end{center}
\end{table*}

\begin{table*}[!h]
\caption{Sensitivity to $\protect\delta $ over 1000 replications. $T=200$, $%
m=p_{1}=100$, $p_{2}=80$, $k_{1}=k_{2}=3$.}
\label{tab9}
\begin{center}
{\small \ \addtolength{\tabcolsep}{0pt} \renewcommand{\arraystretch}{1.2}
\scalebox{0.75}{ 	
			\begin{tabular*}{21cm}{ccccccccccccccc}
				\toprule[1.2pt]
				&\multicolumn{6}{l}{$\alpha=0.05$}	&\multicolumn{6}{l}{$\alpha=0.10$}\\\cmidrule(lr){2-7}\cmidrule(lr){8-13}
				&\multicolumn{5}{l}{Partial-sum}&Worst&\multicolumn{5}{l}{Partial-sum}&Worst\\\cmidrule(lr){2-6}\cmidrule(lr){8-12}
				$\epsilon$&$\eta=0$&$\eta=0.25$&$\eta=0.5$&$\eta=0.65$&$\eta=0.75$	&case&$\eta=0$&$\eta=0.25$&$\eta=0.5$&$\eta=0.65$&$\eta=0.75$	&case	\\\midrule[1.2pt]
				\multicolumn{13}{l}{Empirical sizes ($\%$)}\\\midrule[1.2pt]
0.02&4.4&4&0.7&2.4&2.4&3.8&7.7&8.1&4.8&6&6&8.2
\\
0.03&5.2&4.8&2.3&3.1&3.1&5.8&9.4&10.4&6.3&6.3&6.5&11.6\\
0.04&4.9&4.9&1.6&2.6&2.9&3.1&9.6&9.4&5.5&5.7&5.8&8.1
\\
0.05&3.9&3.8&1.3&3.1&2.9&4.7&8.1&8.1&5.5&5.3&5.9&10.1
\\
0.06&3.8&3.3&1.5&3.9&4.2&4.2&8.5&7.8&5.3&8&8.3&8.9\\\midrule[1.2pt]
				\multicolumn{13}{l}{Median delays}\\\midrule[1.2pt]
0.02&3&3&2&4&4&2&3&2&2&4&4&2
\\
0.03&3&3&2&4&4&2&3&3&2&4&4&2
\\
0.04&3&3&3&4&4&2&3&3&2&4&4&2
\\
0.05&3&3&3&4&4&3&3&3&3&4&4&3
\\
0.06&4&3&3&4&4&3&4&3&3&4&4&3\\
				\bottomrule[1.2pt]		
		\end{tabular*}} }
\end{center}
\end{table*}

In all the above simulations, we have used $T=200$ and $k_{2}=3$. In this
subsection, we consider other choices for $T$ and $k_{2}$, but fix $%
m=p_{1}=100$ and $p_{2}=80$. All the other settings are the same as those in
Section \ref{simulation}. The empirical sizes and median delays are reported
in Table \ref{tab8}, while all the empirical powers under (\ref{b1}) are 1.
The results show that $T$ and $k_{2}$ may have slight effects on the sizes,
partially due to the small dimensions and slow convergence rates. On the
other hand, the median delays are not sensitive to $T$ and $k_{2}$.

\subsubsection{Sensitivity to $\protect\delta $}

In a last set of experiments, we investigate the effects of $\delta $ when
calculating $\psi _{\tau }$ in (\ref{psi}). Note that $\delta $ is mainly
determined by the constant $\epsilon $ in (\ref{equ:deltabeta}), as well as
the relative sample sizes. We consider $\epsilon \in \{0.02,0.03,\ldots
,0.06\}$, and use $m=p_{1}=100$, $p_{2}=80$, while all the other settings
are the same as those introduced in Section \ref{simulation}. The empirical
sizes and median delays are reported in Table \ref{tab9} over $1,000$
replications. It can be concluded from the table that $\epsilon $ has
limited effects on the results.

\end{document}